\definecolor{labelkey}{rgb}{0,0,1}
\def\R {\mathbb{R}}
\newtheorem{proposition}{Proposition}[section]
\newtheorem{theorem}[proposition]{Theorem}
\newtheorem{lemma}[proposition]{Lemma}
\theoremstyle{definition}
\newtheorem{definition}[proposition]{Definition}
\newtheorem{remark}[proposition]{Remark}
\numberwithin{equation}{section}
\title[Micropolar meets Newtonian in 3D]{Micropolar meets Newtonian in 3D. The Rayleigh--B\'{e}nard problem for large Prandtl numbers.}
\author[P. Kalita]{Piotr Kalita}
\address{Faculty of Mathematics and Computer Science, Jagiellonian University, ul. \L{}ojasiewicza 6, 30-348 Krak\'{o}w, Poland}
\email{piotr.kalita@ii.uj.edu.pl}
\author[G. \L{}ukaszewicz]{Grzegorz \L{}ukaszewicz$^a$\footnote{$^a$Corresponding author. E-mail address: glukasz@mimuw.edu.pl}}
\address{Faculty of Mathematics, Informatics, and Mechanics, University of Warsaw, ul. Banacha 2,  02-097 Warszawa, Poland}
\email{glukasz@mimuw.edu.pl}
\subjclass[2010]{76F35, 76E15,  37L30, 35Q30, 35Q79}
\keywords{Rayleigh--B\'{e}nard problem, Boussinesq system, convective turbulence, micropolar fluid, attractor}
\thanks{This work was supported National Science Center (NCN) of Poland
	under project No. DEC-2017/25/B/ST1/00302. Work of P.K. was also partially supported by NCN of Poland under project No. UMO-2016/22/A/ST1/00077.}
\begin{document}

%%%%%%%%%%%%%%%%%%%%%%%%%%%%%%%%%%%%%%
\begin{abstract}
We consider the Rayleigh--B\'{e}nard problem for the three--dimensional Boussinesq system for the micropolar fluid. We introduce the notion of the multivalued eventual semiflow and prove
the existence of the two-space global attractor $\mathcal{A}^K$ corresponding to weak solutions, for every micropolar parameter $K\geq 0$ denoting the deviation of the considered system from the classical Rayleigh--B\'{e}nard problem for the Newtonian fluid. We prove that for every $K$ the attractor $\mathcal{A}^K$ is the smallest compact, attracting, and invariant set. Moreover, the semiflow restricted to this attractor is single-valued and governed by strong solutions. Further, we prove that the global attractors $\mathcal{A}^K$ converge to 
$\mathcal{A}^0$ upper semicontinuously in Kuratowski sense as $K\to 0$, and that the projection of $\mathcal{A}^0$ on the restricted phase space corresponding to the classical Rayleigh--B\'{e}nard problem is the global attractor for the latter problem, having the invariance property. These results are established under the assumption that the Prandtl number is relatively large with respect to the Rayleigh number.

\begin{center}
	%\today
\end{center}
\end{abstract}

%%%%%%%%%%%%%%%%%%%%%%%%%%%%%%%%%%%%%%%%%%%%
\maketitle

\section{Introduction and main results}\label{Introduction}
In this paper we consider the Rayleigh--B\'{e}nard problem for the three-dimensional system which models, in the framework of the Boussinesq approximation, the heat convection in  the micropolar fluid. 
The micropolar model is both a simple and significant generalization of the Navier--Stokes model of classical hydrodynamics. It has much more applications than the classical model due to the fact that the latter cannot describe (by definition) fluids with microstructure.
In general, individual particles of such complex fluids (e.g., polymeric 
suspensions, blood, liquid crystals) may be of different shape, may 
shrink and expand, or change their shape, and moreover, they may rotate, 
independently of the rotation and movement of the fluid. To describe accurately the behavior of such fluids one needs a theory that 
takes into account geometry, deformation, and intrinsic motion of individual 
material particles. 
In the framework of continuum mechanics several such theories have appeared, 
e.g., theories of simple microfluids, simple deformable directed fluids, 
micropolar fluids, dipolar fluids, to name some of them.
To account for these local structural aspects, many classical concepts such  
as the symmetry of the stress tensor or absence of couple stresses 
required 
reexamination, and while many principles of classical continuum mechanics 
still remain valid for this new class of fluids, they had to be augmented 
with additional balance laws and constitutive relations. Some of these theories are very general, while others are concerned with 
special 
types of material structure and/or deformation; the potential applicability 
of the various theories is diverse. It is clear that each particular theory has its advantages as well as 
disadvantages when considered from a particular point of view.  
Some theories may seem more sound, logical, justifiable, and useful than 
others. No one of them is universal.

One of the best-established theories of fluids with microstructure is the 
theory of micropolar fluids developed by A.C. Eringen \cite{Eringen66}, and studied from mathematical point of view by \L{}ukaszewicz \cite{L1999}. Physically, micropolar models represent fluids consisting of rigid, 
randomly oriented (or spherical) particles suspended in a viscous medium, 
where the deformation of the particles is ignored. This constitutes a 
substantial generalization of the Navier--Stokes model and opens a new field 
of potential applications including a large number of complex fluids.

Let us point out some general features that make the model  
a~favorite in both theoretical studies and applications. The attractiveness and power of the model of micropolar fluids come from the 
fact that it is both a significant and a simple generalization of the 
classical Navier--Stokes model. 
Only one new vector field, the angular velocity field of rotation of 
particles, denoted by $\gamma$, is introduced. Correspondingly, only one (vector) equation is 
added---it represents the conservation of the angular momentum.
While four new dimensionless constants, denoted by $L, M, G$, and $K$ are introduced, if one of them, namely 
the microrotation viscosity $K$, becomes zero, the conservation law of the 
linear momentum becomes independent of the presence of the microstructure. 
Thus, the magnitude of the microrotation viscosity coefficient $K$  allows us to 
measure, in a certain sense, the deviation of flows of micropolar fluids 
from that of the Navier--Stokes model. 
Thanks to the simplicity of the model of micropolar fluids, in many 
classical applications (e.g., flows through the channel or between parallel 
plates) and under usual geometrical and dynamical assumptions made in such 
cases (e.g., symmetry, linearization of the equations), equations of 
micropolar fluids reduce to ones that can be explicitly (i.e., analytically) solved. 
Thus the solutions obtained, depending on several parameters coming from the 
viscosity coefficients, can be easily compared with solutions of the 
corresponding problems for the Navier--Stokes equations. In addition, as several experiments show, the former solutions better 
represent behavior of numerous real fluids (e.g., blood) than corresponding 
solutions of the classical model, especially when the characteristic 
dimentions of the flow (e.g., the diameter of the channel) become small.
It well agrees with our expectations that the influence of the internal 
structure of the fluid is the greater, the smaller the characteristic 
dimension of the flow.

The simplicity of the micropolar fluid model obviously does not mean 
mathematical triviality. The classical Navier--Stokes model itself, a 
special case of the micropolar fluid model, is far from being trivial. 
In this context simplicity means elegance and beauty of the mathematical 
theory. 

In the nondimensional variables the model system of equations reads \cites{Eringen66, L1999, L2001}
\begin{align}
& \frac{1}{\rm Pr}(u_t + (u\cdot \nabla) u) - (1+K)\Delta u + \frac{1}{\Pr}\nabla p = 
2K {\rm rot}\, \gamma + e_3  {\rm Ra} T& \nonumber\\
& {\rm div}\, u = 0& \nonumber\\
& \frac{M}{\rm Pr}(\gamma_t + (u\cdot \nabla)\gamma) - L\Delta \gamma - G\nabla{\rm div}\,\gamma
   +4K \gamma = 2K {\rm rot}\, u& \nonumber\\
& T_t + u\cdot \nabla T - \Delta T = 0& \nonumber
\end{align}
where ${\rm Pr}$ and ${\rm Ra}$ are the classical Prandtl and Rayleigh numbers, and $L$, $M$, $G$ and $K$ are micropolar parameters, and the unknowns are velocity $u$, microrotation $\gamma$, pressure $p$, and temperature $T$, whose space domain is the set $\Omega\subset \R^3$. The most important parameter is $K=\nu_r/\nu$ which relates the micropolar model to the Navier--Stokes one. The effects of micropolarity depend crucially on the numerical value of parameter $K$, linking the micropolar viscosity $\nu_r$ with the usual kinematic viscosity $\nu$. For $K=0$ there are no effects of micropolarity on the dynamics of the flow, and the above system (the first, second, and fourth equation) reduces to the usual Boussinesq system for classical hydrodynamics.

The main result of the article is the existence of the global attractor $\mathcal{A}^K$ for the studied problem and the convergence, in appropriate upper-semicontinuous sense, of the global attractors $\mathcal{A}^K$  to the attractor $\mathcal{A}^0$ for the Newtonian fluid. As the domain of the considered problem is three dimensional, we cannot rule out the loss of regularity and, generally, it is only known how to prove the existence of the weak attractor. However, if the Prandtl number $\Pr$
is large comparing to other constants present in the model one may use the ideas of Wang \cites{Wang_chapter, Wang2007-Asymptotic,Wang2008}, to get the uniform enstrophy bounds for weak solutions after time dependent on the $L^2$ size of the initial data. We demonstrate that the type of bounds obtained by Wang for Newtonian fluids also holds for micropolar ones. In consequence we prove that although weak solutions may possibly be nonunique, the trajectories can possibly  branch only until the time of regularization is reached. The eventual enstrophy bounds, together with the energy equation method, allow us to get the existence of the two-space global attractor, namely the invariant and compact in $H^1$ set which attracts in $H^1$ the sets of states obtainable from the $L^2$ bounded sets of initial data. In contrast to Wang \cites{Wang_chapter, Wang2007-Asymptotic,Wang2008}, who defines his class of weak solutions by including in the definition the appropriate maximum principle inequality for temperature, and proves their existence by the vanishing viscosity method with fourth order term, we use different procedure. Namely, we consider the class of weak solutions obtainable from the Galerkin procedure with the discretization only of $u$ and $\gamma$ with strongly convergent initial data.
This way we do not have to include both energy inequality and maximum principle in the solution definition. As we cannot prove the solution uniqueness, we study the problem in the framework of Melnik and Valero \cites{Melnik_Valero, Melnik-2008} of the so called \textit{multivalued semiflows}. To deal with the known difficulty of the lack of translation property in the class of Leray--Hopf type weak solutions, we define the new notion of \textit{multivalued eventual semiflow}, by requiring the translation property only after time dependent on the size of initial data. The dynamical system governed by weak solutions for the considered problem satisfies this property, as after time to reach the enstrophy estimate, the solution can no longer branch out, which, in particular, yields the translation property. 

The key assumption for the present result are the relations 
$$
L \geq \frac{16 }{3\pi^2} K
\quad \textrm{and}\quad 
{\rm Pr} \geq 2c_1{\rm Ra}{\left( \max\left\{ 2,\frac{M}{L}\right\}\right) ^{3/2}\sqrt{A}},
$$
where $c_1$ is a universal constant, and $A$ denotes the size of the domain $\Omega$. The first inequality states, in a sense, that the damping in the angular momentum equation, denoted by $L$, must overcome the influence of the coupling between the momentum and angular momentum equations, denoted by $K$. This restriction disappears if $K=0$, that is, in the Newtonian case. On the other hand, the second inequality denotes that the Prandlt number $\Pr$ must be sufficiently large. If $\frac{M}{L}\leq 2$, that is the effect of micropolar inertia is small compared to the micropolar damping, then the  bound reduces to the Newtonian one known from \cites{Wang_chapter, Wang2007-Asymptotic,Wang2008}. If $\frac{M}{L}> 2$, then the inertial effects in the angular momentum equation come into play, and the lower bound on the admissible Prandtl number becomes the increasing function of the ratio $\frac{M}{L}$. Generally the bigger the inertial effects are and the smaller the damping is in the angular momentum equation, the less likely the uniform enstrophy bouds are to hold.  

We stress, that while the present article concerns the three dimensional model, in the two dimensional version, the weak solutions are unique and become instantaneously smooth. Existence and regularity of the global attractor have been studied in \cite{Tarasinska_2006_paper}, while the comparison between micropolar and Newtonian models was undertaken in \cite{KaLaLu_2017_MMN}. There, the estimates from above on Nusselt number in the framework of Constantin and Doering \cites{CD1, CD2} as well as the global attractor fractal dimension have been compared between micropolar and Newtonian fluids, revealing that micropolar fluids tend to be more stable than Newtonian ones. 

The plan of the paper is as follows. In Section~\ref{sect:multivalued-theory} we introduce the notion of the multivalued eventual semiflow and prove some properties of such flows. In Section~\ref{sec:intro} we introduce the scaling, define weak and strong solutions of our problem, as well as prove some of their properties. In Section~\ref{Basic estimates} we provide basic energy and enstrophy estimates, and estimates of the temperature. 
Section~\ref{sect:global} is devoted to proofs of the existence and invariance of the two-space global attractor $\mathcal{A}^K$ corresponding to weak solutions, for every micropolar parameter $K\geq 0$ measuring the deviation of the considered system from the classical Rayleigh--B\'{e}nard problem for the Newtonian fluid.  Moreover, we prove that the semiflow restricted to attractor $\mathcal{A}^K$ is single-valued and governed by strong solutions. Finally, in Section~\ref{sect:upper}, we prove that the global attractors $\mathcal{A}^K$ converge to $\mathcal{A}^0$ upper semicontinuously in Kuratowski (and Hausdorff) sense as $K\to 0$, and that the projection of $\mathcal{A}^0$ on the restricted phase space corresponding to the classical Rayleigh--B\'{e}nard problem is the global attractor for the latter problem, having the invariance property. We stress that the core results are established under the assumption that the Prandtl number is relatively large with respect to the Rayleigh number.

%%%%%%%%%%%%%%%%%%%%%%%%%%%%%%%%%%%%%%%%%%%%%%%%%%%%%%%%%%%%%%%%%%%%%%%%%
\section{Multivalued eventual semiflows.} \label{sect:multivalued-theory}
%%%%%%%%%%%%%%%%%%%%%%%%%%%%%%%%%%%%%%%%%%%%%%%%%%%%%%%%%%%%%%%%%%%%%%%%%
In this section we introduce the notion of a {\it multivalued eventual semiflow} which we need in what follows and which would prove useful in many similar situations when we do not know whether a given multivalued semiflow satisfies the translation property, e.g., as in the case of the three--dimensional Navier--Stokes system, but which satisfies the translation property for large times, uniformly for bounded sets of initial data. We prove also relevant properties of such semiflows, in particular that on the existence of the two-space global attractor and of its invariance.

For a complete metric space $(X,\varrho_X)$, we define by $\mathcal{P}(X)$ the family of its nonempty subsets, and by $\mathcal{B}(X)$, the family of nonempty and bounded subsets. We also remind the definition of a Hausdorff semidistance 
$$
\mathrm{dist}_X(A,B) = \sup_{a\in A}\inf_{b\in B}\varrho_X(a,b)\quad \textrm{for}\quad A, B\in \mathcal{P}(X).
$$

The following definition is the slight relaxation of the multivalued semiflow definition due to Melnik and Valero \cites{Melnik_Valero, Melnik-2008}.

\begin{definition}\label{def:eventual}
	Let $(X,\varrho_X)$ be a complete metric space. The family of mappings $\{ S(t) \}_{t\geq 0}$ such that $S(t):X\to \mathcal{P}(X)$ is a multivalued \textit{eventual} semiflow if 
	\begin{itemize}
		\item[(i)] $S(0)v = \{ v \}$ for every $v\in X$.
		\item[(ii)] For every $B\in \mathcal{B}(X)$ bounded there exists time $t_1(B)$ such that for every $t\geq t_1(B)$, $s\geq 0$ there holds $S(s+t)B \subset S(s)S(t)B$.
	\end{itemize}
\end{definition}
The difference between the above definition and that of Melnik and Valero \cites{Melnik_Valero, Melnik-2008} is, that property (ii) is assumed to hold for every $t\geq 0$  in \cites{Melnik_Valero, Melnik-2008} and not for just $t\geq t_1(B)$. As it turns out, it suffices to relax the definition to $t\geq t_1(B)$, and the  result of \cites{Melnik_Valero, Melnik-2008} on the global attractor existence remains valid. 

We pass to the definition of a global attractor for multivalued eventual semiflow. In this definition, firstly, following \cite{Chepyzhov_Conti_Pata}, we do not impose invariance of the global attractor. We only assume that it is a minimal compact attracting set. Secondly, we follow \cite{Babin_Vishik} and assume that our attractor is compact in a ''smaller space'' $Y$ while it attracts bounded sets from a ''bigger'' space $X$. Specifically we make the standing assumptions, that $(X,\varrho_X)$ and $(Y,\varrho_Y)$ are complete metric spaces such that $Y\subset X$ and the identity $i:Y\to X$ is continuous.

\begin{definition}\label{def:xyattr}
Let $\{S(t)\}_{t\geq 0}$ be multivalued eventual semiflow in $X$. The set $\mathcal{A} \subset Y$ is called a $(X,Y)$-global attractor for $\{ S(t) \}_{t\geq 0}$ if 
	\begin{itemize}
		\item[(i)] $\mathcal{A} \in \mathcal{B}(Y)$ is compact in $Y$,
		
		\item[(ii)] there holds $\lim_{t\to \infty}\, \mathrm{dist}_Y(S(t)B,\mathcal{A}) = 0$ for every $B\in \mathcal{B}(X)$,
		
		\item[(iii)] if, for a closed in $Y$ set $\overline{\mathcal{A}}$ there holds $\lim_{t\to \infty}\, \mathrm{dist}_Y(S(t)B,\overline{\mathcal{A}}) = 0$ for every $B\in \mathcal{B}(X)$, then $\mathcal{A} \subset \overline{\mathcal{A}}$.	
	\end{itemize}
\end{definition}

Let $B\in \mathcal{B}(X)$. Assuming that there exists $t_0(B)$ such that for every $t\geq t_0$ there holds $S(t)B\subset Y$ we can define the $\omega$-limit set in $Y$ in the following way.

$$
\omega_Y(B) = \bigcap_{s\geq t_0(B)}\overline{\bigcup_{t\geq s}S(t)B}^Y.
$$

We prove the following theorem.

\begin{theorem}\label{thm:attr}
Let $\{S(t)\}_{t\geq 0}$ be multivalued eventual semiflow in $X$. Assume that the following two conditions hold. 
	\begin{itemize}
		\item[(i)] The family $\{S(t)\}_{t\geq 0}$ is $(X,Y)$- dissipative, i.e., there exists a set $B_0 \in \mathcal{B}(Y)$ such that for every $B\in \mathcal{B}(X)$ there exists $t_0(B)$  such that
		$$
		\bigcup_{t\geq t_0} S(t) B \subset B_0.
		$$
		
		\item[(ii)] The family $\{S(t)\}_{t\geq 0}$ is $Y$- asymptotically compact on $B_0$, i.e., if $t_n\to \infty$ then every sequence $v_n \in S(t_n)B_0$ is relatively compact in $Y$.
	\end{itemize}
	Then the family $\{ S(t) \}_{t\geq 0}$ has a $(X,Y)$-global attractor $\mathcal{A}\subset \overline{B_0}^Y$ which is given by $\mathcal{A} = \omega_Y(B_0)$.  
\end{theorem}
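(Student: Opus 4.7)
The plan is to mimic the classical construction of the global attractor through $\omega$-limit sets, adapted to the two-space setting and to the weaker eventual translation property. I would define $\mathcal{A} := \omega_Y(B_0)$, noting first that this is well posed: since $B_0 \in \mathcal{B}(Y)$ and (through boundedness of the continuous inclusion $i\colon Y\to X$) also $B_0 \in \mathcal{B}(X)$, dissipativity applied to $B_0$ itself provides $t_0(B_0)$ with $S(t)B_0\subset B_0$ for all $t\geq t_0(B_0)$. I would then record the sequential characterization $v\in\mathcal{A}$ iff there exist $t_n\to\infty$ and $v_n\in S(t_n)B_0$ with $v_n\to v$ in $Y$. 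The inclusion $\mathcal{A}\subset\overline{B_0}^Y$ is immediate from $S(t)B_0\subset B_0$ for $t\geq t_0(B_0)$; closedness in $Y$ follows from the intersection-of-closures form; nonemptiness and relative compactness of $\mathcal{A}$ in $Y$ come from $Y$-asymptotic compactness of $B_0$ via a standard diagonal extraction on the sequential characterization.

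For the attracting property, the heart of the proof, I would argue by contradiction. Suppose $B\in\mathcal{B}(X)$ admits $\varepsilon>0$, $t_n\to\infty$ and $v_n\in S(t_n)B$ with $\mathrm{dist}_Y(v_n,\mathcal{A})\geq\varepsilon$. I would fix once and for all $\tau := \max\{t_0(B),t_1(B)\}$, which depends only on $B$, and for every $n$ large enough that $t_n\geq\tau$ split $t_n=(t_n-\tau)+\tau$. Property (ii) of Definition~\ref{def:eventual} then yields
$$
v_n\in S(t_n)B \subset S(t_n-\tau)\,S(\tau)B \subset S(t_n-\tau)\,B_0,
$$
where the last inclusion uses $\tau\geq t_0(B)$. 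Since $t_n-\tau\to\infty$, $Y$-asymptotic compactness on $B_0$ extracts a subsequence with $v_n\to v$ in $Y$, and the sequential characterization gives $v\in\omega_Y(B_0)=\mathcal{A}$, contradicting $\mathrm{dist}_Y(v_n,\mathcal{A})\geq\varepsilon$.

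For minimality, let $\overline{\mathcal{A}}\subset Y$ be closed in $Y$ and attract every $B\in\mathcal{B}(X)$. Given $v\in\mathcal{A}$, take $t_n\to\infty$ and $v_n\in S(t_n)B_0$ with $v_n\to v$ in $Y$; applying attraction of $\overline{\mathcal{A}}$ to $B_0\in\mathcal{B}(X)$ produces $w_n\in\overline{\mathcal{A}}$ with $\varrho_Y(v_n,w_n)\to0$, so $w_n\to v$ in $Y$ and $v\in\overline{\mathcal{A}}$ by closedness.

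The main obstacle I expect is precisely the attraction step: in the classical Melnik--Valero setting one freely uses $S(s+t)B\subset S(s)S(t)B$ for every $t\geq 0$, but here this inclusion is available only for $t\geq t_1(B)$. The observation that rescues the argument is that $t_1(B)$ depends only on $B$ and not on the target time, so one may fix a single $\tau$ depending on $B$ and split every sufficiently large $t_n$ as $t_n=(t_n-\tau)+\tau$, thereby converting the tail $S(t_n)B$ into a family of trajectories starting from $B_0$, to which $Y$-asymptotic compactness applies exactly as in the standard theory. No other step is sensitive to the weakening of the semiflow property, which is why relaxing (ii) to hold only for $t\geq t_1(B)$ does not affect the existence of the $(X,Y)$-global attractor.
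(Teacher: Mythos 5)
Your proposal is correct and follows essentially the same argument as the paper: defining $\mathcal{A}=\omega_Y(B_0)$, proving nonemptiness and $Y$-compactness from asymptotic compactness, establishing attraction by contradiction via the decomposition $S(t_n)B\subset S(t_n-\tau)S(\tau)B\subset S(t_n-\tau)B_0$ with $\tau=\max\{t_0(B),t_1(B)\}$ (the paper's $t_2(B)$), and proving minimality from the sequential characterization of $\omega_Y(B_0)$ together with closedness of $\overline{\mathcal{A}}$. The observation you highlight — that $t_1(B)$ depends only on $B$, so a single fixed $\tau$ suffices for the whole tail — is exactly the point that makes the relaxed translation property harmless, as in the paper.
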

\begin{proof}
	The proof is a simple modification of the standard proof of the global attractor existence \cites{Robinson, Temam}. We provide it for the completeness of the exposition. It is clear that $(X,Y)$-dissipativity implies that $\omega_Y(B_0)\subset \overline{B_0}^Y$ which also implies that $\omega_Y(B_0)$ is bounded in $Y$. We split the rest of the proof in three steps. For a set $B\in \mathcal{B}(X)$ we define $t_2(B) = \max\{ t_0(B), t_1(B) \}$. 
	
	\noindent \textbf{Step 1.} $\omega_Y(B_0)$ is \textit{nonempty} and \textit{compact} in $Y$. As $\omega_Y(B_0)$ is clearly closed in $Y$ we must prove that it is relatively compact in $Y$. Let $v_n \in \omega_Y(B_0)$ be a sequence. Then for every $n$
	$$
	v_n \in \overline{\bigcup_{t\geq t_0(B)+n}S(t)B_0}^Y.
	$$ 
	Hence, there exists the sequence of times $t_n \to \infty$ and the sequence $w_n\in Y$ such that
	$$
	\varrho_Y(v_n, w_n) \leq \frac{1}{n}\quad \textrm{and}\quad w_n \in S(t_n)B_0.
	$$
	The asymptotic compactness implies then that there exists $w\in Y$, such that for a subsequence, still denoted by $n$, there holds $w_n\to w$ in $Y$, whence also $v_n \to w$ in $Y$, which ends the proof of the relative compactness. To prove that it is nonempty, let $v_n \in S(t_n)B_0$ for some sequence $t_n\to \infty$. Asymptotic compactness implies that, for a subsequence, not renumbered, there holds $v_n \to v$ for some $v\in Y$. Then, as for every $s\geq t_0(B_0)$ there exists $n(s)$ such that 
	$$
	\{ v_k \}_{k=n(s)}^\infty \subset \bigcup_{t\geq s} S(t)B_0,
	$$
	it follows that 
	$$
	v \in \overline{\bigcup_{t\geq s} S(t)B_0}^Y\quad \textrm{for every}\quad s
	\geq t_0(B_0),
	$$
	and $v
	\in \omega_Y(B_0)$.
	
	\noindent \textbf{Step 2.} $\omega_Y(B_0)$ \textit{attracts} all sets from $\mathcal{B}(X)$. We should prove that for every $B\in \mathcal{B}(X)$ and $t_n\to \infty$
	$$
	\lim_{n\to\infty}\mathrm{dist}_Y(S(t_n)B,\omega_V(B_0)) = 0.
	$$
	Assume, for the contradiction, that there exists $B\in \mathcal{B}(X)$, $\epsilon >0$ and a subsequence, not renumbered, such that
	$$
	\mathrm{dist}_Y(S(t_n)B,\omega_Y(B_0)) > \epsilon.
	$$
	This means that there exists a sequence $v_n \in S(t_n)B$ such that
	$$
	\mathrm{dist}_Y(v_n,\omega_Y(B_0)) > \epsilon.
	$$ 
	But $v_n \in S(t_n)B$ implies that
	$$
	v_n \in S(t_n-t_2(B)+t_2(B))B \subset S(t_n-t_2(B))S(t_2(B))B \subset S(t_n-t_2(B))B_0. 
	$$
	Asymptotic compactness implies that there exists $v\in Y$ such that, for another subsequence, $v_n \to v$, and 
	$$
	\mathrm{dist}_Y(v,\omega_Y(B_0)) > \epsilon.
	$$
	Hence it must be that $v\in \omega_Y(B_0)$, a contradiction.

	\noindent \textbf{Step 3.} $\omega_Y(B_0)$ is the \textit{smallest} closed set which attracts all sets from $\mathcal{B}(X)$. Let $\overline{\mathcal{A}}$ be such that 
	$$
	\lim_{t\to\infty}\mathrm{dist}_Y(S(t)B,\overline{\mathcal{A}}) = 0\quad \textrm{for every}\quad B\in \mathcal{B}(X).
	$$
	Take $v\in \omega_Y(B_0)$. There exist sequences $t_n\to \infty$ and $v_n \in S(t_n)B_0$ such that $v_n\to v$ in $Y$. This means that
	$$
	\lim_{n\to\infty}\mathrm{dist}_Y(v_n,\overline{\mathcal{A}}) = 0,
	$$
	whereas $v\in \overline{\mathcal{A}}$, as $\overline{\mathcal{A}}$ is a closed set. The proof is complete. 
\end{proof}

We pass to the global attractor invariance. Namely, the following theorem holds.
\begin{theorem}\label{thm:invariance_abstract}
	Assume, in addition to the assumptions of Theorem \ref{thm:attr}, that for every $t\geq 0$ the restriction $S(t)|_Y$ is a single valued semigroup of $(Y,Y)$-continuous maps. Then, the  $(X,Y)$-global attractor $\mathcal{A}$ is an invariant set, that is $S(t)\mathcal{A} = \mathcal{A}$ for every $t\geq 0$.     
\end{theorem}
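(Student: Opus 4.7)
The plan is to establish the two inclusions $S(t)\mathcal{A} \subset \mathcal{A}$ and $\mathcal{A} \subset S(t)\mathcal{A}$ separately by combining the $\omega$-limit characterisation $\mathcal{A} = \omega_Y(B_0)$ from Theorem \ref{thm:attr} with the new hypothesis that $\{S(t)|_Y\}_{t\geq 0}$ is a genuine single-valued, $(Y,Y)$-continuous semigroup. In both inclusions I will start from a fixed $v\in\mathcal{A}$ and, using the standard characterisation of the $\omega$-limit set (already extracted in Step~1 of the proof of Theorem \ref{thm:attr}), approximate it by a sequence $v_n\to v$ in $Y$ with $v_n\in S(t_n)B_0$ and $t_n\to\infty$. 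Since $B_0\subset Y$ and $S(\cdot)|_Y$ is single-valued, I write $v_n = S(t_n)u_n$ for some $u_n\in B_0$.

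For the forward inclusion $S(t)\mathcal{A}\subset\mathcal{A}$, the semigroup identity on $Y$ gives $S(t)v_n = S(t+t_n)u_n \in S(t+t_n)B_0$, while the $(Y,Y)$-continuity of $S(t)$ yields $S(t)v_n \to S(t)v$ in $Y$. Because $t+t_n\to\infty$, the point $S(t)v$ lies in $\overline{\bigcup_{s\geq s_0}S(s)B_0}^Y$ for every $s_0\geq t_0(B_0)$, i.e.\ $S(t)v\in\omega_Y(B_0) = \mathcal{A}$.

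For the backward inclusion $\mathcal{A}\subset S(t)\mathcal{A}$ I rewind by $t$ and invoke asymptotic compactness. For $n$ large enough that $t_n - t\geq t_0(B_0)$, define $w_n = S(t_n - t)u_n$; dissipativity gives $w_n\in B_0$, and the semigroup identity on $Y$ yields $S(t)w_n = v_n$. Since $t_n - t\to\infty$ and $w_n\in S(t_n - t)B_0$, the asymptotic compactness on $B_0$ produces a subsequence with $w_n\to w$ in $Y$; the same $\omega$-limit argument used in Step~1 of the proof of Theorem \ref{thm:attr} identifies $w\in\omega_Y(B_0) = \mathcal{A}$. Continuity of $S(t)$ then forces $v = \lim S(t)w_n = S(t)w$, so $v\in S(t)\mathcal{A}$.

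The only subtlety I anticipate is the gap between the \emph{eventual} property (ii) of Definition \ref{def:eventual}, which merely supplies $S(s+t)B\subset S(s)S(t)B$ once $t\geq t_1(B)$, and the full semigroup identity $S(s+t) = S(s)\circ S(t)$ used above in both inclusions. The added hypothesis of Theorem \ref{thm:invariance_abstract} circumvents this entirely: every composition of the form $S(\cdot)S(\cdot)$ that I form is applied to points of $B_0\subset Y$, where $\{S(t)|_Y\}_{t\geq 0}$ is an honest single-valued semigroup, so no bookkeeping with the threshold $t_1(B_0)$ is required at this last step.
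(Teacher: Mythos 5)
Your proof is correct, and it is essentially the paper's argument: the paper disposes of this theorem in one line by observing that $\{S(t)|_Y\}_{t\geq 0}$ is a single-valued, asymptotically compact semigroup with bounded absorbing set $B_0$ in $Y$ and citing the standard invariance theorem for such semigroups, whose proof is exactly your two-inclusion argument via the characterisation $\mathcal{A}=\omega_Y(B_0)$. Your version simply writes out that standard argument in full (correctly handling the point that all compositions are applied to points of $B_0\subset Y$, where the genuine semigroup identity holds), so there is nothing to object to.
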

\begin{proof}
	Since assumptions of Theorem \ref{thm:attr} imply that the single valued semigroup $\{ S(t)|_Y \}_{t\geq 0}$ has a bounded absorbing set in $Y$ and is asymptotically 
compact on $Y$, the result follows from a very well known abstract theorem on the global attractor existence, cf., e.g., \cites{Robinson, Temam}. 
\end{proof}
We conclude this section with results on the relation between global attractors and complete (eternal) bounded trajectories.
\begin{definition}
	The function $u:\mathbb{R} \to X$ is a $X$-bounded (respectively, $Y$-bounded) eternal trajectory if for every $t\in \mathbb{R}$ and every $s>t$ there holds $u(s) = S(s-t)u(t)$ and the set $\{ u(t) \}_{t\in \mathbb{R}}$ is bounded in $X$ (respectively, $Y$).
\end{definition}

Since the family $\{ S(t)|_Y \}_{t\geq 0}$ is a semigroup on $Y$ with the global attractor $\mathcal{A}$, the following theorem follows from the well known result for the semigroups, see, e.g., \cite[Lemma 1.4 and Theorem 1.7]{CLR2013}. 
\begin{theorem}\label{thm:complete1}
	Under assumptions of Theorem \ref{thm:invariance_abstract}, for every $v\in \mathcal{A}$ and for every $t\in \mathbb{R}$ there exists the eternal $Y$-bounded (and hence also $X$-bounded) trajectory $u:\mathbb{R} \to Y$ such that $u(t)=v$ and $u(s)\in \mathcal{A}$ for every $s\in \mathbb{R}$. 
\end{theorem}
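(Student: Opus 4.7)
The plan is to leverage the fact, established in Theorem \ref{thm:invariance_abstract}, that the restriction $\{S(t)|_Y\}_{t\geq 0}$ is a single-valued $C^0$-semigroup on $Y$ whose global attractor $\mathcal{A}$ satisfies the strict invariance $S(t)\mathcal{A} = \mathcal{A}$ for all $t\geq 0$. Once one is in this classical semigroup setting, the statement reduces to the standard construction of backward extensions inside an invariant compact attractor. By a time-shift, it suffices to handle the case $t=0$; the general case follows by replacing $u(\cdot)$ with $u(\cdot - t)$.

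For the forward part, I would simply set $u(s) := S(s)v$ for $s\geq 0$. Since $v\in \mathcal{A}$ and $\mathcal{A}$ is invariant, $u(s)\in S(s)\mathcal{A} = \mathcal{A}$ for every $s\geq 0$, and the semigroup property of $S(\cdot)|_Y$ gives $u(s) = S(s-r)u(r)$ for $s>r\geq 0$. For the backward part, I would use invariance iteratively: since $S(1)\mathcal{A} = \mathcal{A}$, pick $v_{-1}\in \mathcal{A}$ with $S(1)v_{-1}=v$; inductively, pick $v_{-n-1}\in \mathcal{A}$ with $S(1)v_{-n-1}=v_{-n}$. Then define $u(-n+\tau) := S(\tau)v_{-n}$ for $\tau\in[0,1)$ and $n\geq 1$. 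The semigroup property guarantees that the two possible definitions at integer times agree, and the translation identity $u(s) = S(s-r)u(r)$ for $s>r$ follows on each compatible pair of intervals.

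Finally, $u(\mathbb{R})\subset \mathcal{A}$, and since $\mathcal{A}$ is compact in $Y$ it is bounded in $Y$; continuity of the embedding $i:Y\hookrightarrow X$ then yields $X$-boundedness automatically. The main obstacle would normally be the existence of the backward trajectory, as weak solutions are in general not reversible, but this difficulty has already been dissolved by the invariance $S(t)\mathcal{A}=\mathcal{A}$ proved in Theorem \ref{thm:invariance_abstract}; with invariance in hand, the remaining argument is purely a selection and gluing procedure and no further analytic input is needed. This is precisely the content of \cite[Lemma 1.4 and Theorem 1.7]{CLR2013}, which one may invoke directly since all their hypotheses have been verified.
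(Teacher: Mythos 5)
Your proposal is correct and is essentially the paper's own argument: the paper simply observes that $\{S(t)|_Y\}_{t\geq 0}$ is a single-valued semigroup on $Y$ with invariant compact attractor $\mathcal{A}$ and invokes \cite[Lemma~1.4 and Theorem~1.7]{CLR2013}, whose proof is exactly the backward-selection-and-gluing construction you spell out. No discrepancy to report.
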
	
On the other hand if the function $u:\mathbb{R} \to X$ is a complete $X$-bounded trajectory, then the next theorem shows that  $u$ has values in $Y$ and $u(t)\in \mathcal{A}$ for every $t\in \mathbb{R}$. The proof mostly follows the lines of \cite[Theorem 1.7]{CLR2013}, with the modification that we consider only $X$-bounded and not $Y$-bounded trajectories. 

\begin{theorem}\label{thm:complete_2}
	Under assumptions of Theorem \ref{thm:invariance_abstract}, for every complete $X$-bounded trajectory $u:\mathbb{R} \to X$ there holds $u(t) \in \mathcal{A}\,$ for every $t\in \mathbb{R}$. 
\end{theorem}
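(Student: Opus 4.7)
The plan is to show directly, via the definition of $\omega_Y(B_0)$, that any complete $X$-bounded trajectory must pass through $\mathcal{A}=\omega_Y(B_0)$ at every instant. The key idea is to exploit the eternal trajectory structure together with the eventual semiflow property to ``shift'' $u(t)$ into $S(\tau)B_0$ for arbitrarily large $\tau$.

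First I would fix $t\in\mathbb{R}$ and set $B=\{u(s):s\in\mathbb{R}\}\in\mathcal{B}(X)$, which is bounded by assumption. Put $t_2(B)=\max\{t_0(B),t_1(B)\}$, with $t_0(B)$ coming from $(X,Y)$-dissipativity and $t_1(B)$ from property (ii) of Definition \ref{def:eventual}. For every $n\geq t_2(B)$, the completeness of the trajectory, followed by Definition \ref{def:eventual}(ii) applied with inner time $t_2(B)\geq t_1(B)$ and outer time $n-t_2(B)\geq 0$, and then dissipativity, give
\[
u(t)\in S(n)u(t-n)\subset S(n)B\subset S(n-t_2(B))\,S(t_2(B))\,B\subset S(n-t_2(B))\,B_0.
\]
Thus $u(t)\in S(n-t_2(B))B_0$ for all $n\geq t_2(B)$.

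Next, because $B_0\subset Y$ and the hypothesis of Theorem \ref{thm:invariance_abstract} ensures $S(\cdot)|_Y$ takes $Y$ into $Y$, the set $S(n-t_2(B))B_0$ is contained in $Y$, so already $u(t)\in Y$. Writing $\tau_n=n-t_2(B)\to\infty$, for each $s\geq t_0(B_0)$ I can pick $n$ with $\tau_n\geq s$ to obtain
\[
u(t)\in\bigcup_{\sigma\geq s}S(\sigma)B_0\subset\overline{\bigcup_{\sigma\geq s}S(\sigma)B_0}^Y.
\]
Intersecting over all such $s$ and recalling that $\mathcal{A}=\omega_Y(B_0)$ by Theorem \ref{thm:attr} will give $u(t)\in\mathcal{A}$; since $t$ is arbitrary, the theorem will follow.

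The main obstacle is precisely the subtlety the authors highlight: because $u$ is only $X$-bounded, the backward samples $u(t-n)$ can fail to lie in $Y$, and the $Y$-asymptotic compactness on $B_0$ cannot be invoked on them directly. The eventual semiflow property is the device that sidesteps this difficulty---it lets one factor $S(n)$ through $S(t_2(B))B\subset B_0$, pushing the trajectory into $Y$ after one composition, after which the $\omega$-limit characterization of $\mathcal{A}$ takes over and pins $u(t)$ to the attractor.
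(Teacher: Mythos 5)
Your argument is correct, and it reaches the conclusion by a slightly different mechanism than the paper. The paper's proof does not use the eventual translation property at all in this argument: having shown $u(t) \in \bigcup_{r\geq t_0} S(r)B \subset B_0$ (so that the orbit is $Y$-bounded), it applies the abstract attraction property of $\mathcal{A}$ to the orbit set $B$ itself, obtaining $\mathrm{dist}_Y(u(t),\mathcal{A}) \leq \mathrm{dist}_Y(S(t-s)B,\mathcal{A}) \to 0$ as $s\to-\infty$, and concludes from the $Y$-closedness of $\mathcal{A}$. You instead factor $S(n)B \subset S(n-t_2(B))\,S(t_2(B))B \subset S(n-t_2(B))B_0$ via Definition \ref{def:eventual}(ii) together with dissipativity, and then read off $u(t)\in \omega_Y(B_0)=\mathcal{A}$ directly from the $\omega$-limit formula of Theorem \ref{thm:attr}. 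Both routes are sound; the paper's is marginally shorter and relies only on the attractor's defining attraction property, whereas yours makes explicit where the eventual-semiflow structure enters and bypasses any appeal to attraction or closedness. One small simplification: the detour through the hypothesis of Theorem \ref{thm:invariance_abstract} to ensure $S(\tau_n)B_0 \subset Y$ is unnecessary, since for $\tau_n \geq t_0(B_0)$ dissipativity already gives $S(\tau_n)B_0 \subset B_0 \subset Y$, which is all that the $\omega$-limit characterization requires.
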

\begin{proof}
	Denote $B = \bigcup_{t\in \mathbb{R}}\{ u(t) \}$. This is a bounded set in $X$. We show that this set is actually bounded in $Y$.
	Indeed, take $s$ such that $t-s\geq t_0(B)$. It follows that
	$$
	u(t) = S(t-s)u(s) \in \bigcup_{r\geq t_0} S(r)B \subset B_0.
	$$
	This means that $u$ is $Y$-bounded. 
	Now, for $s\in \mathbb{R}$ and $t\geq s+t_0(B)$ there holds
	$$
	\mathrm{dist}_Y(u(t),\mathcal{A}) \leq \mathrm{dist}_Y(S(t-s)u(s),\mathcal{A}) \leq \mathrm{dist}_Y(S(t-s)B,\mathcal{A}).
	$$
	Passing with $s$ to $-\infty$ we deduce that
	$$
	\mathrm{dist}_Y(u(t),\mathcal{A}) = 0,
	$$
	whence $u(t) \in Y$ as $\mathcal{A}$ is $Y$-closed.
\end{proof}

\section{Problem formulation, scalings and preliminaries.}\label{sec:intro}
  \noindent \textbf{Problem formulation.} Let $\Omega = (0,L_{x_1})\times (0,L_{x_2})\times(0,h)\subset \R^3$. The boundary of $\Omega$ is divided into three parts $\partial\Omega = \overline{\Gamma_B}\cup\overline{\Gamma_T}\cup \overline{\Gamma_L}$, where $\Gamma_B = (0,L_{x_1})\times (0,L_{x_2})\times \{0\}$ is the bottom, $\Gamma_T = (0,L_{x_1})\times (0,L_{x_2})\times \{h\}$ is the top, and $\Gamma_L$ is the lateral boundary.
  
We consider the following Rayleigh--B\'{e}nard problem for micropolar fluid, where $(x,t)\in \Omega\times (0,\infty)$.
\begin{align}
& u_t + (u\cdot \nabla) u - (\nu+\nu_r) \Delta u + \varrho_0^{-1}\nabla p = 2\nu_r {\rm rot}\, \gamma + e_3 \overline{\alpha}g T,\label{eq:1a}\\
& {\rm div}\, u = 0,\label{eq:2a}\\
& j(\gamma_t + (u\cdot \nabla)\gamma) - \alpha\Delta \gamma - \beta\nabla{\rm div}\,\gamma
 + 4\nu_r \gamma = 2\nu_r {\rm rot}\, u,\label{eq:3a}\\
& T_t + u\cdot \nabla T - \chi \Delta T = 0.\label{eq:4a} 
\end{align}
The unknowns in the above equations are the velocity field $u:\Omega\times (0,\infty)\to \R^3$, the pressure $p:\Omega\times (0,\infty)\to \R$, the temperature $T:\Omega\times (0,\infty)\to \R$, and the microrotation field $\gamma:\Omega\times (0,\infty)\to \R^3$. 
We assume the boundary conditions $u = 0$ and $\gamma = 0$ on $\Gamma_B\cup \Gamma_T$, $T=0$ on $\Gamma_T$, and $T=T_B$ on $\Gamma_B$. On $\Gamma_L$ we impose the periodic conditions on the functions $u, T, \gamma$ and their normal derivatives as well as the pressure $p$ such that all boundary integrals on $\Gamma_L$ in the weak formulation cancel. All physical constants $\nu, \nu_r, \varrho_0, \overline{\alpha}, g, j, \alpha, \chi, T_B$ are  assumed to be positive numbers, save for $\nu_r$, where we allow for the case $\nu_r = 0$, in which the micropolar model reduces to the Boussinesq one, $e_3=(0,0,1)$. Finally we impose the initial conditions 
$$
u(x,0)=u_0(x), \quad T(x,0)=T_0(x), \quad \gamma(x,0)=\gamma_0(x)\quad\text{for}\quad x\in \Omega.
$$
To reduce the number of parameters in the models and make the variables independent on the measurement units we introduce the scaling of unknowns, corresponding to the one for Newtonian fluids of \cite{DG} and \cite{Wang2008}. In the scaling we use the following dimensionless numbers.
\begin{itemize}
	\item Rayleigh number ${\rm Ra}=\frac{\overline{\alpha}gT_Bh^3}{\nu\chi}$.
	\item Prandtl number ${\rm Pr}=\frac{\nu}{\chi}$. 
	\item Grashof number ${\rm Gr} = \frac{\rm Ra}{\rm Pr} = \frac{\overline{\alpha}gT_Bh^3}{\nu^2}$.
	\item Microrotation ratio $N=\frac{\nu_r}{\nu+\nu_r}$. Clearly, as $\nu>0, \nu_r\geq0$, we have $0\leq N<1$. For the case $N=0$ the equations decouple, and we get the classical Raleigh--B\'{e}nard problem and the additional equation for the microrotation. For the ease of notation, in place of $N$ we use the constant $K = \frac{N}{1-N} = \frac{\nu_r}{\nu}$, see \cite{PayneStraughan}. Then $K\geq 0$. 
	\item Micropolar damping $L = \frac{\alpha}{h^2\nu}$.
	\item Micropolar inertia $M = \frac{j}{h^2}$.
	\item Second micropolar viscosity $G=\frac{\beta}{h^2\nu}$.
\end{itemize}
The scaled variables and unknowns are given as
\begin{align}\label{eq:scaling_1}
	&x = h x',\quad t=\frac{h^2}{\chi }t', \quad u = \frac{\chi}{h}u',\quad p = \frac{\varrho_0 \chi^2}{h^2}p',\quad T= T_B T',\quad \gamma = \frac{\chi}{h^2}\gamma'.
\end{align}
	As $x=h x'$, the set $\Omega$ is scaled to $\Omega'=(0,L_{x_1}/h)\times(0,L_{x_2}/h)\times (0,1)$. Since from now on we will work only with the scaled equations and variables, for the notation convenience we will always drop the primes and use the same symbols for scaled variables as for the nonscaled ones. Note that the periodic conditions on $\Gamma_L$ for all variables are always transformed to the corresponding periodic conditions on the scaled lateral boundary. Similarly, the homogeneous Dirichlet conditions are also always transformed to the corresponding homogeneous condition. The only boundary condition which scales nontrivially is the condition $T=T_B$ on $\Gamma_B$. Note that the scaled $\Gamma_B$ is given by 
$(0,L_{x_1}/h)\times(0,L_{x_2}/h)\times \{0\}$. We shall denote $\epsilon=1/{{\rm Pr}}$. 
System \eqref{eq:1a}--\eqref{eq:4a} in dimensionless variables takes the following form. For $(x,t)\in \Omega\times (0,\infty)$,
\begin{align}
	& \epsilon(u_t + (u\cdot \nabla) u) - (1+K)\Delta u + \epsilon\nabla p = 2K {\rm rot}\, \gamma + e_3  {\rm Ra} T,\label{eq:1d}\\
	& {\rm div}\, u = 0,\label{eq:2d}\\
	& \epsilon M(\gamma_t + (u\cdot \nabla)\gamma) - L\Delta \gamma - 
	G\nabla{\rm div}\,\gamma
	+ 4K \gamma = 2K {\rm rot}\, u,\label{eq:3d}\\
	& T_t + u\cdot \nabla T - \Delta T = 0.\label{eq:4d} 
\end{align}
The boundary condition for the temperature on $\Gamma_B$ is now given as 
\begin{equation}\label{eq:bc_T}
T(x_1,x_2,0,t) = 1\quad {\rm for}\quad (x_1,x_2)\in (0,L_{x_1}/h)\times(0,L_{x_2}/h).
\end{equation} 
%We will compare the critical Rayleigh number, as well as the upper estimates of the Nusselt number and the fractal dimension of the global attractor for the above system, with the corresponding ones for the Newtonian fluid, given, in dimensionless form, by
%\begin{align}
%& \frac{1}{{\rm Pr}}(u_t + (u\cdot \nabla) u) - \Delta u + \frac{1}{{\Pr}}\nabla p = e_2  {\rm Ra} T,\label{eq:1dn}\\
%& {\rm div}\, u = 0,\label{eq:2dn}\\
%& T_t + u\cdot \nabla T - \Delta T = 0.\label{eq:4dn} 
%\end{align}
 \noindent \textbf{Background temperature.} To make the boundary condition \eqref{eq:bc_T} homogeneous we will introduce the background temperature, which is a function $\tau:[0,1]\to \mathbb{R}$ of the variable $x_3$, such 
that $\tau(1) = 0$ and $\tau(0) = 1$. We will replace the temperature with a new unknown $\theta$ such that
$$
T(x_1,x_2,,x_3,t) = \theta(x_1,x_2,x_3,t) + \tau(x_3),
$$
and solve the problem for $\theta$. Such translation  
will make the boundary conditions for $\theta$ homogeneous both on the top and bottom boundary, i.e. 
$$
\theta(x_1,x_2,1,t) = \theta(x_1,x_2,0,t) = 0\quad {\rm for}\quad (x_1,x_2)\in (0,L_{x_1}/h)\times(0,L_{x_2}/h).
$$ 
 After introducing the background temperature, system \eqref{eq:1d}--\eqref{eq:4d} takes the form 
\begin{align}
& \epsilon(u_t + (u\cdot \nabla) u) - (1+K)\Delta u + \nabla \bar{p} = 2K {\rm rot}\, \gamma + e_3  {\rm Ra} \theta,\label{eq:1e}\\
& {\rm div}\, u = 0,\label{eq:2e}\\
& \epsilon M(\gamma_t + (u\cdot \nabla)\gamma) - L\Delta \gamma - G\nabla{\rm div}\,\gamma
+ 4K \gamma = 2K {\rm rot}\, u,\label{eq:3e}\\
& \theta_t + u\cdot \nabla \theta + u_3 \tau' - \Delta \theta - \tau'' = 0,\label{eq:4e} 
\end{align}
where the pressure $p$ has been replaced by $\bar{p} = \epsilon p + {\rm Ra} \int_0^{x_3} \tau(r)\, dr$. In the sequel we will just write $p$ in place of $\bar{p}$. 
Various choices of $\tau$ are possible, the simplest choice is $\tau (x_2) = 1-x_2$, and with such choice system \eqref{eq:1e}--\eqref{eq:4e} takes the form
\begin{align}
& \epsilon(u_t + (u\cdot \nabla) u) - (1+K)\Delta u + \nabla p = 2K {\rm rot}\, \gamma + e_3  {\rm Ra} \theta,\label{eq:1f}\\
& {\rm div}\, u = 0,\label{eq:2f}\\
& \epsilon M(\gamma_t + (u\cdot \nabla)\gamma) - L\Delta \gamma - G\nabla{\rm div}\,\gamma
+ 4K\gamma = 2K {\rm rot}\, u,\label{eq:3f}\\
& \theta_t + u\cdot \nabla \theta - \Delta \theta  = u_3.\label{eq:4f} 
\end{align}
Together with the above system we shall consider the following system
\begin{align}
& \epsilon(u_t + (u\cdot \nabla) u) - \Delta u + \nabla p = e_3  {\rm Ra} \theta,\label{eq:1f_kz}\\
& {\rm div}\, u = 0,\label{eq:2f_kz}\\
& \epsilon M(\gamma_t + (u\cdot \nabla)\gamma) - L\Delta \gamma - G\nabla{\rm div}\,\gamma
 = 0,\label{eq:3f_kz}\\
& \theta_t + u\cdot \nabla \theta - \Delta \theta  = u_3.\label{eq:4f_kz}
\end{align}
formally corresponding to the Newtonian fluid, that is $K=0$. Then, \eqref{eq:1f_kz}, \eqref{eq:2f_kz}, and \eqref{eq:4f_kz} constitute the well known Boussinesq system, while the equation \eqref{eq:3f_kz} can be independently solved for $\gamma$ once the solution of the system  of the remaining three equations is known. We stress that all definitions and results stated below are valid also for the case $K=0$.

%Together with the above system we shall consider the following system
%\begin{align}
%&  - (1+K)\Delta u + \nabla p = 2K {\rm rot}\, \gamma + e_3  {\rm Ra} \theta,\label{eq:1finf}\\
%& {\rm div}\, u = 0,\label{eq:2finf}\\
%&  - L\Delta \gamma - G\nabla{\rm div}\,\gamma
%+ 4K\gamma = 2K {\rm rot}\, u,\label{eq:3finf}\\
%& \theta_t + u\cdot \nabla \theta - \Delta \theta  = u_3.\label{eq:4finf} 
%\end{align}
%formally corresponding to the infinite Prandtl number.

\vspace{0.3cm}

\noindent \textbf{Definition of the weak solution.}
We introduce some notation. By $(\cdot,\cdot)$ we will denote the scalar product in $L^2(\Omega)$ or, depending on the context, in $L^2(\Omega)^3$. Let $\widetilde{V}$ be the space of divergence-free functions which are restrictions to $\overline{\Omega}$ of functions from $C^\infty(\mathbb{R}^2\times [0,1])^3$  which are equal to zero on $\mathbb{R}^2\times \{0,1\}$, and, together with all their derivatives, $(L_{x_1}/h,L_{x_1}/h)$-periodic with respect to the first two variables. Similarly, by $\widetilde{W}_k$, $k=1,3$, we define the space of all functions which are restrictions to $\overline{\Omega}$ of functions from $C^\infty(\mathbb{R}^2\times [0,1])^k$ which are $(L_{x_1}/h,L_{x_1}/h)$-periodic with all derivatives with respect to the first two variables and satisfying the homogeneous Dirichlet boundary conditions on bottom and top boundaries $\mathbb{R}^2\times \{0,1\}$. Define the spaces
$$
V = \{ \text{\rm  closure of } \widetilde{V}\  \text{\rm  in } H^1(\Omega)^3\, \}\quad \text{\rm and}\quad H = \{ \text{\rm  closure of } \widetilde{V}\  \text{\rm  in } L^2(\Omega)^3\, \},
$$    
$$
W_k = \{ \text{\rm  closure of } \widetilde{W}_k\  \text{\rm  in } H^1(\Omega)^k\, \}\quad \text{\rm and}\quad E_k = \{ \text{\rm  closure of } \widetilde{W}_k\  \text{\rm  in }
 L^2(\Omega)^k\, \}.
$$    
Moreover, let $P:L^2(\Omega)^3\to H$ be the Leray--Helmholz projection. The duality pairings between $V$ and its dual $V'$ as well as between $W_k$ and its dual $W_k'$ will be denoted by $\langle\cdot, \cdot\rangle$, and the scalar product on $E_k$ and $H$ will be denoted by $(\cdot,\cdot)$.

We will consider the eigenfunctions and eigenvalues of the three operators: the scalar operator $-\Delta$, the vector valued three--dimensional operator $-\Delta$ and the three-dimensional Stokes operator with the considered boundary conditions. All these operators have the sequences of positive eigenvalues going to infinity with the corresponding eigenfunctions being smooth and constituting the complete orthonormal sequences in $E_1$, $E_3$, and $H$, respectively. By $D_1(-\Delta)$, $D_3(-\Delta)$, and $D(-P \Delta)$ we will denote, respectively, domains of one and three-dimensional Laplacian and the Stokes operator equipped with $H^2$ norms. It is not hard to verify that the first eigenvalue of all three operators coincides and is equal to  $\lambda_1 = \pi^2$ \cite{Rummler_2}. Hence, the following Poincar\'{e} inequalities hold
\begin{align*}
& \pi^2\|u\|^2 \leq \|\nabla u\|^2_2 \quad \textrm{for every}\quad u\in V,\\
& \pi^2\|\nabla u\|^2 \leq \|P\Delta u\|^2_2 \quad \textrm{for every}\quad u\in D(-P\Delta),\\
& \pi^2\|\gamma\|^2 \leq \|\nabla \gamma\|^2_2 \quad \textrm{for every}\quad u\in W_3,\\
& \pi^2\|\nabla \gamma\|^2 \leq \|\Delta \gamma\|^2_2 \quad \textrm{for every}\quad u\in D_3(-\Delta),\\
& \pi^2\|\theta\|^2 \leq \|\nabla \theta\|^2_2 \quad \textrm{for every}\quad u\in W_1,\\
& \pi^2\|\nabla \theta\|^2 \leq \|\Delta \theta\|^2_2 \quad \textrm{for every}\quad u\in D_1(-\Delta).
\end{align*}
By $E_1^n$ and $E_3^n$ we will denote the spaces spanned by the first $n$ eigenfunctions of the scalar and vector $-\Delta$, respectively, and by $H^n$ the space spanned by the first $n$ eigenfunctions of the considered Stokes operator. 
%By $P^n$ we will denote, depending on the context, the $L^2$-orthogonal projections $P^n:E_1\to E_1^n$, $P^n:E_3\to E_3^n$, $P^n:H\to H^n$.

The weak solution to the above problem will be considered as the limit of the Galerkin problems. But, as we need the solutions of the Galerkin problem to satisfy the maximum principle for the temperature we only discretize the velocity $u$ and the microrotation $\gamma$ and keep the original PDE for the temperature. In the following definition of the Galerkin problem by $AC(I;X)$ we denote the space of absolutely continuous functions defined on the time interval $I$ with values in a Banach space $X$.

\begin{definition} \label{def-weak-sol-evol-galerkin} 
	The ($u,\gamma$)-Galerkin problem for equations \eqref{eq:1f}--\eqref{eq:4f} is defined as follows. Let $(u^n_0, \gamma^n_0,\theta^n_0)\in H^n\times E^n_3\times E^n_1$ be some approximation of $(u_0,\gamma_0,\theta_0) \in H \times E_3 \times E_1$.
	Find
	\begin{itemize}
		\item $u^n\in AC([0,\infty);H^n)$, $\,\,\, u^n(0)=u^n_0\in H^n$,
		\item $\gamma^n\in AC([0,\infty);E_3^n)$, $\,\,\,\gamma^n(0) = \gamma^n_0 \in E_3^n\,\,\,$,
		\item $\theta^n\in L^2_{loc}([0,\infty);W_1)$,  
		with $\theta^n_t\in L^{2}_{loc}([0,\infty);W_1')$ and $\theta^n(0) = \theta^n_0\in E_1$,
	\end{itemize}
	such that for all test functions $v\in H^n$, $\xi \in E_3^n$, $\eta\in W_1$ and a.e. $t>0$ there holds 
	\begin{align}
	& \epsilon\left((u^n_t(t),v) + ((u^n(t)\cdot \nabla) u^n(t),v)\right) + (1+K)(\nabla u^n(t),\nabla v) = 2K ({\rm rot}\, \gamma^n(t),v) + {\rm Ra} (\theta^n(t),v_3),\label{eq:1f_w_galer}\\
	& \epsilon M\left((\gamma^n_t(t),\xi) + ((u^n(t)\cdot \nabla)\gamma^n(t),\xi)\right) + L(\nabla \gamma^n(t),\nabla \xi) + G({\rm div}\,\gamma^n(t), {\rm div}\,\xi) +
	4 K (\gamma^n(t),\xi)\nonumber\\
	&\qquad = 2K ({\rm rot}\, u^n(t),\xi),\label{eq:3f_w_galer}\\
	& \langle \theta^n_t(t), \eta \rangle + (u^n(t)\cdot \nabla \theta^n(t),\eta) + (\nabla  \theta^n(t),\nabla \eta)  = (u^n_3(t),\eta).\label{eq:4f_w_galer} 
	\end{align}
\end{definition}
The proof of the following result is standard and so we skip it. The argument follows for example by the Galerkin method applied to the equation for $\theta^n$, cf. \cite{Foias_MRT-NSE_Turb}.
\begin{lemma}
For every $u^n_0 \in H^n, \gamma^n_0\in E^n_3$, and $\theta_0 \in E_1$ the $(u,\gamma)$-Galerkin problem given in Definition  \ref{def-weak-sol-evol-galerkin} has a unique solution. 
\end{lemma}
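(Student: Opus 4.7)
The plan is to construct the solution via a secondary Galerkin truncation of the temperature equation, coupled to the already finite-dimensional equations for $u^n$ and $\gamma^n$, to derive uniform estimates and pass to the limit in this secondary truncation, and finally to establish uniqueness by a direct energy argument. Fixing $m\geq 1$, I look for $\theta^{n,m}(t)\in E_1^m$ satisfying \eqref{eq:4f_w_galer} against every test function $\eta\in E_1^m$, together with $(u^{n,m},\gamma^{n,m})\in H^n\times E_3^n$ solving \eqref{eq:1f_w_galer}--\eqref{eq:3f_w_galer} with $\theta^n$ replaced by $\theta^{n,m}$. Expressing each unknown as a coefficient vector in its eigenbasis, the system becomes an ODE in $\mathbb{R}^{\dim H^n + \dim E_3^n + m}$ whose right-hand side is at most quadratic, hence locally Lipschitz, so the Picard--Lindel\"{o}f theorem yields a unique local-in-time solution.

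To extend to every $[0,T]$, I test \eqref{eq:1f_w_galer} with $u^{n,m}$, \eqref{eq:3f_w_galer} with $\gamma^{n,m}$, and the truncated temperature equation with $\theta^{n,m}$. The transport trilinear terms vanish by ${\rm div}\,u^{n,m}=0$, the coupling terms $2K({\rm rot}\,\gamma^{n,m},u^{n,m})$ and $2K({\rm rot}\,u^{n,m},\gamma^{n,m})$ are absorbed by the dissipation via Young's inequality (no sharp constant is needed here merely to prevent finite-time blow-up---the hypothesis $L\geq \tfrac{16}{3\pi^2}K$ is only invoked later to secure dissipativity), and the buoyancy term ${\rm Ra}\,(\theta^{n,m},u^{n,m}_3)$ pairs with the source $(u^{n,m}_3,\theta^{n,m})$ so that a suitable linear combination of the three energy identities yields, via Gronwall, uniform-in-$m$ bounds of $(u^{n,m},\gamma^{n,m},\theta^{n,m})$ in $L^\infty(0,T;H\times E_3\times E_1)\cap L^2(0,T;V\times W_3\times W_1)$. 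This rules out blow-up and produces global existence.

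For the limit $m\to \infty$ I additionally bound $\theta^{n,m}_t$ in $L^2(0,T;W_1')$ directly from the equation, using that $u^{n,m}$ is a priori bounded in $L^\infty(0,T;L^\infty(\Omega)^3)$ because it lies in the smooth finite-dimensional space $H^n$. Banach--Alaoglu delivers weak and weak-$\ast$ limits of $\theta^{n,m}$ along a subsequence; the Aubin--Lions lemma upgrades this to strong convergence in $L^2(0,T;E_1)$. Continuous dependence for the finite-dimensional ODEs governing $u^{n,m},\gamma^{n,m}$ then forces $C([0,T])$ convergence of the corresponding coefficient vectors to limits $u^n,\gamma^n$, and the triple $(u^n,\gamma^n,\theta^n)$ satisfies Definition~\ref{def-weak-sol-evol-galerkin}.

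For uniqueness, let $(u^n_i,\gamma^n_i,\theta^n_i)$, $i=1,2$, be two solutions with the same initial data and set $\bar u,\bar\gamma,\bar\theta$ to be their differences. Subtracting the equations and testing each by itself, the nontrivial trilinear differences $((\bar u\cdot\nabla)u^n_2,\bar u)$, $((\bar u\cdot\nabla)\gamma^n_2,\bar\gamma)$, and $(\bar u\cdot\nabla\theta^n_2,\bar\theta)$ are bounded by H\"{o}lder's inequality, exploiting the uniform $L^\infty(\Omega)$ bounds on $u^n_2,\gamma^n_2$ (finite-dimensional subspaces of smooth functions) and the $L^2(0,T;W_1)$ regularity of $\theta^n_2$; the remaining terms are linear or already energy-controlled. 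Gronwall then forces $\bar u=\bar\gamma=\bar\theta=0$. The main technical nuisance is keeping track of the cross-couplings between the three components simultaneously in the energy estimates of Step~2 and Step~4; the compactness step itself is routine and benefits from the fact that only the temperature remains infinite-dimensional.
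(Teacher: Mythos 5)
Your proposal is correct and follows exactly the route the paper indicates for this (omitted) proof: a secondary Galerkin discretization of the temperature equation coupled to the finite-dimensional $(u^n,\gamma^n)$ system, uniform energy estimates, Aubin--Lions compactness to pass $m\to\infty$, and uniqueness via an energy argument exploiting the $L^\infty$ bounds available from the finite-dimensionality of $H^n$ and $E_3^n$ together with the $L^2(0,T;W_1)$ regularity of $\theta^n$. No substantive differences to report.
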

We define the weak solution to the above problem as the approximative limit of subsequences of the $(u,\gamma)$-Galerkin problems.
\begin{definition} \label{def-weak-sol-evol} 
	Let
	$$
	u_0 \in H, \quad \gamma_0\in E_3,\quad\textrm{and}\quad \theta_0 \in E_1. 
	$$
	The triple of functions $(u,\gamma,\theta)$ such that 
	\begin{itemize}
		\item $u\in L^2_{loc}([0,\infty);V)\cap C_w([0,\infty);H)$, with $u_t\in L^{4/3}_{loc}([0,\infty);V')$ and $u(0)=u_0$,
		
		\item $\gamma\in L^2_{loc}([0,\infty);W_3)\cap C_w([0,\infty);E_3)$, with $\gamma_t\in L^{4/3}_{loc}([0,\infty);W_3')$ and $\gamma(0) = \gamma_0$,
		
		\item $\theta\in L^2_{loc}([0,\infty);W_1)\cap C_w([0,\infty);E_1)$,  
		with $\theta_t\in L^{4/3}_{loc}([0,\infty);W_1')$, and $\theta(0) = \theta_0$,
		
	\end{itemize}
is called a weak solution of the problem \eqref{eq:1f}--\eqref{eq:4f}  if for all test functions $v\in V$, $\xi \in W_3$, $\eta\in W_1$ and a.e. $t>0$ there holds 
	\begin{align}
	& \epsilon\left(\langle u_t(t),v\rangle + ((u(t)\cdot \nabla) u(t),v)\right) + (1+K)(\nabla u(t),\nabla v) = 2K ({\rm rot}\, \gamma(t),v) + {\rm Ra} (\theta(t),v_3),\label{eq:1f_w} \\
	& \epsilon M\left(\langle \gamma_t(t),\xi \rangle + (u(t)\cdot \nabla\gamma(t),\xi)\right) + L(\nabla \gamma(t),\nabla \xi) + G({\rm div}\,\gamma, {\rm div}\,\xi) +
	4 K (\gamma(t),\xi)\nonumber\\
	&\qquad = 2K ({\rm rot}\, u(t),\xi),\label{eq:3f_w}\\
	& \langle \theta_t(t), \eta\rangle  + (u(t)\cdot \nabla \theta(t),\eta) + (\nabla  \theta(t),\nabla \eta)  = (u_3(t),\eta)\label{eq:4f_w} 
	\end{align}
	and
	if $(u,\gamma,\theta)$ is the limit of approximative problems in the sense that there exist the sequences of initial 
	data
	\begin{align*}
	& H^n\ni u_0^n\to u_0\in H\quad \textrm{strongly in}\ H,\\
	& E_3^n\ni \gamma_0^n\to \gamma_0\in E_3 \quad \textrm{strongly in}\ E_3,\\
	& E_1\ni \theta_0^n\to \theta_0\in E_1 \quad \textrm{strongly in}\ E	
	\end{align*}
	such that if $(u^n,\gamma^n,\theta^n)$ are corresponding solutions to ($u,\gamma$)-Galerkin problems with initial data $(u_0^n,\gamma_0^n,\theta_0^n)$, then, for a subsequence of indexes denoted by $n_k$ there holds
	\begin{align}
	& u^{n_k}\to u \quad \textrm{weakly in}\ L^2_{loc}([0,\infty);V)\ \textrm{and weakly}-*\ \textrm{in}\ \ L^\infty_{loc}([0,\infty);H),\label{conv_1}\\
		& \gamma^{n_k}\to \gamma \quad \textrm{weakly in}\ L^2_{loc}([0,\infty);W_3)\ \textrm{and weakly}-*\ \textrm{in}\ \ L^\infty_{loc}([0,\infty);E_3),\\
		& \theta^{n_k}\to \theta \quad \textrm{weakly in}\ L^2_{loc}([0,\infty);W_1)\ \textrm{and weakly}-*\ \textrm{in}\ \ L^\infty_{loc}([0,\infty);E_1).\label{conv_3}
	\end{align} 
\end{definition}
The following result on the existence of the weak solution given in Definition \ref{def-weak-sol-evol} is standard, cf. \cite{Foias_MRT-NSE_Turb}, so we omit its proof.
\begin{lemma}\label{lem:existence}
For every initial data $u_0 \in H, \gamma_0\in E_3$, and $\theta_0 \in E_1$ there exists a weak solution to problem \eqref{eq:1f}--\eqref{eq:4f} given by Definition \ref{def-weak-sol-evol}.  
\end{lemma}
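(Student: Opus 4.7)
The plan is to execute the standard Galerkin limit argument built on Definition \ref{def-weak-sol-evol-galerkin}, whose well-posedness is already guaranteed. First I would fix approximations $u_0^n\in H^n$, $\gamma_0^n\in E_3^n$, $\theta_0^n\in E_1$ converging strongly to $u_0,\gamma_0,\theta_0$ in $H$, $E_3$, $E_1$ respectively (which exist because the eigenfunction expansions are dense), and consider the corresponding $(u,\gamma)$-Galerkin solutions $(u^n,\gamma^n,\theta^n)$.

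The first main step is to establish $n$-uniform a priori bounds on any interval $[0,T]$. Testing \eqref{eq:1f_w_galer} with $v=u^n(t)$, \eqref{eq:3f_w_galer} with $\xi=\gamma^n(t)$, and \eqref{eq:4f_w_galer} with $\eta=\theta^n(t)$, the trilinear convective terms vanish by the divergence-free property of $u^n$ and integration by parts. Combining the three identities with suitable positive weights allows absorption of the skew couplings $2K({\rm rot}\,\gamma^n,u^n)+2K({\rm rot}\,u^n,\gamma^n)$ via Cauchy-Schwarz, Young's inequality, and the Poincar\'{e} inequalities already recorded; a Gronwall argument driven by the $(\theta^n,u_3^n)$ forcing in the temperature equation then yields uniform bounds $u^n\in L^\infty_{loc}(H)\cap L^2_{loc}(V)$, $\gamma^n\in L^\infty_{loc}(E_3)\cap L^2_{loc}(W_3)$, and $\theta^n\in L^\infty_{loc}(E_1)\cap L^2_{loc}(W_1)$. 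Although these estimates are essentially what Section \ref{Basic estimates} will supply, here they are only needed at the qualitative level.

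Next I would bound the time derivatives. In three dimensions the trilinear estimate $|((u\cdot\nabla)u,v)|\lesssim \|u\|^{1/2}\|\nabla u\|^{3/2}\|v\|_V$ places the convective term in $L^{4/3}_{loc}(V')$; combined with the other (lower order) terms this gives $u^n_t\in L^{4/3}_{loc}(V')$ uniformly, with analogous bounds for $\gamma^n_t\in L^{4/3}_{loc}(W_3')$ and $\theta^n_t\in L^{4/3}_{loc}(W_1')$. Banach-Alaoglu extracts, along a common subsequence $n_k$, the weak and weak-$*$ convergences \eqref{conv_1}--\eqref{conv_3}, and the Aubin-Lions lemma (with the compact embeddings $V\hookrightarrow H$, $W_3\hookrightarrow E_3$, $W_1\hookrightarrow E_1$) additionally delivers strong convergence of $(u^{n_k},\gamma^{n_k},\theta^{n_k})$ in $L^2_{loc}(H\times E_3\times E_1)$.

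The final step is the limit passage in \eqref{eq:1f_w_galer}--\eqref{eq:4f_w_galer}. Linear terms pass by weak convergence against a fixed smooth test function. For a convective term such as $\int_0^T((u^n\cdot\nabla)\theta^n,\eta)\,dt$ I would rewrite it, for smooth $\eta$, as $-\int_0^T((u^n\cdot\nabla)\eta,\theta^n)\,dt$ and use the strong $L^2_{loc}(H)\times L^2_{loc}(E_1)$ convergence of $(u^{n_k},\theta^{n_k})$; the other convective terms are treated analogously. Density of $\bigcup_n H^n$ and $\bigcup_n E_3^n$ in $V$ and $W_3$ extends the identities to all admissible test functions; the test functions $\eta\in W_1$ for the temperature equation are already available since $\theta^n$ is itself not discretized. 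Weak continuity into $H, E_3, E_1$ and attainment of the initial data follow from the standard Lions-Magenes argument using the time-derivative bounds together with the strong convergence of the initial data. The principal technical subtlety is the weaker $L^{4/3}$-in-time integrability of the convective terms in three dimensions, which is nevertheless sufficient both for the Aubin-Lions step and for the weak-strong product identification in the nonlinear limit.
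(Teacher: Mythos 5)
Your proposal is correct and is precisely the standard Galerkin--compactness argument that the paper itself invokes (it omits the proof as standard, citing Foias--Manley--Rosa--Temam): uniform energy bounds from testing with the solution itself, $L^{4/3}_{loc}$ bounds on the time derivatives from the 3D trilinear estimate, Aubin--Lions compactness, and limit passage in the nonlinear terms, with the limit then satisfying Definition \ref{def-weak-sol-evol} by construction since that definition explicitly asks for a limit of $(u,\gamma)$-Galerkin solutions along the convergences \eqref{conv_1}--\eqref{conv_3}. You also correctly identify the two points specific to this setting — that $\theta^n$ is not discretized, so its test functions range over all of $W_1$ already at the approximate level, and that the coupling terms $2K({\rm rot}\,\gamma^n,u^n)+2K({\rm rot}\,u^n,\gamma^n)$ are absorbed by the $(1+K)\|\nabla u^n\|^2$ and $4K\|\gamma^n\|^2$ dissipation — so nothing further is needed.
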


 \noindent \textbf{Some basic properties of the weak solution.}
 We prove some basic properties of the weak solutions. 
 \begin{lemma}\label{lemma:convergence}
 	Suppose that $(u^{n_k},\gamma^{n_k},\theta^{n_k})$ is the subsequence of 
 	$u,\gamma$-Galerkin problems such that the convergences \eqref{conv_1}--\eqref{conv_3} hold. Then there also hold the following convergences
 	\begin{align*}
 		& u^{n_k}(t)\to u(t) \quad \textrm{weakly in}\ H\ \textrm{for every}\ t \geq 0 \textrm{ and strongly in}\ H\ \textrm{for a.e.}\ t > 0\ \textrm{and for}\ t=0,\\
 			& u^{n_k}\to u \quad \textrm{strongly in}\ L^{2}_{loc}([0,\infty);H),\\
 		& u^{n_k}_t\to u_t \quad \textrm{weakly in}\ L^{4/3}_{loc}([0,\infty);V'),\\
 		& \gamma^{n_k}(t)\to \gamma(t) \quad \textrm{weakly in}\ E_3\ \textrm{for every}\ t \geq 0 \textrm{ and strongly in}\ E_3\ \textrm{for a.e.}\ t > 0\ \textrm{and for}\ t=0,\\
 		 			& \gamma^{n_k}\to \gamma \quad \textrm{strongly in}\ L^{2}_{loc}([0,\infty);E_3),\\
 		 		& \gamma^{n_k}_t\to \gamma_t \quad \textrm{weakly in}\ L^{4/3}_{loc}([0,\infty);W_3'),\\
 		 		& \theta^{n_k}(t)\to \theta(t) \quad \textrm{weakly in}\ E_1\ \textrm{for every}\ t \geq 0 \textrm{ and strongly in}\ E_1\ \textrm{for a.e.}\ t > 0\ \textrm{and for}\ t=0,\\
 		 					& \theta^{n_k}\to \theta \quad \textrm{strongly in}\ L^{2}_{loc}([0,\infty);E_1),\\
 			& \theta^n_t\to u_t \quad \textrm{weakly in}\ L^{4/3}_{loc}([0,\infty);W_1').
 	\end{align*}
 \end{lemma}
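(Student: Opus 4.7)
The plan is to obtain the additional convergences listed in Lemma \ref{lemma:convergence} from the convergences \eqref{conv_1}--\eqref{conv_3} already part of the definition of a weak solution, by combining uniform bounds on the time derivatives of the Galerkin approximations with Aubin--Lions compactness and an Ascoli-type argument in a weak topology.

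First I would derive uniform bounds on $u^n_t$, $\gamma^n_t$, and $\theta^n_t$ in $L^{4/3}_{loc}([0,\infty); V')$, $L^{4/3}_{loc}([0,\infty); W_3')$, and $L^{4/3}_{loc}([0,\infty); W_1')$, respectively. Testing the Galerkin equations \eqref{eq:1f_w_galer}--\eqref{eq:4f_w_galer} against elements of the respective finite-dimensional spaces and isolating the time derivative, the only delicate contributions are the trilinear convective terms. In three dimensions, the Ladyzhenskaya interpolation $\|u\|_{L^3}\leq C\|u\|^{1/2}\|\nabla u\|^{1/2}$ together with the Sobolev embedding $V\subset L^6(\Omega)^3$ yields $\|(u^n\cdot\nabla)u^n\|_{V'}^{4/3}\leq C\|u^n\|^{2/3}\|u^n\|_V^{2}$, and integration in time against the uniform $L^\infty_{loc}([0,\infty);H)\cap L^2_{loc}([0,\infty);V)$ bound on $u^n$ gives the desired $L^{4/3}_{loc}$ bound on $u^n_t$. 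The arguments for $\gamma^n$ and $\theta^n$ are analogous. Passing to a further subsequence, still denoted $n_k$, produces the stated weak convergences of the time derivatives, with the limits identified as the distributional derivatives of $u$, $\gamma$, $\theta$.

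Second, with $(u^n)$ bounded in $L^2_{loc}([0,\infty);V)$ and $(u^n_t)$ bounded in $L^{4/3}_{loc}([0,\infty);V')$, and the compact embedding $V\subset\subset H$, the Aubin--Lions compactness lemma (applied on an exhausting sequence of bounded time intervals together with a diagonal extraction) produces strong convergence $u^{n_k}\to u$ in $L^2_{loc}([0,\infty);H)$. The same reasoning applied in the triples $W_3\subset\subset E_3\subset W_3'$ and $W_1\subset\subset E_1\subset W_1'$ yields the strong convergences of $\gamma^{n_k}$ in $L^2_{loc}([0,\infty);E_3)$ and of $\theta^{n_k}$ in $L^2_{loc}([0,\infty);E_1)$. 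A further subsequence then converges strongly at almost every $t>0$ in the respective ambient $L^2$-spaces.

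Third, for the pointwise-in-$t$ weak convergences, I would use the fact that the $L^{4/3}_{loc}$ bound on $u^n_t$ in $V'$ implies, via H\"older's inequality, the uniform-in-$n$ equicontinuity
\[
\|u^n(t)-u^n(s)\|_{V'}\leq C_T|t-s|^{1/4}\quad\textrm{for }\ 0\leq s<t\leq T.
\]
Combined with the uniform bound in $L^\infty_{loc}([0,\infty);H)$ and the separability of $V$, a standard Arzel\`a--Ascoli argument applied to the scalar maps $t\mapsto (u^n(t),v)$ for $v$ ranging over a countable dense subset of $V$, together with a diagonal extraction, yields $u^{n_k}(t)\rightharpoonup u(t)$ in $H$ for every $t\geq 0$. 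The analogous statements for $\gamma^{n_k}(t)$ in $E_3$ and $\theta^{n_k}(t)$ in $E_1$ follow identically. Strong convergence at $t=0$ is immediate from the assumed strong convergence of the initial data in Definition \ref{def-weak-sol-evol}. The main obstacle is precisely this pointwise-in-$t$ weak convergence for every $t\geq 0$, which requires the equicontinuity in $V'$ rather than merely the a.e.-in-$t$ convergence that would follow from the Aubin--Lions step alone.
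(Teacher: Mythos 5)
Your argument is correct and follows essentially the same route as the paper's: time-derivative bounds in $L^{4/3}_{loc}$ via the trilinear-term estimate, Aubin--Lions for strong $L^2_{loc}$ convergence (and hence a.e.-in-$t$ strong convergence), strong convergence at $t=0$ from the construction of the Galerkin initial data, and an Ascoli-type equicontinuity argument in $V'$ (resp.\ $W_3'$, $W_1'$) for the pointwise-in-time weak convergences. The paper merely sketches these steps as standard; you supply the details, and your computation $\|(u^n\cdot\nabla)u^n\|_{V'}^{4/3}\lesssim \|u^n\|^{2/3}\|u^n\|_V^2$ and the H\"older bound $\|u^n(t)-u^n(s)\|_{V'}\lesssim |t-s|^{1/4}$ are exactly the right ones.

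One small tightening worth making: the lemma asserts the convergences for the \emph{given} subsequence $n_k$, whereas at a couple of points you pass to a ``further subsequence, still denoted $n_k$.'' This extraction is in fact unnecessary: once the candidate limit is uniquely identified---$u_t$ as the distributional derivative of the already-known limit $u$, $u(t)$ as the value of the (weakly continuous) limit trajectory---the standard subsequence-of-a-subsequence argument shows that the entire given sequence $n_k$ converges, not merely a further refinement. Making this explicit keeps the conclusion in the exact form claimed.
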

\begin{proof}
	The result is standard so, again, we omit the details of the proof. The convergence of time derivatives follows from the fact that $(u^{n_k},\gamma^{n_k},\theta^{n_k})$ satisfy the equations \eqref{eq:1f_w_galer}--\eqref{eq:4f_w_galer} and from the bounds implied by the convergences \eqref{conv_1}--\eqref{conv_3}. Strong convergence in $L^2_{loc}([0,\infty);H)$, $L^2_{loc}([0,\infty);E_3)$, and $L^2_{loc}([0,\infty);E_1)$ follows from the Aubin--Lions lemma. This convergence implies the strong convergence in $H$, $E_1$, $E_3$ for a.e. $t$. The strong convergence at $t=0$ follows from the way the initial data are defined for the approximative problems. Obtained convergences also imply the pointwise in time weak convergences in $H$, $E_1$, and $E_3$ for every $t\geq 0$. 
\end{proof}

Similarly as for the Leray--Hopf weak solutions of the 3D Navier--Stokes equations we will need that our weak solutions satisfy the energy inequalities. However, we do not impose these inequalities in the definition of the weak solution. Rather that that, they will follow from the corresponding energy equations for the ($u,\gamma$)-Galerkin problems.

 \begin{lemma}Suppose that the triple $(u,\gamma,\theta)$ is the weak solution given by Definition \ref{def-weak-sol-evol}. Then for almost every $t_0 \geq 0$ (including $t_0=0$) and for every $t > t_0$ there hold the inequalities
 	\begin{align}\label{ine-u:4f_w} 
 	&\epsilon\|u(t)\|_2^2 + 2(1+K)\int_{t_0}^t\|\nabla u(s)\|_2^2ds \leq \epsilon\|u(t_0)\|_2^2 +
 	4K\int_{t_0}^t ({\rm rot}\, \gamma(s),u(s))ds + 2{\rm Ra} \int_{t_0}^t(\theta(s),u_3(s))ds,\\
 	&\epsilon M\|\gamma(t)\|_2^2 + 2L\int_{t_0}^t\|\nabla \gamma(s)\|_2^2ds + 2G\int_{t_0}^t\|\mathrm{div}\, \gamma(s)\|_2^2ds + 8K\int_{t_0}^t \|\gamma(s)\|_2^2\, ds\nonumber \\
 	&\qquad \qquad  \leq \epsilon M\|\gamma(t_0)\|_2^2 +
4K\int_{t_0}^t ({\rm rot}\, u(s),\gamma(s))ds,\label{ine-omega:4f_w} \\
 	&\|\theta(t)\|_2^2 + 2\int_{t_0}^t\|\nabla \theta(s)\|_2^2ds \leq  \|\theta(t_0)\|_2^2 +
\int_{t_0}^t (u_3(s),\theta(s))ds.\label{ine-theta:4f_w}
 	\end{align} 
 \end{lemma}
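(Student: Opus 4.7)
The plan is to derive the corresponding energy \emph{equalities} at the level of the $(u,\gamma)$-Galerkin approximations from Definition \ref{def-weak-sol-evol-galerkin}, and then pass to the limit along the subsequence $(u^{n_k},\gamma^{n_k},\theta^{n_k})$ produced in Definition \ref{def-weak-sol-evol} and analysed in Lemma \ref{lemma:convergence}. The loss of equality in the limit is caused by the weak lower semicontinuity of norms at a single time instant.

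\textbf{Step 1 (Galerkin energy equalities).} In \eqref{eq:1f_w_galer} I take the admissible test $v=u^n(t)\in H^n$; the divergence--free character of $u^n$ together with the prescribed boundary conditions yields $((u^n\cdot\nabla)u^n,u^n)=0$, and the resulting pointwise identity
\[
\tfrac{\epsilon}{2}\tfrac{d}{dt}\|u^n\|_2^2 + (1+K)\|\nabla u^n\|_2^2 = 2K(\mathrm{rot}\,\gamma^n,u^n) + \mathrm{Ra}(\theta^n,u^n_3)
\]
holds in the classical sense, as $u^n\in AC([0,\infty);H^n)$. The same idea with $\xi=\gamma^n(t)\in E_3^n$ in \eqref{eq:3f_w_galer} and with $\eta=\theta^n(t)\in W_1$ in \eqref{eq:4f_w_galer} produces the two further identities; in the temperature case the choice is admissible because $\theta^n\in L^2_{loc}([0,\infty);W_1)$ with $\theta^n_t\in L^2_{loc}([0,\infty);W_1')$, so $\|\theta^n\|_2^2$ is absolutely continuous by the Lions--Magenes theorem, and the transport trilinear term vanishes by the divergence--free property of $u^n$. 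Integrating each identity from $t_0$ to $t$ gives three exact energy equalities for the Galerkin triple.

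\textbf{Step 2 (Passage to the limit).} Fix $t_0$ in the full-measure subset of $(0,\infty)$ along which Lemma \ref{lemma:convergence} simultaneously supplies the strong convergences $u^{n_k}(t_0)\to u(t_0)$ in $H$, $\gamma^{n_k}(t_0)\to\gamma(t_0)$ in $E_3$ and $\theta^{n_k}(t_0)\to\theta(t_0)$ in $E_1$ (the intersection of three full-measure sets still has full measure); for $t_0=0$ use instead the strong convergence of the initial data imposed in Definition \ref{def-weak-sol-evol}. At the terminal time $t>t_0$, Lemma \ref{lemma:convergence} guarantees pointwise weak convergence, so
\[
\|u(t)\|_2^2 \leq \liminf_{k\to\infty}\|u^{n_k}(t)\|_2^2,
\]
and analogously for $\|\gamma(t)\|_2^2$ and $\|\theta(t)\|_2^2$. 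Each of the dissipation integrals $\int_{t_0}^t\|\nabla u^{n_k}\|_2^2\,ds$, $\int_{t_0}^t\|\nabla\gamma^{n_k}\|_2^2\,ds$, $\int_{t_0}^t\|\mathrm{div}\,\gamma^{n_k}\|_2^2\,ds$, $\int_{t_0}^t\|\gamma^{n_k}\|_2^2\,ds$, $\int_{t_0}^t\|\nabla\theta^{n_k}\|_2^2\,ds$ is likewise weakly lower semicontinuous under the $L^2_{loc}$ convergences from Lemma \ref{lemma:convergence}, producing the desired $\leq$. The coupling integrals $\int_{t_0}^t(\mathrm{rot}\,\gamma^{n_k},u^{n_k})\,ds$, $\int_{t_0}^t(\mathrm{rot}\,u^{n_k},\gamma^{n_k})\,ds$ and $\int_{t_0}^t(\theta^{n_k},u^{n_k}_3)\,ds$ on the right-hand sides pass to the limit by weak--strong pairing: weak $L^2_{loc}([0,\infty);L^2)$ convergence of $\mathrm{rot}\,\gamma^{n_k}$ and $\mathrm{rot}\,u^{n_k}$ (from weak $L^2_{loc}$ convergence of $\gamma^{n_k}$ and $u^{n_k}$ in $W_3$ and $V$) is paired with the strong $L^2_{loc}([0,\infty);L^2)$ convergence of the respective partner.

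\textbf{Main obstacle.} The only subtle point is synchronizing the three exceptional null sets so that strong convergence at $t_0$ holds simultaneously for $u$, $\gamma$ and $\theta$; this is purely set-theoretic but must be recorded explicitly before the equality at $t_0$ can be invoked. The complementary subtlety is the specific case $t_0=0$, for which one cannot appeal to the "almost every $t$" strong convergence from Lemma \ref{lemma:convergence} and must instead rely on the strong convergence of the approximate initial data hard-wired into Definition \ref{def-weak-sol-evol}.
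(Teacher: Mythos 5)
Your proposal is correct and follows essentially the same route as the paper: derive the exact energy identities for the $(u,\gamma)$-Galerkin approximations by testing with $u^n$, $\gamma^n$, $\theta^n$, integrate over $(t_0,t)$, and pass to the limit using the convergences of Lemma \ref{lemma:convergence} together with sequential weak lower semicontinuity of the norms. The paper's proof is just a terser version of yours; your explicit handling of the synchronized null sets at $t_0$ and of the case $t_0=0$ via the strongly convergent approximate initial data is a welcome elaboration of details the paper leaves implicit.
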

\begin{proof}
	The proof is standard and follows the same argument as the proof of energy inequalities in the Leray--Hopf weak solution of the Navier Stokes equation. For example, to get \eqref{ine-u:4f_w} we need to test \eqref{eq:1f_w_galer} with $u^n(t)$. After integration over the interval $(t_0,t)$ it follows that
	\begin{align*}
		& \epsilon\|u^n(t)\|_2^2 + 2(1+K)\int_{t_0}^t\|\nabla u^n(s)\|_2^2ds \\
		&\qquad = \epsilon\|u^n(t_0)\|_2^2 +
	4K\int_{t_0}^t ({\rm rot}\, \gamma^n(s),u^n(s))ds + 2{\rm Ra} \int_{t_0}^t(\theta^n(s),u^n_3(s))ds.
	\end{align*}
	Now, the convergences \eqref{conv_1}--\eqref{conv_3} and Lemma \ref{lemma:convergence} as well as the sequential weak lower semicontinuity of the norms imply the desired result. 
\end{proof}

\noindent \textbf{Strong solution and weak-strong uniqueness.} We pass to the definition of the strong solution of the finite Prandtl number problem. 

\begin{definition} \label{def-strong}
	Let
$$
u_0 \in V, \quad \gamma_0\in W_3,\quad\textrm{and}\quad \theta_0 \in W_1. 
$$
The triple of functions $(u,\gamma,\theta)$ such that 
\begin{itemize}
	\item $u\in L^2_{loc}([0,\infty);D(-P \Delta))\cap C([0,\infty);V)$, with $u_t\in L^2_{loc}([0,\infty);H)$ and $u(0)=u_0$,
	
	\item $\gamma\in L^2_{loc}([0,\infty);D_3(-\Delta))\cap C([0,\infty);W_3)$, with $\gamma_t\in L^2_{loc}([0,\infty);E_3)$ and $\gamma(0) = \gamma_0$,
	
	\item $\theta\in L^2_{loc}([0,\infty);D_1(-\Delta))\cap C([0,\infty);W_1)$,  
	with $\theta_t\in L^2_{loc}([0,\infty);E_1)$, and $\theta(0) = \theta_0$,
	
\end{itemize}
is called a strong solution of the problem \eqref{eq:1f}--\eqref{eq:4f}  if for all test functions $v\in H$, $\xi \in E_3$, $\eta\in E_1$ and a.e. $t>0$ there holds 
\begin{align}
& \epsilon\left(( u_t(t),v ) + ((u(t)\cdot \nabla) u(t),v)\right) + (1+K)(-P \Delta u(t), v) = 2K ({\rm rot}\, \gamma(t),v) + {\rm Ra} (\theta(t),v_3), \label{eq:strong_1}\\
& \epsilon M\left(( \gamma_t(t),\xi ) + (u(t)\cdot \nabla\gamma(t),\xi)\right) + L(-\Delta \gamma(t),\xi) + G(-\nabla {\rm div}\,\gamma(t), \xi) +
4 K (\gamma(t),\xi)\nonumber\\
&\qquad = 2K ({\rm rot}\, u(t),\xi),\label{eq:strong_2}\\
& ( \theta_t(t), \eta )  + (u(t)\cdot \nabla \theta(t),\eta) + (-\Delta  \theta(t), \eta)  = (u_3(t),\eta).\label{eq:strong_3}
\end{align}
\end{definition}

\begin{remark}
	We stress  that we do not know on the existence of the strong solution \textit{for every} initial data $
	u_0 \in V,  \gamma_0\in W_3,\textrm{and}\ \theta_0 \in W_1. 
	$ If, however, the constants of the problem satisfy some restriction which will be given later, and the initial data is sufficiently small, such strong solution  always exists. We will prove that  it always exists on the global attractor for the weak solutions. 
\end{remark}
The next result establishes the weak-strong uniqueness property of strong and weak solutions.

\begin{lemma}\label{lemma:weak-strong}
	If $(u,\gamma,\theta)$ is a strong solution then it is also a weak solution and it is moreover unique in the class of the weak solutions. 
\end{lemma}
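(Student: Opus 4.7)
The main content is the weak-strong uniqueness; the inclusion of strong solutions in the weak class will follow from it. I would start by letting $(\bar u,\bar\gamma,\bar\theta)$ be any weak solution sharing initial data with the strong $(u,\gamma,\theta)$, and set $w=\bar u-u$, $\delta=\bar\gamma-\gamma$, $\phi=\bar\theta-\theta$. The strong regularity of Definition~\ref{def-strong} makes $(u(t),\gamma(t),\theta(t))$ admissible test functions in the weak formulation \eqref{eq:1f_w}--\eqref{eq:4f_w}, while $(\bar u,\bar\gamma,\bar\theta)$ lie in the test-function classes $H,E_3,E_1$ required by the strong formulation \eqref{eq:strong_1}--\eqref{eq:strong_3}. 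Summing the four resulting cross-identities together with the energy inequalities \eqref{ine-u:4f_w}--\eqref{ine-theta:4f_w} for the weak triple and the energy \emph{equalities} for the strong one (which hold as equalities thanks to its $H^2$-regularity), and discarding the diagonal pieces $((w\cdot\nabla)w,w)=0$ and the like via $\mathrm{div}\,w=0$, I reach an inequality
$$
E(t)+\int_0^t\mathcal D(s)\,ds\leq\int_0^t\bigl(|\mathcal N(s)|+|\mathcal K(s)|\bigr)\,ds,
$$
where $E(t):=\epsilon\|w(t)\|_2^2+\epsilon M\|\delta(t)\|_2^2+\|\phi(t)\|_2^2$, $\mathcal D$ is the full dissipation $2(1+K)\|\nabla w\|_2^2+2L\|\nabla\delta\|_2^2+2G\|\mathrm{div}\,\delta\|_2^2+8K\|\delta\|_2^2+2\|\nabla\phi\|_2^2$, $\mathcal N$ collects the surviving trilinears $\epsilon((w\cdot\nabla)u,w)+\epsilon M((w\cdot\nabla)\gamma,\delta)+((w\cdot\nabla)\theta,\phi)$, and $\mathcal K=4K(\mathrm{rot}\,\delta,w)+4K(\mathrm{rot}\,w,\delta)$ is the micropolar coupling residual.

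The trilinear piece $\mathcal N$ is absorbed by $\mathcal D$ through the Ladyzhenskaya-type bound
$$
|((w\cdot\nabla)u,w)|\leq C\|w\|_2^{1/2}\|\nabla w\|_2^{3/2}\|\nabla u\|_2\leq\tfrac{1+K}{2}\|\nabla w\|_2^2+C_K\|\nabla u\|_2^4\,\|w\|_2^2,
$$
with analogous bounds for $((w\cdot\nabla)\gamma,\delta)$ and $((w\cdot\nabla)\theta,\phi)$ using $\gamma\in L^\infty_{loc}(W_3)$ and $\theta\in L^\infty_{loc}(W_1)$; the local $L^1$-integrability of $\|\nabla u\|_2^4,\|\nabla\gamma\|_2^4,\|\nabla\theta\|_2^4$ comes from interpolating the strong-solution regularity $L^2_{loc}(H^2)\cap L^\infty_{loc}(H^1)$. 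The coupling $\mathcal K$ is handled via $(\mathrm{rot}\,\delta,w)=(\delta,\mathrm{rot}\,w)$, Young's inequality and the Poincar\'e bound $\pi^2\|\delta\|_2^2\leq\|\nabla\delta\|_2^2$; this is precisely where the standing assumption $L\geq\frac{16}{3\pi^2}K$ is essential, ensuring that $\mathcal K$ is dominated by $\mathcal D$ with strictly positive margin. What remains is $E(t)\leq\int_0^t h(s)E(s)\,ds$ with $h\in L^1_{loc}$ and $E(0)=0$, so Gronwall forces $E\equiv 0$ and yields uniqueness.

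Uniqueness in hand, the inclusion strong $\Rightarrow$ weak is finished as follows. The regularity in Definition~\ref{def-strong} strictly dominates that of Definition~\ref{def-weak-sol-evol}, and the weak identities \eqref{eq:1f_w}--\eqref{eq:4f_w} are obtained from \eqref{eq:strong_1}--\eqref{eq:strong_3} by integration by parts. The remaining Galerkin-limit requirement is verified indirectly: for any choice of $(u_0^n,\gamma_0^n,\theta_0^n)\to(u_0,\gamma_0,\theta_0)$ strongly in $H\times E_3\times E_1$, Lemma~\ref{lem:existence} produces a Galerkin-limit weak solution $(\tilde u,\tilde\gamma,\tilde\theta)$, which the weak-strong uniqueness just proved forces to coincide with $(u,\gamma,\theta)$. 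The principal obstacle throughout is the treatment of the micropolar coupling residual $\mathcal K$: it is precisely there that the assumption $L\geq\frac{16}{3\pi^2}K$ is used, and without such a structural bound on $L$ versus $K$ the energy estimate could not be closed; all other steps are essentially adaptations of the classical 3D Navier--Stokes weak-strong uniqueness argument.
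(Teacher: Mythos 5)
Your argument is the standard Serrin-type weak--strong uniqueness scheme and coincides in structure with the paper's proof: combine the energy inequality of the weak triple, the energy equality of the strong triple, and the two cross-pairings (justified by the duality $D(-P\Delta)'\times D(-P\Delta)$ etc.), so that only the off-diagonal trilinear remainders survive, then close by Gronwall. The one genuine difference is how the remainders are estimated: the paper bounds $((w\cdot\nabla)w,u)$ by $\|w\|_2\|\nabla w\|_2\|u\|_{L^\infty}$ and integrates $\|u\|_{L^\infty}^2$ in time via Agmon, using $u\in L^\infty_{loc}(V)\cap L^2_{loc}(D(-P\Delta))$, whereas you interpolate on the difference ($\|w\|_{L^4}\leq C\|w\|_2^{1/4}\|\nabla w\|_2^{3/4}$) and put $\|\nabla u\|_2^4$ into the Gronwall factor, which only requires $u\in L^\infty_{loc}(V)$. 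Both are valid; yours needs slightly less of the strong solution's regularity. Your final paragraph, deriving ``strong $\Rightarrow$ weak'' from existence (Lemma \ref{lem:existence}) plus the uniqueness just proved, correctly handles the Galerkin-limit clause of Definition \ref{def-weak-sol-evol}, a point the paper leaves implicit.

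One assertion is wrong and should be removed: the claim that the hypothesis $L\geq\frac{16}{3\pi^2}K$ is \emph{essential} for closing the estimate on the coupling residual $\mathcal K$. First, the lemma is stated in Section \ref{sec:intro}, before assumption (H) is introduced, so (H) is not available as a hypothesis; the paper's proof never invokes it. Second, it is not needed: after $(\mathrm{rot}\,\delta,w)=(\delta,\mathrm{rot}\,w)$ and Young's inequality, the coupling contributes $a\|\nabla w\|_2^2+C(a,K)\|\delta\|_2^2$; the gradient piece is absorbed by $\mathcal D$, and the remaining $C(a,K)\|\delta\|_2^2\leq C E(t)$ is a lower-order bilinear term that simply joins the Gronwall factor $h$ --- there is no need for $\mathcal K$ to be dominated by $\mathcal D$ with positive margin. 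The structural relation between $L$ and $K$ only becomes relevant for the enstrophy ($H^1$-level) estimates of Section \ref{Basic estimates}, where the coupling is tested against $-P\Delta u$ and $-\Delta\gamma$; it plays no role at the $L^2$ level of the uniqueness argument.
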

\begin{proof}
	The proof follows the lines of the weak-strong uniqueness proof of the Leray--Hopf weak solutions. First we observe that every weak solution given by Definition \ref{def-weak-sol-evol-galerkin} satisfies the regularity
	$$
	u \in L^2_{loc}([0,\infty);D(-P\Delta)'), \quad \gamma \in L^2_{loc}([0,\infty);D_3(-\Delta)'),\quad \textrm{and}\quad \theta \in L^2_{loc}([0,\infty);D_1(-\Delta)'). 
	$$
	Assume that $u_0 \in V, \gamma_0 \in V_3, \theta_0\in V_1$ and the triples $(u_1,\gamma_1, \theta_1)$ and $(u_2,\gamma_2, \theta_2)$ are, respectively, weak and strong solutions with these initial data. Denote $(v,\psi, \eta) = (u_1,\gamma_1, \theta_1) - (u_2,\gamma_2, \theta_2)$.
	Then
	$$
	\|v(t)\|^2_{L^2} = \|u_1(t)\|^2_{L^2} - \|u_2(t)\|^2_{L^2} - 2 (u_2(t),v(t)). 
	$$
	Now 
	$$
	(u_2(t),u_1(t)) = \|u_0\|_{L^2}^2 + \int_{0}^t \, ((u_2)_t(s), u_1(s)) + \langle (u_1)_t(s), u_2(s) \rangle_{D(-P\Delta)'\times D(-P\Delta)} ds,
	$$
	and 
	$$
	(u_2(t),u_2(t)) = \|u_0\|_{L^2}^2 + \int_{0}^t \, ((u_2)_t(s), u_2(s)) + \langle (u_2)_t(s), u_2(s) \rangle_{D(-P\Delta)'\times D(-P\Delta)} ds.
	$$
	This means that
	$$
	\|v(t)\|^2_{L^2} = \|u_1(t)\|^2_{L^2} - \|u_2(t)\|^2_{L^2} -2 \int_{0}^t \, ((u_2)_t(s), v(s)) + \langle v_t(s), u_2(s) \rangle_{D(-P\Delta)'\times D(-P\Delta)} ds.
	$$
	Now, \eqref{ine-u:4f_w} implies that
	$$
\|u_1(t)\|_2^2 + \frac{2(1+K)}{\epsilon}\int_{0}^t\|\nabla u_1(s)\|_2^2ds \leq \|u_0\|_2^2 +
	\frac{4K}{\epsilon}\int_{0}^t ({\rm rot}\, \gamma_1(s),u_1(s))ds + \frac{2{\rm Ra}}{\epsilon} \int_{0}^t(\theta_1(s),(u_1)_3(s))ds.
	$$
	On the other hand \eqref{eq:strong_1} implies the energy equation for the strong solution, namely
		$$
	\|u_2(t)\|_2^2 + \frac{2(1+K)}{\epsilon}\int_{0}^t\|\nabla u_2(s)\|_2^2ds = \|u_0\|_2^2 +
	\frac{4K}{\epsilon}\int_{0}^t ({\rm rot}\, \gamma_2(s),u_2(s))ds + \frac{2{\rm Ra}}{\epsilon} \int_{0}^t(\theta_2(s),(u_2)_3(s))ds.
	$$
	Testing \eqref{eq:strong_1} with $v(t)$ it follows that
\begin{align*}
	& \int_{0}^t \, ((u_2)_t(s), v(s)) \, ds = -\int_0^t((u_2(s)\cdot \nabla) u_2(s),v(s)){ds} - \frac{1+K}{\epsilon}\int_0^t(\nabla u_2(s), \nabla v(s))\, ds\\
	& \qquad \qquad  + \frac{2K}{\epsilon} \int_0^t({\rm rot}\, \gamma_2(s),v(s))\, ds + \frac{{\rm Ra}}{\epsilon} \int_0^t (\theta_2(s),v_3(s))\, ds.
	\end{align*}
	Finally testing \eqref{eq:strong_1} and \eqref{eq:1f_w} with $u_2(t)$ and subtracting the two relations it follows that
\begin{align*}
&\int_0^t\langle v_t(s),u_2(s)\rangle_{D(-P\Delta)'\times D(-P\Delta)} \, ds= - \int_0^t((u_1(t)\cdot \nabla) u_1(s),u_2(s))\, ds - \frac{(1+K)}{\epsilon}\int_0^t(\nabla v(s),\nabla u_2(s))\, ds \\
& \qquad \qquad + \frac{2K}{\epsilon} \int_0^t({\rm rot}\, \psi(s),u_2(s))\, ds + \frac{{\rm Ra}}{\epsilon} \int_0^t(\eta(s),(u_2)_3(s))\, ds. 
	\end{align*}
	Summarizing we deduce after some obvious transformations
\begin{align*}
&
	\|v(t)\|^2_{2} \leq - \frac{2(1+K)}{\epsilon}\int_{0}^t\|\nabla v(s)\|_2^2ds +
	\frac{4K}{\epsilon}\int_{0}^t ({\rm rot}\, \psi(s), v_3(s))ds\\
	& \qquad  + \frac{2{\rm Ra}}{\epsilon} \int_{0}^t(\eta(s),v_3(s))ds	
	  + 2\int_0^t((v(s)\cdot \nabla) v(s),u_2(s)){ds}. 
	\end{align*}
	But
\begin{align*}
&
	2\int_0^t((v(s)\cdot \nabla) v(s),u_2(s)){ds}\leq 2 \int_0^t\|v(s)\|_{L^2}\|\nabla v(s)\|_{L^2}\|u_2(s)\|_{L^\infty}{ds}\\
	&\qquad  \leq \delta \int_0^t \|\nabla v(s)\|^2_{L^2} \, ds + C(\delta) \int_0^t\|v(s)\|^2_{L^2}\|u_2(s)\|^2_{L^\infty}{ds},
	\end{align*}
	where the constant $\delta$ is arbitrary. Hence
	\begin{align*}
	&
	\|v(t)\|^2_{2} \leq C\left(\int_{0}^t \|\psi(s)\|_2^2ds + \int_{0}^t\|\eta(s)\|_2^2ds	
	+ \int_0^t\|v(s)\|^2_{2}\|u_2(s)\|^2_{L^\infty}{ds}\right). 
	\end{align*}
	Proceeding in a similar way with the remaining two equations we arrive at the bounds
	\begin{align*}
	&
	\|\psi(t)\|^2_{2} \leq C\left(\int_{0}^t \|v(s)\|_2^2ds 
	+ \int_0^t\|v(s)\|^2_{2}\|\gamma_2(s)\|^2_{L^\infty}{ds}\right),\\
	&\|\eta(t)\|^2_{2} \leq C\left(\int_{0}^t \|v(s)\|_2^2ds
	+ \int_0^t\|v(s)\|^2_{2}\|\theta_2(s)\|^2_{L^\infty}{ds}\right).
	\end{align*}
	Adding the three inequalities and denoting $F(t) = \|v(t)\|^2_{2} + \|\psi(t)\|^2_{2} + \|\eta(t)\|^2_{2}$ it follows that
	$$
	F(t) \leq C\left(\int_{0}^t F(s) ds +  \int_0^tF(s) \left( \|u_2(s)\|^2_{L^\infty}+\|\gamma_2(s)\|^2_{L^\infty}+ \|\theta_2(s)\|^2_{L^\infty}\right){ds}\right).
	$$
	Since, by the Agmon inequality
	$$
	\int_0^t\|u_2(s)\|^2_{L^\infty}\, ds \leq C\int_0^t\|\nabla u_2(s)\|_{2}\| u_2(s)\|_{D(-P\Delta)}\, ds \leq C \|u_2\|_{L^\infty(0,t;V)} \|u_2\|_{L^2(0,t;D(-P\Delta))},
	$$
	we obtain the assertion by the Gronwall lemma.
\end{proof}
%\noindent \textbf{Weak solution to the problem with infinite Prandtl number.} Finally, we define the weak solution to the problem with infinite Prandtl number governed by \eqref{eq:1finf}--\eqref{eq:4finf}. The definition is as follows
%\begin{definition} \label{def-weak-sol-inf}
%We assume that 
%$$
%\theta_0 \in E, 
%$$
%and we look for 
%\begin{itemize}
%\item $u\in L^\infty(0,\infty;V)$,
%
%\item $\gamma\in L^\infty(0,\infty;W_3)$,
%
%\item $\theta\in L^2_{loc}(0,\infty;W_1)\cap C(0,\infty;E_1)$,  
%with $\,\,\,\theta_t\in L^{2}_{loc}(0,\infty;W_1')$, $\,\,\,\theta(0) = \theta_0$,
%\end{itemize}
%such that for all test functions $v\in V$, $\xi \in W_3$, $\eta\in W_1$ and a.e. $t>0$ there holds 
% \begin{align}
% &  (1+K)(\nabla u(t),\nabla v) = 2K ({\rm rot}\, \gamma(t),v) + {\rm Ra} (\theta(t),v_3),\label{eq:1f_winf} \\
% & L(\nabla \gamma(t),\nabla \xi) +  G({\rm div}\,\gamma, {\rm div}\,\xi) +
% 4 K (\gamma(t),\xi) = 2K ({\rm rot}\, u(t),\xi),\label{eq:3f_winf}\\
% & \langle\theta_t(t), \eta\rangle  + (u(t)\cdot \nabla \theta(t),\eta) + (\nabla  \theta(t),\nabla \eta)  = (u_3(t),\eta).\label{eq:4f_winf} 
% \end{align}
%
%\end{definition}
%
%We skip the proof of the next lemma, which follows, for example, by an appropriate fixed point argument. 
%
%\begin{lemma}
%	For every $\theta_0\in E$ the infinite Prandtl number model has a unique solution. 
%\end{lemma}
\section{A priori estimates} \label{Basic estimates}

In this section we derive some estimates which will be used several times in the rest of the article. We introduce the new constant  $D  = \max\left\{ 2, \frac{M}{L}\right\}$ which will be useful in the estimates of this section.   The constants denoted by small $c$, such as $c_1, c_2, \ldots$ are universal and do not depend on the data of the problem. All dependence on the problem data is explicitly written in the estimates apart from constants $T_1, T_2, \ldots$, times needed to enter some absorbing balls. These times can depend on the constants present in the formulation of the problem, as well as on the initial condition and it is not always written explicitly. Note, however, that always times $T_1, T_2, \ldots$ are possible to be chosen uniformly for the initial data in the bounded sets $(u_0, \gamma_0, \theta_0) \in H\times E_3 \times E_1$.

\subsection{Maximum principle for temperature.} We start from Stampacchia type maximum principle estimates for temperature. Note that Wang \cite{Wang2007-Asymptotic} includes these estimates in his definition of ''suitable'' weak solutions. As we require our weak solution to be limits of approximative Galerkin problems, we derive these estimates as a consequence of our definition.  

Note that in the next lemma the bounds hold only for almost every $t>0$. This is the consequence of the fact that the limit of truncations of weakly convergent sequences does not have to be equal to the truncation of the limit. Hence, the estimate is established only in those time points in which the approximative sequences converge strongly.  
\begin{lemma}\label{lem_max_ae}
	Suppose that the triple $(u,\gamma,\theta)$ is the weak solution given by Definition \ref{def-weak-sol-evol}. Then for almost every $t > 0$ there hold the inequalities
	\begin{align}\label{ine-t-1:4f_w} 
	&\|(T-1)^+(t)\|_{L^2} \leq \|(T-1)^+(0)\|_{L^2}e^{-\lambda_1 t},\\
	&\label{ine-t-2:4f_w}
	\|T^-(t)\|_{L^2} \leq \|T^-(0)\|_{L^2}e^{-\lambda_1 t},
	\end{align}
	where $T(x_1,x_2,x_3,t)=\theta(x_1,x_2,x_3,t) + 1-x_3$.
\end{lemma}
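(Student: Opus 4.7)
The natural approach is to work at the Galerkin level of Definition~\ref{def-weak-sol-evol-galerkin}, where the temperature equation is \emph{not} discretized, establish the exponential decay for the Galerkin temperature $\theta^n$, and then pass to the limit using the strong convergence supplied by Lemma~\ref{lemma:convergence}. Introduce $T^n(x,t) = \theta^n(x,t) + 1 - x_3$. Since $u^n\cdot\nabla(1-x_3) = -u^n_3$ and $\Delta(1-x_3)=0$, equation \eqref{eq:4f_w_galer} yields
$$
\langle T^n_t,\eta\rangle + (u^n\cdot\nabla T^n,\eta) + (\nabla T^n,\nabla\eta) = 0\quad\text{for every}\ \eta\in W_1.
$$

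The plan is to test this relation with $\eta = (T^n-1)^+$, which is admissible: on $\Gamma_B$ one has $\theta^n = 0$ and thus $T^n=1$, hence $(T^n-1)^+=0$; on $\Gamma_T$ one has $T^n = 0$, hence $T^n-1=-1$ and again $(T^n-1)^+=0$; lateral periodicity and $H^1$ regularity are clear. Using the standard Lions--Magenes chain rule for the positive part (applicable since $T^n-1\in L^2_{loc}([0,\infty);H^1)$ with $(T^n-1)_t = \theta^n_t\in L^2_{loc}([0,\infty);W_1')$), the incompressibility $\mathrm{div}\,u^n=0$ (which kills $(u^n\cdot\nabla T^n,(T^n-1)^+)$ after rewriting it as $\tfrac12(u^n,\nabla((T^n-1)^+)^2)$), and the identity $(\nabla T^n,\nabla(T^n-1)^+) = \|\nabla(T^n-1)^+\|^2_{L^2}$, I obtain
$$
\frac{1}{2}\frac{d}{dt}\|(T^n-1)^+\|_{L^2}^2 + \|\nabla(T^n-1)^+\|_{L^2}^2 = 0.
$$
The Poincar\'e inequality $\lambda_1\|w\|_{L^2}^2\leq\|\nabla w\|_{L^2}^2$ on $W_1$ (with $\lambda_1=\pi^2$) and Gronwall then give $\|(T^n-1)^+(t)\|_{L^2}\leq\|(T^n-1)^+(0)\|_{L^2}e^{-\lambda_1 t}$ for every $t\geq 0$.

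The estimate for $(T^n)^-$ proceeds identically, testing with $-(T^n)^-\in W_1$, which is admissible because $T^n\geq 0$ on $\Gamma_B\cup\Gamma_T$ (values $1$ and $0$ respectively), hence $(T^n)^-$ vanishes there. The convective and diffusive terms are handled exactly as above, yielding the analogous differential inequality and the stated exponential bound.

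To conclude, pass to the limit $n\to\infty$ along the subsequence $n_k$ from Lemma~\ref{lemma:convergence}: strong convergence $\theta^{n_k}\to\theta$ in $L^2_{loc}([0,\infty);E_1)$ produces, after extracting a further subsequence, pointwise strong convergence in $L^2$ for a.e.\ $t>0$; at $t=0$ strong $L^2$-convergence holds by the definition of admissible initial data. Since the maps $v\mapsto(v-1)^+$ and $v\mapsto v^-$ are $1$-Lipschitz on $L^2$, the Galerkin bounds pass to the limit for a.e.\ $t>0$, yielding \eqref{ine-t-1:4f_w} and \eqref{ine-t-2:4f_w}. The only really delicate point, which I anticipate as the main obstacle, is the chain rule argument for $\frac{d}{dt}\|(T^n-1)^+\|_{L^2}^2$; but with the regularity $T^n\in L^2_{loc}(H^1)$ and $T^n_t\in L^2_{loc}(W_1')$ this reduces to a known fact, and the restriction of the conclusion to almost every $t$ (rather than every $t$) precisely reflects the loss from passing to a subsequence of pointwise convergence.
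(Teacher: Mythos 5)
Your proposal is correct and follows essentially the same route as the paper: derive the homogeneous advection--diffusion equation for $T^n$ at the Galerkin level (where $\theta$ is not discretized), test with $(T^n-1)^+$ and the negative part, use incompressibility, Poincar\'e and Gronwall to get the decay for $T^n$, and then pass to the limit at the a.e.\ times of strong $L^2$ convergence furnished by Lemma~\ref{lemma:convergence}, using that truncation is $1$-Lipschitz on $L^2$. The extra details you supply (admissibility of the test functions on $\Gamma_B\cup\Gamma_T$ and the chain-rule justification) are consistent with, and slightly more explicit than, the paper's argument.
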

\begin{proof}
	If $\theta^n$ satisfies \eqref{eq:4f_w_galer} then $T^n(x_1,x_2,x_3,t) = \theta^n(x_1,x_2,x_3,t) + 1-x_3$ satisfies
	$$
	\langle T^n_t(t), \eta \rangle + (u^n(t)\cdot \nabla T^n(t),\eta) + (\nabla  T^n(t),\nabla \eta)  = 0,
	$$
	for every $\eta\in W_1$ and a.e. $t\in (0,T)$. We can take $\eta = (T^n-1)^+$, whence
	$$
	\frac{1}{2}\frac{d}{dt}\|(T^n(t)-1)^+\|_{L^2}^2 + \|\nabla (T^n(t)-1)^+\|^2_{L^2} = 0.
	$$
	Integrating, it follows that
	$$
	\|(T^n(t)-1)^+\|_{L^2}^2 + 2\int_{0}^t\|\nabla (T^n(s)-1)^+\|^2_{L^2}\, ds = \|(T^n(0)-1)^+\|_{L^2}^2.
	$$
	The Poincar\'{e} inequality implies that
		$$
	\|(T^n(t)-1)^+\|_{L^2}^2 + 2\lambda_1\int_{t_0}^t\|\nabla (T^n(s)-1)^+\|^2_{L^2}\, ds \leq  \|(T^n(0)-1)^+\|_{L^2}^2.
	$$
	Now, the Gronwall lemma implies that
	 $$
	 \|(T^n(t)-1)^+\|_{L^2} \leq \|(T^n(0)-1)^+\|_{L^2} e^{-\lambda_1 t}.
	 $$
	The convergence $\|(T^n(t)-1)^+\|_{L^2} \to \|(T(t)-1)^+\|_{L^2}$ holds for a.e. $t > 0$ and for $t=0$, cf. Lemma \ref{lemma:convergence} and hence we can pass to the limit in those time points to get \eqref{ine-t-1:4f_w}. Assertion \eqref{ine-t-2:4f_w} follows by taking $(T^n)^-$ as the test function in place of $(T^n-1)^+$.  
\end{proof}

\begin{lemma}\label{lem:max_temp}
	If $(u,\gamma,\theta)$ is a weak solution of system \eqref{eq:1f}--\eqref{eq:4f} in the sense of Definition~\ref{def-weak-sol-evol} 
then
	\begin{equation}\label{eq:max_theta}
		\|\theta(t)\|_{2} \leq \left(\frac{\sqrt{3}}{3}+1\right)\sqrt{A} + 2\|T_0\|_{2} e^{-\lambda_1 t} \leq 2\sqrt{A} + 2\left(\|\theta_0\|_{2}+\sqrt{A}\right) e^{-\lambda_1 t}.
	\end{equation}	
	where $A = L_{x_1}/h\cdot L_{x_2}/h$.
In particular $\|T(t)\|_{2}$ and $\|\theta(t)\|_{2}$ are bounded uniformly in time by the quantity which depends only on the initial data $\theta_0$ and the geometry of the domain 
$\Omega$.
\end{lemma}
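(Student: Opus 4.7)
The plan is to reconstruct $T$ from the two truncations already controlled in Lemma~\ref{lem_max_ae} together with a trivially bounded middle piece. Define $T_1 := \min(\max(T,0),1) \in [0,1]$, the truncation of $T$ into the unit interval. A case analysis on $T\le 0$, $0\le T\le 1$, $T\ge 1$ shows the pointwise identity
$$
T = T_1 + (T-1)^+ - T^-.
$$
Since $|T_1|\le 1$ a.e.\ and the scaled domain has volume $|\Omega|=A\cdot 1=A$, we get $\|T_1(t)\|_2\le\sqrt{A}$.

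Combining this with \eqref{ine-t-1:4f_w} and \eqref{ine-t-2:4f_w} together with the trivial initial bounds $\|(T-1)^+(0)\|_2\le \|T_0\|_2$ and $\|T^-(0)\|_2\le \|T_0\|_2$, I obtain for a.e.\ $t>0$
$$
\|T(t)\|_2 \le \sqrt{A} + 2\|T_0\|_2\, e^{-\lambda_1 t}.
$$
Writing $\theta = T-\tau$ with $\tau(x_3)=1-x_3$ and computing $\|\tau\|_2^2 = A\int_0^1(1-x_3)^2\,dx_3 = A/3$, the triangle inequality gives
$$
\|\theta(t)\|_2 \le \|T(t)\|_2 + \|\tau\|_2 \le \left(\tfrac{\sqrt{3}}{3}+1\right)\sqrt{A} + 2\|T_0\|_2\, e^{-\lambda_1 t}
$$
for a.e.\ $t>0$.

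To pass from a.e.\ $t$ to every $t\ge 0$, I use $\theta\in C_w([0,\infty);E_1)$, so $t\mapsto\|\theta(t)\|_2$ is sequentially weakly lower semicontinuous; taking a sequence $t_n\to t$ of "good" times along which the bound holds then delivers the inequality at every $t$. The second, cruder, inequality in the statement follows immediately from $\|T_0\|_2\le \|\theta_0\|_2+\|\tau\|_2\le \|\theta_0\|_2+\sqrt{A}$ together with $\frac{\sqrt{3}}{3}+1\le 2$.

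No step is delicate: the exponential decay of the two "out-of-range" pieces is already encoded in Lemma~\ref{lem_max_ae}, so the only mild subtlety is the extension from a.e.\ times to all $t\ge 0$ via the weak continuity of $\theta$ in $E_1$. The argument is essentially algebraic once the decomposition $T=T_1+(T-1)^+-T^-$ is in place.
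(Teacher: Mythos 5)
Your proof is correct and follows essentially the same route as the paper: decompose $T$ into its truncation onto $[0,1]$ plus $(T-1)^+ - T^-$, bound the middle piece by $\sqrt{A}$ and the background $\tau$ by $\sqrt{A/3}$, invoke Lemma~\ref{lem_max_ae} for the decaying pieces, and upgrade from a.e.\ $t$ to all $t$ via $\theta\in C_w([0,\infty);E_1)$ and weak lower semicontinuity of the norm. No gaps.
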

\begin{proof}
We decompose $T(t) = T_1(t) + T_2(t)$, where $T_1(t) = (T-1)^+(t) - T^-(t)$ and $T_2(t)$ has values in $[0,1]$ for a.e. $x\in \Omega$. So, 
$$
\theta(t) = -(1-x_3) + (T-1)^+(t) - T^-(t) + T_2(t).
$$
It follows that 
$$
\|\theta(t)\|_2 \leq \|T_2(t)\|_2 + \|(1-x_3)\|_2 + \|(T-1)^+(t)\|_2 + \|T^-(t)\|_2 \leq \left(\frac{\sqrt{3}}{3}+1\right)\sqrt{A} + \|(T-1)^+(t)\|_2 + \|T^-(t)\|_2.
$$
Hence, by Lemma \ref{lem_max_ae} for almost every $t > 0$ it holds
$$
\|\theta(t)\|_2 \leq \left(\frac{\sqrt{3}}{3}+1\right)\sqrt{A} + \left(\|(T-1)^+(0)\|_{L^2} + \|T^-(0)\|_{L^2}\right)e^{-\lambda_1 t} \leq \left(\frac{\sqrt{3}}{3}+1\right)\sqrt{A} + 2\|T_0\|_{L^2}e^{-\lambda_1 t}.
$$
Now, the fact that $\theta \in C_w([0,\infty);H)$ together with the sequential weak lower semicontinuity of the norm implies that this inequality holds actually for every $t\geq 0$. 
\end{proof}
%

%\begin{lemma}\label{lem:max_temp_2}
%	If $(u,\gamma,\theta)$ is a solution of system 
%	\eqref{eq:1finf}--\eqref{eq:4finf} in the sense of Definition~\ref{def-weak-sol-inf} and 
%	$T(x_1,x_2,t)=\theta(x_1,x_2,x_3,t) + 1-x_3$
%	then
%	$$
%	\|T(t)\|_{2} \leq \sqrt{A} + 2\|T_0\|_{2} e^{-\lambda_1 t},
%	$$ 
%	where $A = L_{x_1}/h\cdot L_{x_2}/h$, and
%	\begin{equation}\label{eq:max_theta_2}
%	\|\theta(t)\|_{2} \leq 2\sqrt{A} + 2\|T_0\|_{2} e^{-\lambda_1 t} \leq 2\sqrt{A} + 2(\|\theta_0\|_{2}+\sqrt{A}) e^{-\lambda_1 t}.
%	\end{equation} 
%	In particular $\|T(t)\|_{2}$ and $\|\theta(t)\|_{2}$ are bounded uniformly in time by the quantity which depends only on the initial data $\theta_0$ and the geometry of the domain 
%	$\Omega$.
%\end{lemma}
%\begin{proof}
%	The proof follows the lines of the proof of Lemmas \ref{lem_max_ae} and  \ref{lem:max_temp} so we skip it. Note that the approximation step is now not needed and hence the assertions \eqref{ine-t-1:4f_w} and \eqref{ine-t-2:4f_w} are valid for every $t\geq 0$. 
%\end{proof}

\subsection{Energy estimates.} In this section we derive the estimates which follow from the energy relations. 

\begin{lemma}\label{le:inf-est-1}
If $(u,\gamma,\theta)$ is a weak solution of system \eqref{eq:1f}--\eqref{eq:4f} in the sense of Definition~\ref{def-weak-sol-evol} then the following bounds hold for every $t\geq 0$: if $D\epsilon \neq 1$
\begin{align}\label{eq:u-omega-l2-est}
	\|u(t)\|_2^2 + M\|\gamma(t)\|_2^2 \leq  \left(\|u_0\|^2 + M\|\gamma_0\|_2^2\right)e^{-\frac{2\lambda_1}{D\epsilon}t}  + \frac{8D{\rm Ra}^2(\|\theta_0\|_2+\sqrt{A})^2}{|D\epsilon-1|} e^{-2\lambda_1\min\left\{1,\frac{1}{D\epsilon}\right\}t} +  8AD{\rm Ra}^2,
\end{align}
and if $D\epsilon = 1$ then
\begin{align}\label{eq:u-omega-l2-est-2}
\|u(t)\|_2^2 + M\|\gamma(t)\|_2^2 \leq \left(\|u_0\|^2 + M\|\gamma_0\|_2^2+\frac{16{\rm Ra}^2 (\|\theta_0\|_2+\sqrt{A})^2}{\epsilon} \right)   e^{-\lambda_1 t} + 8AD{\rm Ra}^2.
\end{align}
\end{lemma}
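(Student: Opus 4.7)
The plan is to combine the two energy inequalities \eqref{ine-u:4f_w} and \eqref{ine-omega:4f_w}, absorb the cross-coupling $4K\int(\mathrm{rot}\,\gamma,u)+4K\int(\mathrm{rot}\,u,\gamma)$ into the gradient dissipation by Young's inequality (this is where the standing assumption $L\geq\frac{16}{3\pi^2}K$ is used), convert the remaining $H^1$-dissipation into an $L^2$-decay via the Poincaré inequality, and finally feed in the already proven temperature bound \eqref{eq:max_theta} together with a Gronwall / integrating-factor argument, the resonant case $D\epsilon=1$ being handled separately.

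First I would add \eqref{ine-u:4f_w} and \eqref{ine-omega:4f_w}, drop the nonnegative $G$-term, and use integration by parts $(\mathrm{rot}\,\gamma,u)=(\gamma,\mathrm{rot}\,u)$ to collapse both coupling terms into $8K\int_{t_0}^t(u,\mathrm{rot}\,\gamma)\,ds$. I would then bound $|8K(u,\mathrm{rot}\,\gamma)|\leq 8K\|u\|_2\|\nabla\gamma\|_2$ and split via Young's inequality between the $2(1+K)\|\nabla u\|_2^2$ and $2L\|\nabla\gamma\|_2^2$ dissipation terms, using $\|u\|_2\leq\pi^{-1}\|\nabla u\|_2$ on the $\|u\|_2$ factor. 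The constants in the split are calibrated so that when $L\geq\frac{16}{3\pi^2}K$, a definite fraction of each dissipation term survives---the asymmetric factor $\tfrac{16}{3\pi^2}$ reflects the fact that the coupling must be distributed unevenly between the two gradient terms (with weight tilted toward the $L$-term), while the Poincaré constant $\pi^2$ enters through the upgrade $\|u\|_2\mapsto\pi^{-1}\|\nabla u\|_2$.

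After absorption I apply Poincaré inequalities $\pi^2\|u\|_2^2\leq\|\nabla u\|_2^2$ and $\pi^2M\|\gamma\|_2^2\leq M\|\nabla\gamma\|_2^2$ to replace the surviving gradient dissipation by a multiple of $E(t):=\|u(t)\|_2^2+M\|\gamma(t)\|_2^2$; comparing the two decay coefficients shows that the worse of them is exactly $2\lambda_1/(D\epsilon)$ with $D=\max\{2,M/L\}$. The buoyancy forcing is controlled by $2\mathrm{Ra}|(\theta,u_3)|\leq\tfrac{\lambda_1}{D}\|u\|_2^2+\tfrac{D\mathrm{Ra}^2}{\lambda_1}\|\theta\|_2^2$, with the $\|u\|_2^2$ piece absorbed into the $L^2$-decay. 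Dividing by $\epsilon$ I arrive at an integral inequality of the form
\[
E(t)+\frac{2\lambda_1}{D\epsilon}\int_{t_0}^t E(s)\,ds\leq E(t_0)+\frac{C}{\epsilon}\int_{t_0}^t\|\theta(s)\|_2^2\,ds,
\]
into which I substitute the bound $\|\theta(s)\|_2^2\leq 8A+8(\|\theta_0\|_2+\sqrt{A})^2\,e^{-2\lambda_1 s}$ implied by \eqref{eq:max_theta}. The integrating factor $e^{2\lambda_1 t/(D\epsilon)}$ produces two decaying exponentials, $e^{-2\lambda_1 t/(D\epsilon)}$ from the homogeneous part and $e^{-2\lambda_1\min\{1,1/(D\epsilon)\}t}$ from the forcing, with coefficient proportional to $1/|D\epsilon-1|$; the time-independent piece $8A$ integrates to $8AD\mathrm{Ra}^2$ and contributes the asymptotic absorbing ball. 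When $D\epsilon=1$ the two rates coincide, the integration of $e^{-2\lambda_1 s}\cdot e^{2\lambda_1 s}=1$ yields a linear factor that is then dominated by a single slower exponential $e^{-\lambda_1 t}$ with a larger prefactor, giving \eqref{eq:u-omega-l2-est-2}.

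The main obstacle is Step~two: the sharp calibration of the Young split, since the dissipation coefficients $2(1+K)$ and $2L$ are genuinely asymmetric and the coupling term scales like $K$ (not $\sqrt{KL}$), so the absorption is tight exactly at the threshold $L=\frac{16}{3\pi^2}K$. Once this balance is carried out, the rest of the argument is a routine integrating-factor computation; the only nontrivial bookkeeping is the case split $D\epsilon\ne 1$ versus $D\epsilon=1$ and the verification that all exponentials appearing on the right-hand side can be majorised in the single compact form written in \eqref{eq:u-omega-l2-est}.
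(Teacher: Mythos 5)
Your overall architecture (add the two energy inequalities, kill the coupling, Poincar\'{e}, Young on the buoyancy term, integrating factor with the case split $D\epsilon\neq 1$ versus $D\epsilon=1$) matches the paper from the buoyancy step onward, but the crucial step --- how the cross-coupling $4K\int(\mathrm{rot}\,\gamma,u)+4K\int(\mathrm{rot}\,u,\gamma)$ is removed --- is wrong in a way that matters. The lemma assumes nothing about $K$ and $L$; it is stated for every weak solution, before hypothesis (H) is ever introduced. The paper does not absorb the coupling into the gradient dissipation at all. Instead it exploits the zeroth-order damping $8K\int\|\gamma\|_2^2$ coming from \eqref{ine-omega:4f_w} (a term your accounting omits) together with the identity $\|\mathrm{rot}\,u\|_2=\|\nabla u\|_2$ for $u\in V$, to write
\begin{equation*}
2(1+K)\|\nabla u\|_2^2+8K\|\gamma\|_2^2-8K(\mathrm{rot}\,u,\gamma)=2\|\nabla u\|_2^2+2K\|2\gamma-\mathrm{rot}\,u\|_2^2 .
\end{equation*}
The complete square is nonnegative and is simply dropped, so the full coefficients $2$ and $2L$ of the gradient dissipation survive intact, with no condition relating $K$ and $L$.

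Your Young-inequality route fails on three counts. First, bounding $8K(u,\mathrm{rot}\,\gamma)\leq 8K\pi^{-1}\|\nabla u\|_2\|\nabla\gamma\|_2$ and splitting between $2(1+K)\|\nabla u\|_2^2$ and $2L\|\nabla\gamma\|_2^2$ requires $L\gtrsim K^2/\pi^2$, i.e.\ a condition \emph{quadratic} in $K$ --- the linear threshold $L\geq\frac{16}{3\pi^2}K$ does not calibrate this split (that constant actually originates in the enstrophy estimate of Lemma \ref{lem:enstrophy}, where a term $\frac{4K^2}{L}\|\nabla u\|_2^2$ is absorbed by $\frac{3K}{4}\|P\Delta u\|_2^2\geq\frac{3K\pi^2}{4}\|\nabla u\|_2^2$). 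Second, even under such a condition, the absorption necessarily consumes a $K$- and $L$-dependent fraction of the dissipation, so the surviving decay rate would be strictly smaller than $2\lambda_1/(D\epsilon)$ and the prefactors $8D\mathrm{Ra}^2(\cdot)$ and $8AD\mathrm{Ra}^2$ of \eqref{eq:u-omega-l2-est} would not be recovered; you would prove a weaker, conditional variant of the lemma. Third, importing (H) here would propagate an unnecessary hypothesis into every downstream use of this lemma (e.g.\ Lemma \ref{lem:max_temp} and the $K\to 0$ analysis). Replace your Step two by the complete-square identity above and the rest of your computation goes through as written.
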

\begin{proof} 
	Adding \eqref{ine-u:4f_w} and \eqref{ine-omega:4f_w} it follows that
\begin{align*}
& \|u(t)\|_2^2 + M\|\gamma(t)\|_2^2 + \frac{2L}{\epsilon}\int_{t_0}^t\|\nabla \gamma(s)\|_2^2ds +  \frac{2}{\epsilon}\int_{t_0}^t\|\nabla u(s)\|_2^2ds \\
&\quad  + \frac{2K}{\epsilon}\int_{t_0}^t \|2\gamma(s)-{\rm rot}\, u(s)\|_2^2\, ds  \leq \frac{2{\rm Ra}}{\epsilon} \int_{t_0}^t(\theta(s),u_3(s))ds + \|u(t_0)\|_2^2+M\|\gamma(t_0)\|_2^2.
	\end{align*}
	Dropping the term with $K$ and using the Poincar\'{e} inequality we obtain
	\begin{align*}
	& \|u(t)\|_2^2 + M\|\gamma(t)\|_2^2 + \frac{2L}{\epsilon}\int_{t_0}^t\|\nabla \gamma(s)\|_2^2ds +  \frac{2}{\epsilon}\int_{t_0}^t\|\nabla u(s)\|_2^2ds \\
	&\quad   \leq \frac{2{\rm Ra}\sqrt{\lambda_1}}{\epsilon} \int_{t_0}^t \|\theta(s)\|_2 \|\nabla u(t)\| ds + \|u(t_0)\|_2^2+M\|\gamma(t_0)\|_2^2.
	\end{align*}
	We easily deduce
	\begin{align*}
	& \|u(t)\|_2^2 + M\|\gamma(t)\|_2^2 + \frac{2L}{\epsilon}\int_{t_0}^t\|\nabla \gamma(s)\|_2^2ds +  \frac{1}{\epsilon}\int_{t_0}^t\|\nabla u(s)\|_2^2ds \\
	&\quad   \leq \frac{2{\rm Ra}^2 \lambda_1}{\epsilon} \int_{t_0}^t \|\theta(s)\|^2_2  ds + \|u(t_0)\|_2^2+M\|\gamma(t_0)\|_2^2.
	\end{align*}
	The Poincar\'{e} inequality now implies
	\begin{align*}
	& \|u(t)\|_2^2 + M\|\gamma(t)\|_2^2 + \frac{2L\lambda_1}{\epsilon}\int_{t_0}^t\|\gamma(s)\|_2^2ds +  \frac{\lambda_1}{\epsilon}\int_{t_0}^t\|u(s)\|_2^2ds \\
	&\quad   \leq \frac{2{\rm Ra}^2 \lambda_1}{\epsilon} \int_{t_0}^t \|\theta(s)\|^2_2  ds + \|u(t_0)\|_2^2+M\|\gamma(t_0)\|_2^2.
	\end{align*}
	Using the previously defined constant $D$, we obtain 
		\begin{align*}
	& \|u(t)\|_2^2 + M\|\gamma(t)\|_2^2 + \frac{2\lambda_1}{D\epsilon}\int_{t_0}^t\|u(s)\|_2^2 + M\|\gamma(s)\|_2^2ds \\
	&\qquad  \leq \frac{8{\rm Ra}^2 \lambda_1}{\epsilon} \int_{t_0}^t \left(\sqrt{A} + \left(\|\theta_0\|_{2}+\sqrt{A}\right) e^{-\lambda_1 s}\right)^2  ds + \|u(t_0)\|_2^2+M\|\gamma(t_0)\|_2^2.
	\end{align*}
	A direct computation leads to
	\begin{align*}
	& \|u(t)\|_2^2 + M\|\gamma(t)\|_2^2 + \frac{2\lambda_1}{D\epsilon}\int_{t_0}^t\|u(s)\|_2^2 + M\|\gamma(s)\|_2^2ds \\
	&\qquad  \leq \frac{8{\rm Ra}^2 \lambda_1}{\epsilon} \left(2A(t-t_0)+\frac{(\|\theta_0\|_2+\sqrt{A})^2}{\lambda_1}(e^{-2\lambda_1 t_0}-e^{-2\lambda_1 t})\right) + \|u(t_0)\|_2^2+M\|\gamma(t_0)\|_2^2.
	\end{align*}
	Suppose that $D\epsilon \neq 1$. Define 
	$$
	f(t) =  -8{\rm Ra}^2D \left(\frac{(\|\theta_0\|_2+\sqrt{A})^2}{1-D\epsilon}e^{-2\lambda_1t} + A\right).
	$$ 
	Then
	\begin{align}
& \|u(t)\|_2^2 + M\|\gamma(t)\|_2^2 + f(t) + \frac{2\lambda_1}{D\epsilon}\int_{t_0}^t\|u(s)\|_2^2 + M\|\gamma(s)\|_2^2 + f(s) ds \label{est:imp}\\
&\qquad  \leq f(t_0) + \|u(t_0)\|_2^2+M\|\gamma(t_0)\|_2^2.\nonumber
\end{align}
	Using \cite[Lemma 7.2]{Ball} it follows that
	$$
	\|u(t)\|_2^2 + M\|\gamma(t)\|_2^2 + f(t) \leq e^{-\frac{2\lambda_1}{D\epsilon}t} (\|u_0\|^2 + M\|\gamma_0\|_2^2 + f(0)).
	$$
	Hence if $D\epsilon < 1$
	$$
	\|u(t)\|_2^2 + M\|\gamma(t)\|_2^2 \leq  \left(\|u_0\|^2 + M\|\gamma_0\|_2^2 \right)e^{-\frac{2\lambda_1}{D\epsilon}t} + \frac{8D{\rm Ra}^2(\|\theta_0\|_2+\sqrt{A})^2}{1-D\epsilon}e^{-2\lambda_1t} + 8AD{\rm Ra}^2 .
	$$
	If $D\epsilon > 1$
		$$
	\|u(t)\|_2^2 + M\|\gamma(t)\|_2^2 \leq  \left(\|u_0\|^2 + M\|\gamma_0\|_2^2 + \frac{8D{\rm Ra}^2(\|\theta_0\|_2+\sqrt{A})^2}{D\epsilon-1} \right)e^{-\frac{2\lambda_1}{D\epsilon}t} +  8AD{\rm Ra}^2 .
	$$
		and the assertion for $D\epsilon \neq 1$ is proved. It remains to verify the case $D\epsilon = 1$. Then
		\begin{align*}
	& \|u(t)\|_2^2 + M\|\gamma(t)\|_2^2 + 2\lambda_1\int_{t_0}^t\|u(s)\|_2^2 + M\|\gamma(s)\|_2^2ds \\
	&\qquad  \leq \frac{8{\rm Ra}^2 \lambda_1}{\epsilon} \left(2A(t-t_0)+\frac{(\|\theta_0\|_2+\sqrt{A})^2}{\lambda_1}(e^{-2\lambda_1 t_0}-e^{-2\lambda_1 t})\right) + \|u(t_0)\|_2^2+M\|\gamma(t_0)\|_2^2,
	\end{align*}
To get the assertion we should take 
	$$
	f(t) = -8{\rm Ra}^2D \left((\|\theta_0\|_2+\sqrt{A})^2 2\lambda_1 t e^{-2\lambda_1 t} + A\right),
	$$
	in order to obtain \eqref{est:imp}. Again, we can use \cite[Lemma 7.2]{Ball} which leads us to
		$$
	\|u(t)\|_2^2 + M\|\gamma(t)\|_2^2 \leq e^{-\frac{2\lambda_1}{D\epsilon}t} (\|u_0\|^2 + M\|\gamma_0\|_2^2) + \frac{8{\rm Ra}^2 (\|\theta_0\|_2+\sqrt{A})^2}{\epsilon}  2\lambda_1 t  e^{-2\lambda_1 t} + 8AD{\rm Ra}^2.
	$$
	As $\lambda_1t e^{-\lambda_1t} \leq e^{-1}$ the assertion follows easily.
\end{proof}

Denote  $V^n(t) = M\|\nabla \gamma^n(t)\|_2^2 + \|\nabla u^n(t)\|^2_2$. 

\begin{lemma}\label{lemma:t_1_integral}
If 
$$T_1 = \frac{1}{\pi^2}\ln\left(1+\frac{\|\theta_0\|_2}{\sqrt{A}}\right),
$$
and $t > T_1$, then
\begin{align}
&\frac {1}{t-T_1}\int_{T_1}^{t} V^n(s)ds  \leq \frac{\epsilon}{2D(t-T_1)} ( \|u^n(t_1)\|_2^2 +  M\|\gamma^n(t_1)\|_2^2 ) + \frac{8 A{\rm Ra}^2 }{D \pi^2}.\label{int:ass1_t1}
\end{align}
\end{lemma}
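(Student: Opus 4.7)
The plan is to derive an energy equation for the Galerkin approximations and then exploit the definition of $T_1$ together with the temperature bound from Lemma \ref{lem:max_temp}. Since $u^n(t)\in H^n$ and $\gamma^n(t)\in E_3^n$ are admissible test functions in their own equations, I would test \eqref{eq:1f_w_galer} with $u^n(t)$ and \eqref{eq:3f_w_galer} with $\gamma^n(t)$. The convective terms vanish by incompressibility, and adding the two identities I would regroup the coupling term $4K({\rm rot}\,\gamma^n,u^n)$ with the part $K\|\nabla u^n\|_2^2=K\|{\rm rot}\,u^n\|_2^2$ (the equality holds for divergence-free $u^n$ with the present boundary conditions) and with $4K\|\gamma^n\|_2^2$ to form a complete square. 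This produces the pointwise-in-time energy equation
\begin{equation*}
\tfrac{\epsilon}{2}\tfrac{d}{dt}\bigl(\|u^n\|_2^2+M\|\gamma^n\|_2^2\bigr)+\|\nabla u^n\|_2^2+L\|\nabla \gamma^n\|_2^2+K\|{\rm rot}\,u^n-2\gamma^n\|_2^2+G\|{\rm div}\,\gamma^n\|_2^2={\rm Ra}\bigl(\theta^n,u^n_3\bigr).
\end{equation*}

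Next I would estimate the forcing $\mathrm{Ra}(\theta^n,u_3^n)$ using the maximum principle. The choice $T_1=\lambda_1^{-1}\ln(1+\|\theta_0\|_2/\sqrt{A})$ is precisely what is needed so that the exponentially decaying contribution in Lemma \ref{lem:max_temp} falls below $2\sqrt{A}$, yielding $\|\theta^n(s)\|_2\leq 4\sqrt{A}$ for every $s\geq T_1$. Combining Cauchy--Schwarz with the Poincar\'e inequality $\|u^n\|_2\leq \pi^{-1}\|\nabla u^n\|_2$ and then Young's inequality gives, for $s\geq T_1$,
\begin{equation*}
{\rm Ra}\,(\theta^n,u_3^n)\leq {\rm Ra}\,\|\theta^n\|_2\|u^n\|_2\leq \tfrac{4{\rm Ra}\sqrt{A}}{\pi}\|\nabla u^n\|_2\leq \tfrac{1}{2}\|\nabla u^n\|_2^2+\tfrac{8A\,{\rm Ra}^2}{\pi^2}.
\end{equation*}
Dropping the nonnegative $K$-square and $G$-divergence terms and absorbing $\tfrac12\|\nabla u^n\|_2^2$ on the left side reduces the identity, for $s\geq T_1$, to
\begin{equation*}
\tfrac{\epsilon}{2}\tfrac{d}{dt}\bigl(\|u^n\|_2^2+M\|\gamma^n\|_2^2\bigr)+\tfrac12\|\nabla u^n\|_2^2+L\|\nabla \gamma^n\|_2^2\leq \tfrac{8A\,{\rm Ra}^2}{\pi^2}.
\end{equation*}

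Now the definition $D=\max\{2,M/L\}$ kicks in: it gives both $D\geq 2$ and $LD\geq M$, hence $\tfrac12\|\nabla u^n\|_2^2+L\|\nabla \gamma^n\|_2^2\geq D^{-1}V^n$. Integrating from $T_1$ to $t$, discarding the nonnegative endpoint $\tfrac{\epsilon}{2}(\|u^n(t)\|_2^2+M\|\gamma^n(t)\|_2^2)$, and dividing by $t-T_1$ yields the claimed time-averaged control of $V^n$. The only delicate step in the whole argument is the algebraic grouping into the square $K\|{\rm rot}\,u^n-2\gamma^n\|_2^2$, which together with the constraint $L\geq M/D$ allows one to retain enough coercivity in $\|\nabla \gamma^n\|_2^2$ after the coupling has been moved to the left side; all remaining ingredients are Poincar\'e, Young, and the temperature bound already established.
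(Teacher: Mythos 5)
Your proof is correct and takes essentially the same route as the paper's: the same summed energy identity with the completed square $K\|{\rm rot}\,u^n-2\gamma^n\|_2^2$, the same use of $T_1$ to obtain $\|\theta^n(s)\|_2\leq 4\sqrt{A}$ for $s\geq T_1$, the same Poincar\'e--Young absorption of the buoyancy term, and the same lower bound $\tfrac12\|\nabla u^n\|_2^2+L\|\nabla\gamma^n\|_2^2\geq V^n/D$ (the paper merely works with the already-integrated energy inequalities rather than the differential form). One caveat: your computation, exactly like the paper's own final step, actually yields the factor $D$ in the numerator on the right-hand side, i.e. $\frac{\epsilon D}{2(t-T_1)}(\|u^n(T_1)\|_2^2+M\|\gamma^n(T_1)\|_2^2)+\frac{8AD{\rm Ra}^2}{\pi^2}$, not the $1/D$ appearing in the displayed statement, so you should not assert that the stated constant follows verbatim (the weaker bound still suffices for Lemma \ref{lem:invariance}).
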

\begin{proof}
Adding the estimates \eqref{ine-u:4f_w} and \eqref{ine-omega:4f_w} we obtain
\begin{align*}
& \epsilon\|u^n(t)\|_2^2 + \epsilon M\|\gamma^n(t)\|_2^2 +2L\int_{T_1}^{t}\|\nabla \gamma^n(s)\|_2^2ds + 2G\int_{T_1}^{t}\|\mathrm{div}\, \gamma^n(s)\|_2^2ds \\
&\qquad \qquad + 2K\int_{T_1}^{t} \|2\gamma^n(s)-{\rm rot}\, u^n(s)\|_2^2\, ds + 2\int_{T_1}^{t}\|\nabla u^n(s)\|_2^2ds\\ 
&\qquad \leq \epsilon\|u^n(T_1)\|_2^2  + 2{\rm Ra} \int_{T_1}^{t}(\theta^n(s),u^n_3(s))ds+\epsilon M\|\gamma^n(T_1)\|_2^2.
\end{align*}
Since, for $t\geq T_1$ there holds $\|\theta^n(t)\|\leq 4\sqrt{A}$, we deduce
\begin{align*}
& 2L\int_{T_1}^{t}\|\nabla \gamma^n(s)\|_2^2ds +  2\int_{T_1}^{t}\|\nabla u^n(s)\|_2^2ds \leq \epsilon\|u^n(T_1)\|_2^2  + 8\sqrt{A}{\rm Ra} \int_{T_1}^{t}\|u^n(s)\|_2ds+\epsilon M\|\gamma^n(T_1)\|_2^2.
\end{align*}
The Poincar\'{e} inequality implies 
\begin{align*}
& 2\left(L\int_{T_1}^{t}\|\nabla \gamma^n(s)\|_2^2ds +  \int_{T_1}^{t}\|\nabla u^n(s)\|_2^2ds\right) \leq \epsilon\|u^n(T_1)\|_2^2  + \frac{8\sqrt{A}{\rm Ra}}{\pi} \int_{T_1}^{t}\|\nabla u^n(s)\|_2ds+\epsilon M\|\gamma^n(T_1)\|_2^2.
\end{align*}
The Cauchy inequality implies that
\begin{align*}
& 2\left(L\int_{T_1}^{t}\|\nabla \gamma^n(s)\|_2^2ds +  \frac{1}{2}\int_{T_1}^{t}\|\nabla u^n(s)\|_2^2ds\right) \leq \epsilon\|u^n(T_1)\|_2^2  + \frac{16 A{\rm Ra}^2 }{ \pi^2}(t-T_1) + \epsilon M\|\gamma^n(T_1)\|_2^2.
\end{align*}
We deduce
$$  M\int_{T_1}^{t}\|\nabla \gamma^n(s)\|_2^2ds +  \int_{T_1}^{t}\|\nabla u^n(s)\|_2^2ds  \leq \frac{\epsilon}{2D} ( \|u^n(T_1)\|_2^2 +  M\|\gamma^n(T_1)\|_2^2 ) + \frac{8 A{\rm Ra}^2 }{D \pi^2} (t-T_1),
$$
whence the assertion follows.
\end{proof}

\begin{remark}
	Estimates of Lemmas \ref{le:inf-est-1},  %\ref{lemma:int_uo}
	 and \ref{lemma:t_1_integral}  hold also for the solutions of the approximative problems given by Definition \ref{def-weak-sol-evol-galerkin}  with the constants independent on the dimension of the finite dimensional spaces that approximate $u$ and $\gamma$. 
\end{remark}

\subsection{Enstrophy estimates.} The crucial assumption for our results below is that 
the constants present in the problem satisfy the following restrictions meaning that the Prandtl number ${\rm Pr}$ and the micropolar damping $L$ are large enough. 
\begin{itemize}
	\item[(H)] $L \geq \frac{16 }{3\pi^2} K
	\quad \textrm{and}\quad 
	{\rm Pr} \geq 2{\rm Ra}{D^{3/2}c_1\sqrt{A}}.$
\end{itemize}

In the following lemma which follows from the enstrophy estimates we show that if restrictions (H) hold, then there exists a ball in $H^1$ that is forward invariant for large time for the approximative problems. 
\begin{lemma}\label{lem:enstrophy}
Assume (H). Let $(u^n,\gamma^n,\theta^n)$ be the solution to the Galerkin problem given in Definition \ref{def-weak-sol-evol-galerkin}. 
If 
$$T_1 = \frac{1}{\pi^2}\ln\left(1+\frac{\|\theta_0\|_2}{\sqrt{A}}\right),
$$
and for some $t\geq T_1$ there holds $(u^n(t),\gamma^n(t)) \in S_R$ then for every $s\geq t$ there also holds $(u^n(s),\gamma^n(s)) \in S_R$, where 
$
S_R = \{ (u,\gamma)\in V\times W_3  \,:\  M\|\nabla \gamma\|_2^2 + \|\nabla u\|^2_2 \leq R \}
$ and 
$$
R = \frac{4\pi}{\sqrt{6}}AD{\rm Ra}^2.
$$
\end{lemma}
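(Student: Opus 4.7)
The plan is to show forward invariance of $S_R$ by establishing a differential inequality for $V^n(t)$ and demonstrating that, whenever $V^n(t)=R$ and $t\geq T_1$, one has $\tfrac{d}{dt}V^n(t)\leq 0$. Since the Galerkin solutions are smooth in time with values in finite-dimensional spaces, $V^n$ is $C^1$, so this pointwise sign information on the boundary of $S_R$ immediately gives the claim: if $V^n$ ever entered $\{V^n>R\}$ after time $t$, it would have to cross the level set $V^n=R$ with nonnegative derivative, a contradiction.

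To derive the enstrophy inequality, I would test \eqref{eq:1f_w_galer} with $-P\Delta u^n(t)$ (which lies in $H^n$) and \eqref{eq:3f_w_galer} with $-\Delta\gamma^n(t)$, then add. Using that $u^n$ is divergence-free and the boundary conditions, this yields
\begin{align*}
\tfrac{\epsilon}{2}\tfrac{d}{dt}V^n + (1+K)\|P\Delta u^n\|_2^2 + L\|\Delta\gamma^n\|_2^2 + G\|\nabla\,\text{\rm div}\,\gamma^n\|_2^2 + 4K\|\nabla\gamma^n\|_2^2 = \mathcal{N} + \mathcal{C} + \mathcal{B},
\end{align*}
where $\mathcal{N}$ collects the two convective terms, $\mathcal{C}=2K(\text{\rm rot}\,\gamma^n,-P\Delta u^n)+2K(\text{\rm rot}\,u^n,-\Delta\gamma^n)$ is the coupling, and $\mathcal{B}=\mathrm{Ra}(\theta^n,-P\Delta u^n_3)$ is the buoyancy forcing.

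The three right-hand contributions then get treated separately. For $\mathcal{C}$, the bound $\|\text{\rm rot}\,\gamma^n\|_2\leq\|\nabla\gamma^n\|_2$ together with Cauchy--Schwarz/Young and the Poincar\'e inequality $\pi^2\|\nabla\gamma^n\|_2^2\leq\|\Delta\gamma^n\|_2^2$ (and its analogue for $u^n$) allows absorption into $(1+K)\|P\Delta u^n\|_2^2 + L\|\Delta\gamma^n\|_2^2$, provided $L$ is large enough relative to $K$; this is exactly where the first half of (H), $L\geq \tfrac{16}{3\pi^2}K$, gets used to close the estimate without eating all the dissipation. For $\mathcal{B}$, the maximum principle estimate from Lemma~\ref{lem:max_temp} gives $\|\theta^n(t)\|_2\leq 2\sqrt{A}$ for $t\geq T_1$, so Cauchy--Schwarz plus Young yields $|\mathcal{B}|\leq \delta\|P\Delta u^n\|_2^2 + C\mathrm{Ra}^2 A/\delta$ with $\delta$ chosen so that the $\|P\Delta u^n\|_2^2$ part is absorbed. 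The nonlinear term $\mathcal{N}$ is the heart of the matter: using the 3D Agmon inequality $\|f\|_{L^\infty}\leq c\|\nabla f\|_2^{1/2}\|(-\Delta) f\|_2^{1/2}$ (and its Stokes counterpart) one estimates
\begin{align*}
\epsilon|\mathcal{N}| \leq c\,\epsilon\left(\|\nabla u^n\|_2^{3/2}\|P\Delta u^n\|_2^{3/2} + M\|\nabla u^n\|_2^{1/2}\|P\Delta u^n\|_2^{1/2}\|\nabla\gamma^n\|_2\|\Delta\gamma^n\|_2\right),
\end{align*}
and by Young this splits into $\tfrac{1}{2}\bigl((1+K)\|P\Delta u^n\|_2^2+L\|\Delta\gamma^n\|_2^2\bigr)$ plus a term proportional to $\epsilon^4 (V^n)^3$ times constants depending on $D=\max\{2,M/L\}$.

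Combining everything and invoking Poincar\'e once more on the remaining dissipation gives a pointwise inequality of the shape
\begin{align*}
\tfrac{\epsilon}{2}\tfrac{d}{dt}V^n \leq -\tfrac{\pi^2}{D}V^n + c_0\,\mathrm{Ra}^2 A + c_1^{\,4}\epsilon^4 D^3 (V^n)^3.
\end{align*}
Plugging $V^n=R=\tfrac{4\pi}{\sqrt 6}AD\mathrm{Ra}^2$ and the Prandtl bound $\epsilon\leq(2c_1\mathrm{Ra}D^{3/2}\sqrt{A})^{-1}$ from (H) makes the cubic term dominated by the linear dissipative one, so the right-hand side is $\leq 0$ on the boundary $\{V^n=R\}$. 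The main obstacle is the precise bookkeeping of universal constants in Step~4 so that the cubic $(V^n)^3$ contribution is exactly cancelled by the linear dissipative part at the level $V^n=R$; this fixes the value of the universal constant $c_1$ and the numerical prefactor $\tfrac{4\pi}{\sqrt 6}$ appearing in $R$. Once this is done, forward invariance is immediate from the continuity of $V^n$ on the smooth finite-dimensional Galerkin trajectory.
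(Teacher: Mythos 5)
Your proposal follows essentially the same route as the paper: test with $-P\Delta u^n$ and $-\Delta\gamma^n$, absorb the coupling using the first half of (H) and Poincar\'e, bound the buoyancy via the maximum principle after time $T_1$, handle the convective terms by Agmon and Young to produce the cubic term $\epsilon^4 D^3 (V^n)^3$, and verify that the right-hand side of the resulting inequality is nonpositive at $V^n=R$ under the Prandtl condition. The only minor slips are that the maximum principle at $t\geq T_1$ gives $\|\theta^n(t)\|_2\leq 4\sqrt{A}$ rather than $2\sqrt{A}$ (this only shifts numerical constants), and that the final constant bookkeeping you defer is exactly the computation the paper also describes as straightforward but cumbersome.
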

\begin{proof}
We test the Galerkin equation \eqref{eq:1f_w_galer} by $-P\Delta u^n(t)$, the value of the Stokes operator applied to $u^n(t)$, which gives
\begin{align*}
 & \frac{\epsilon}{2}\frac{d}{dt}\|\nabla u^n(t)\|^2_2+  (1+K)\|P\Delta u^n(t)\|_2^2 \\
 & \qquad \leq - 2K ({\rm rot}\, \gamma^n(t),P\Delta u^n(t)) - {\rm Ra} (\theta^n(t),(P\Delta u^n(t))_3) - \epsilon((u^n(t)\cdot \nabla) u^n(t),(P\Delta u^n(t))).
\end{align*}
We deduce, using the Cauchy, and Agmon inequalities
\begin{align}
& \frac{\epsilon}{2}\frac{d}{dt}\|\nabla u^n(t)\|^2_2+  (1+K)\|P\Delta u^n(t)\|_2^2\nonumber \\
& \qquad \leq  2K\|\nabla \gamma^n(t)\|_2\|P\Delta u^n(t)\|_2 + {\rm Ra} \|\theta^n(t)\|_2\|P\Delta u^n(t)\|_2 + c_1\epsilon\|\nabla u^n(t)\|^{3/2}_2\|P\Delta u^n(t)\|^{3/2}_2\label{eq:enstrophy_1},
\end{align}
where $c_1$ is the constant from the Agmon inequality $\|v\|_{L^\infty}\leq c_1 \|\nabla v\|_2^{1/2}
\|P\Delta u\|_2^{1/2}$. 
Observe that
\begin{align}\label{3D-case}
(\nabla{\rm div\,}\gamma^n(t), \Delta\gamma^n(t)) = \|\nabla{\rm div\,}\gamma^n(t)\|_2^2
\end{align}
since $\Delta \gamma^n(t) = \nabla{\rm div\,}\gamma^n(t) - {\rm rot}\,{\rm rot}\,\gamma^n(t)$, ${\rm rot}\,\nabla F =0$, and 
$(\nabla F, {\rm rot}\,{\rm rot}\,\gamma^n(t))=({\rm rot}\,
\nabla F, {\rm rot}\,\gamma^n(t)) =0$.
Hence, testing \eqref{eq:3f} with $-\Delta \gamma^n(t)$ we can drop the term with $G$ and we deduce that
\begin{align*}
& 	\frac{M\epsilon}{2}\frac{d}{dt}\|\nabla \gamma^n(t)\|_2^2  + L\|\Delta \gamma^n(t)\|_2^2  +
4 K \|\nabla \gamma^n(t)\|_2^2 \\
& \qquad \leq  2K \|\nabla u^n(t)\|_2\|\Delta \gamma^n(t)\|_2 + c_1 \epsilon M \|\nabla u^n(t)\|_2^{1/2} \|P\Delta  u^n(t)\|_2^{1/2} \|\nabla\gamma^n(t)\|_2 \|\Delta \gamma^n(t)\|_2
\end{align*}
%$$
%({\rm rot}u,\Delta \gamma) = (\Delta u,{\rm rot} \gamma) + \int_{\Gamma_B\cup \Gamma_T}\left(\frac{\partial \gamma_2}{\partial x_3}\frac{\partial u_1}{\partial x_3}-\frac{\partial \gamma_1}{\partial x_3}\frac{\partial u_2}{\partial x_3}\right)n_3\, d\Gamma.
%$$
After some simple calculations it follows that
\begin{align*}
& 	\frac{M\epsilon}{2}\frac{d}{dt}\|\nabla \gamma^n(t)\|_2^2  + \frac{L}{2}\|\Delta \gamma^n(t)\|_2^2  +
4 K \|\nabla \gamma^n(t)\|_2^2 \\
& \qquad \leq  \frac{4K^2}{L} \|\nabla u^n(t)\|^2_2 + \frac{c_1^2 \epsilon^2 M^2}{L} \|\nabla u^n(t)\|_2 \|P\Delta  u^n(t)\|_2 \|\nabla\gamma^n(t)\|^2_2.
\end{align*}
We add the resulting inequality to \eqref{eq:enstrophy_1}, and use the Cauchy inequality once again which yields
\begin{align*}
& 	\frac{M\epsilon}{2}\frac{d}{dt}\|\nabla \gamma^n(t)\|_2^2 + \frac{\epsilon}{2}\frac{d}{dt}\|\nabla u^n(t)\|^2_2+ \frac{L}{2}\|\Delta \gamma^n(t)\|_2^2 +  \left(1+\frac{3K}{4}\right)\|P\Delta u^n(t)\|_2^2\\
& \qquad \leq  \frac{4K^2}{L} \|\nabla u^n(t)\|^2_2  + {\rm Ra} \|\theta^n(t)\|_2\|P\Delta u^n(t)\|_2 \\
& \qquad \qquad + \frac{c_1^2 \epsilon^2 M^2}{L} \|\nabla u^n(t)\|_2 \|P\Delta  u^n(t)\|_2 \|\nabla\gamma^n(t)\|^2_2 + c_1\epsilon\|\nabla u^n(t)\|^{3/2}_2\|P\Delta u^n(t)\|^{3/2}_2.
\end{align*}
After further computations, which use the Poincar\'{e} inequality, the relation between $K$ and $L$ in (H) and the fact that $t\geq T_1$ we obtain
\begin{align*}
& 	\frac{\epsilon}{2}\frac{d}{dt}V^n(t) + \frac{\pi^2L}{2}\|\nabla \gamma^n(t)\|_2^2 +  \|P\Delta u^n(t)\|_2^2\\
& \qquad \leq 4{\rm Ra} \sqrt{A}\|P\Delta u^n(t)\|_2 + \frac{c_1^2 \epsilon^2 M^2}{L} \|\nabla u^n(t)\|_2 \|P\Delta  u^n(t)\|_2 \|\nabla\gamma^n(t)\|^2_2 + c_1\epsilon\|\nabla u^n(t)\|^{3/2}_2\|P\Delta u^n(t)\|^{3/2}_2.
\end{align*}
% \frac{4K^2}{L} \|\nabla u^n(t)\|^2_2  + 2 {\rm Ra}^2\|\theta^n(t)\|^2_2 + \frac{2c_1^4 \epsilon^4 M^4}{L^2} \|\nabla u^n(t)\|^2_2 \|\nabla\gamma^n(t)\|^4_2 + 54c_1^4\epsilon^4 \|\nabla u^n(t)\|^6_2.
%We deduce that
%\begin{align*}
%& M\epsilon\frac{d}{dt}\|\nabla \gamma^n(t)\|_2^2 + \epsilon\frac{d}{dt}\|\nabla u^n(t)\|^2_2+ L\|\Delta \gamma^n(t)\|_2^2 +  \frac{\pi^2}{4}\left(1+\frac{3K}{4}\right)\|\nabla u^n(t)\|_2^2+\left(1+\frac{3K}{4}\right)\|P\Delta u^n(t)\|_2^2\\
%& \qquad \leq  \frac{8K^2}{L} \|\nabla u^n(t)\|^2_2  + 4{\rm Ra}^2 \|\theta^n(t)\|^2_2+ 16c_1^4 \epsilon^4 \|\nabla u^n(t)\|^6_2  + \frac{4\sqrt{3}c_1^4 \epsilon^4 M^6}{9L^3}  \|\nabla\gamma^n(t)\|^6_2 + 108\,c_1^4\epsilon^4 \|\nabla u^n(t)\|^6_2.
%\end{align*}
Using the Young inequality  we deduce that for any $\delta>0$ there holds
\begin{align*}
& 	\frac{\epsilon}{2}\frac{d}{dt}V^n(t) + \frac{\pi^2L}{2}\|\nabla \gamma^n(t)\|_2^2 +  \|P\Delta u^n(t)\|_2^2\\
& \qquad \leq \frac{3\delta}{2}\|P\Delta u^n(t)\|_2^2 + \frac{8 {\rm Ra} ^2 A}{\delta} + \frac{c_1^4 \epsilon^4 M^4}{2\delta L^2} \|\nabla u^n(t)\|^2_2 \|\nabla\gamma^n(t)\|^4_2 + \frac{3^3 c_1^4\epsilon^4}{2^5\delta^3} \|\nabla u^n(t)\|^6_2.
\end{align*}
Assuming that $\delta < 2/3$ and using the Poincar\'{e} inequality, this yields 
\begin{align*}
& 	\epsilon\frac{d}{dt}V^n(t) + \pi^2L\|\nabla \gamma^n(t)\|_2^2 +  \pi^2(2-3\delta)\|\nabla u^n(t)\|_2^2\\
& \qquad \leq \frac{16 {\rm Ra} ^2 A}{\delta} + \frac{c_1^4 \epsilon^4 M^4}{\delta L^2} \|\nabla u^n(t)\|^2_2 \|\nabla\gamma^n(t)\|^4_2 + \frac{3^3 c_1^4\epsilon^4}{2^4\delta^3} \|\nabla u^n(t)\|^6_2.
\end{align*}
After some simple calculations we arrive at
\begin{align*}
& 	\epsilon\frac{d}{dt}V^n(t) + \pi^2L\|\nabla \gamma^n(t)\|_2^2 +  \pi^2(2-3\delta)\|\nabla u^n(t)\|_2^2\\
& \qquad \leq \frac{16 {\rm Ra} ^2 A}{\delta} + c_1^4 \epsilon^4  \left(\frac{2+3^3}{2^4\delta^3}\|\nabla u^n(t)\|^6_2+\frac{2M^6}{L^3}\|\nabla\gamma^n(t)\|^6_2\right).
\end{align*}
where $\theta > 0$ is arbitrary. Setting $\delta = 1/2$, and using the notation $D = \max\left\{2,\frac{M}{L}\right\}$, we obtain
\begin{align*}
& 	\epsilon\frac{d}{dt}V^n(t) + \frac{\pi^2}{D} V^n(t)\\
& \qquad \leq 32 {\rm Ra} ^2 A + c_1^4 \epsilon^4  2\max\left\{ \frac{2+ 3^3}{2^2}, \frac{M^3}{L^3}\right\} V^n(t)^3.
\end{align*}
It follows that
\begin{align*}
& 	\epsilon\frac{d}{dt}V^n(t) \leq 32 {\rm Ra} ^2 A + 2c_1^4 \epsilon^4  D^3V^n(t)^3 - \frac{\pi^2}{D} V^n(t).
\end{align*}
We need to find sufficiently large $R$ such that the set $S_R$ is forward invariant for  $t\geq T_1$. 
It is sufficient to find $R>0$ such that
\begin{equation}\label{eq:radius}
32 {\rm Ra} ^2 A + 2 c_1^4 \epsilon^4  D^3V^n(t)^3 - \frac{\pi^2}{D} V^n(t) \leq 0.
\end{equation}
If only such $R$ exists, then $S_R$ is forward invariant for $t\geq T_1$. It is a straightforward and cumbersome computation to check that if only
$$
\frac{\rm Ra}{\rm Pr} = \epsilon {\rm Ra} \leq \frac{1}{2} \frac{1}{D^{3/2}c_1\sqrt{A}},
$$
then we can choose 
$$
R = \frac{4\pi}{\sqrt{6}}AD{\rm Ra}^2,
$$
and the inequality \eqref{eq:radius} is satisfied. The proof is complete. 
\end{proof}
\begin{lemma}\label{lem:invariance}
Assume (H). 
	If the initial data $(u_0,\gamma_0,
	\theta_0)\in B$, where $B$ is a bounded set in $H\times E_3\times E_1$ then there exists $T_2>0$ dependent on $B$ such that 
	$$
	\|\nabla u^n(t)\|_2^2 + M\|\nabla \gamma^n(t)\|_2^2 \leq  \frac{4\pi}{\sqrt{6}}AD{\rm Ra}^2\quad \textrm{for every}\quad t\geq T_2. 
	$$ 
\end{lemma}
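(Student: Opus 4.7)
The plan is to combine three already-proven facts: the uniform eventual $L^2$-bound on the state (Lemma \ref{le:inf-est-1}), the time-average bound on the $H^1$-enstrophy $V^n$ (Lemma \ref{lemma:t_1_integral}), and the forward invariance of the ball $S_R$ past time $T_1$ (Lemma \ref{lem:enstrophy}). Since the initial data lie in a bounded set $B\subset H\times E_3\times E_1$, the starting time $T_1=\frac{1}{\pi^2}\ln(1+\|\theta_0\|_2/\sqrt{A})$ of Lemma \ref{lemma:t_1_integral} is bounded uniformly over $B$, and Lemma \ref{le:inf-est-1}, applied at the Galerkin level (cf. the Remark following Lemma \ref{lemma:t_1_integral}), yields a bound $\|u^n(T_1)\|_2^2 + M\|\gamma^n(T_1)\|_2^2 \leq C_B$ with $C_B$ depending on $B$ and the physical constants but independent of $n$.

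Feeding this into Lemma \ref{lemma:t_1_integral}, for every $t>T_1$
\[
\frac{1}{t-T_1}\int_{T_1}^{t} V^n(s)\,ds \;\leq\; \frac{\epsilon C_B}{2D(t-T_1)} + \frac{8A{\rm Ra}^2}{D\pi^2}.
\]
A short arithmetic check using $D\geq 2$ shows $\frac{8A{\rm Ra}^2}{D\pi^2} < R$, where $R=\frac{4\pi}{\sqrt{6}}AD{\rm Ra}^2$; indeed the inequality reduces to $\pi^3 D^2 > 2\sqrt{6}$, which is trivially satisfied. Accordingly, one chooses $T_2=T_2(B)>T_1$ so large that $\frac{\epsilon C_B}{2D(T_2-T_1)} < R - \frac{8A{\rm Ra}^2}{D\pi^2}$, giving
\[
\frac{1}{T_2-T_1}\int_{T_1}^{T_2} V^n(s)\,ds \;<\; R.
\]

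By the mean value theorem for integrals there exists $t^*\in(T_1,T_2)$ with $V^n(t^*)<R$, i.e.\ $(u^n(t^*),\gamma^n(t^*))\in S_R$. Since $t^*\geq T_1$, Lemma \ref{lem:enstrophy} then propagates $V^n(t)\leq R$ to every $t\geq t^*$, and in particular to every $t\geq T_2$, which is the desired conclusion. The only delicate point — and the one I would check carefully — is the uniformity of $T_2$ in $n$ and on $B$; this follows because both $T_1$ and $C_B$ depend on the initial data only through quantities uniformly bounded on $B$, and the constants in Lemmas \ref{le:inf-est-1} and \ref{lemma:t_1_integral} are independent of the Galerkin dimension $n$. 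No step is a genuine obstacle; the argument is essentially the standard ``time-average plus trapping region'' trick, made available here by the strong assumption \textup{(H)} that provided the invariant ball $S_R$ in the first place.
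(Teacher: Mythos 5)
Your argument is essentially identical to the paper's: both use the uniform $L^2$ bound from Lemma \ref{le:inf-est-1} to control $\|u^n(T_1)\|_2^2 + M\|\gamma^n(T_1)\|_2^2$, feed this into the time-average estimate of Lemma \ref{lemma:t_1_integral}, choose $T_2$ large enough that the average is below $R$, extract a time $t^*\in[T_1,T_2]$ with $V^n(t^*)\leq R$ by continuity, and then propagate via the forward invariance of $S_R$ from Lemma \ref{lem:enstrophy} (with the same trivial numerical check that $D\geq 2$ makes the intermediate threshold sit below $R$). The only cosmetic difference is that the paper fixes the intermediate threshold at $\frac{16A\mathrm{Ra}^2}{D\pi^2}$ whereas you keep it implicit; the logical skeleton, the lemmas invoked, and the uniformity considerations are the same.
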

\begin{proof}
	Lemmas \ref{lem:max_temp} and \ref{le:inf-est-1} imply that 
	$$
	\|u^n(t)\|_2^2 + \|\gamma^n(t)\|_2 + \|\theta^n(t)\|_2^2 \leq C \quad \textrm{for every}\quad t\geq 0, 
	$$
	where the constant $C$ depends on $B$. Let $T_1$ be as in Lemma \ref{lem:enstrophy}. Thus there exists time $T_2 = T_2(B) > 0$ such that 
	$$
	\frac{\epsilon}{2D(T_2-T_1)} ( \|u^n(T_1)\|_2^2 +  M\|\gamma^n(T_1)\|_2^2 ) \leq \frac{8 A{\rm Ra}^2 }{D \pi^2}.
	$$
	By Lemma \ref{lemma:t_1_integral} this means that
	$$
	\frac {1}{T_2-T_1}\int_{T_1}^{t} V^n(s)ds  \leq \frac{16 A{\rm Ra}^2 }{D \pi^2}.
	$$ 
	Since the function $V^n$ is continuous we deduce that there exists $t\in [T_1,T_2]$ such that
	$$
	V^n(t) \leq \frac{16 A{\rm Ra}^2 }{D \pi^2}.
	$$
	If only 
	$$
	\frac{16 A{\rm Ra}^2 }{D \pi^2} \leq \frac{4\pi}{\sqrt{6}}AD{\rm Ra}^2,
	$$
	then result follows by Lemma \ref{lem:enstrophy}. But this inequality is equivalent to 
	$$
	D^2 \geq \frac{4\sqrt{6}}{\pi^3},
	$$
	which is always true as $D\geq 2$, and the result follows. 
\end{proof}

In the next result we show that the absorbing ball exists not only for the approximate solutions but also for the limit solutions.
\begin{lemma}\label{lem:enstrophy_2}
	Assume (H). Let $B$ be a bounded set in $H\times E_3\times E_1$. There exists a time $T_2$ dependent only on $B$ such that for any weak solution $(u,\gamma,\theta)$ with the initial data in $B$ there holds 
	$$
	\|\nabla u(t)\|_2^2 + M\|\nabla \gamma(t)\|_2^2 \leq  \frac{4\pi}{\sqrt{6}}AD{\rm Ra}^2\quad \textrm{for every}\quad t\geq T_2. 
	$$
\end{lemma}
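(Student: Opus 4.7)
The strategy is to transfer the uniform estimate for the Galerkin approximations given in Lemma~\ref{lem:invariance} to the limiting weak solution by a weak-convergence / lower-semicontinuity argument. Fix a bounded set $B\subset H\times E_3\times E_1$ and a weak solution $(u,\gamma,\theta)$ with initial data in $B$. By Definition~\ref{def-weak-sol-evol}, there exists an approximating sequence of $(u,\gamma)$-Galerkin solutions $(u^{n_k},\gamma^{n_k},\theta^{n_k})$ whose initial data converge strongly in $H\times E_3\times E_1$ to $(u_0,\gamma_0,\theta_0)$. By choosing the approximating initial data so that e.g.\ $\|u_0^{n_k}\|_H\le 2\|u_0\|_H$ (and analogously for the other components), we can arrange that all such approximating initial data lie in some fixed bounded set $\widetilde{B}\supset B$ in $H\times E_3\times E_1$, with $\widetilde{B}$ depending only on $B$.

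The first substantive step is to apply Lemma~\ref{lem:invariance} with $\widetilde{B}$ in place of $B$. This furnishes a time $T_2=T_2(\widetilde{B})=T_2(B)$ and a radius $R=\frac{4\pi}{\sqrt{6}}AD\,\mathrm{Ra}^2$ such that, for every $t\geq T_2$ and every $k$,
$$
\|\nabla u^{n_k}(t)\|_2^2 + M\|\nabla \gamma^{n_k}(t)\|_2^2\leq R.
$$
Fix any $t\geq T_2$. Then $\{u^{n_k}(t)\}$ is bounded in $V$ and $\{\gamma^{n_k}(t)\}$ is bounded in $W_3$. By Lemma~\ref{lemma:convergence}, $u^{n_k}(t)\rightharpoonup u(t)$ weakly in $H$ and $\gamma^{n_k}(t)\rightharpoonup \gamma(t)$ weakly in $E_3$ for every such $t$. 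Extracting a further subsequence, we obtain weak limits in $V$ and $W_3$ respectively, which by uniqueness of weak limits in the larger spaces coincide with $u(t)$ and $\gamma(t)$. Therefore $u^{n_k}(t)\rightharpoonup u(t)$ weakly in $V$ and $\gamma^{n_k}(t)\rightharpoonup \gamma(t)$ weakly in $W_3$.

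Finally, extract a subsequence (not relabeled) along which the numerical sequences $\|\nabla u^{n_k}(t)\|_2$ and $\|\nabla \gamma^{n_k}(t)\|_2$ both converge in $[0,\infty]$. By the weak-in-$V$ (resp.\ weak-in-$W_3$) lower semicontinuity of the gradient norm,
$$
\|\nabla u(t)\|_2^2 + M\|\nabla\gamma(t)\|_2^2 \leq \lim_{k\to\infty}\!\Big(\|\nabla u^{n_k}(t)\|_2^2 + M\|\nabla\gamma^{n_k}(t)\|_2^2\Big) \leq R,
$$
which is the claimed bound. The only real subtlety (and the part requiring the most care) is step one: one has to ensure that $T_2$ can be chosen uniformly over the whole of $B$, which in turn requires the mild observation that the approximating initial data can be picked to stay in a common bounded set $\widetilde{B}$ depending only on $B$. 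Everything else is soft: Lemma~\ref{lem:invariance} provides the hard, nonlinear enstrophy estimate on the Galerkin level, and the passage to the limit is standard weak compactness combined with lower semicontinuity of norms.
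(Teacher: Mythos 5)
Your argument is correct and follows essentially the same route as the paper: apply the Galerkin-level forward-invariance bound of Lemma~\ref{lem:invariance} for $t\geq T_2$, upgrade the weak convergences of Lemma~\ref{lemma:convergence} to weak convergence in $V$ and $W_3$ at each fixed $t$ via boundedness and uniqueness of weak limits, and conclude by sequential weak lower semicontinuity of the norms. Your extra remark that the (strongly convergent, hence eventually bounded) approximating initial data lie in a fixed bounded set $\widetilde{B}$ depending only on $B$, so that $T_2$ is uniform over $B$, is a point the paper leaves implicit.
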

\begin{proof}
	If $(u,\gamma,\theta)$ is a weak solution, then there exists a certain sequence of indexes $n_k$ such that convergences of Lemma \ref{lemma:convergence} hold for this sequence. Fix $t\geq T_2$, where $T_2$ is the same as in Lemma \ref{lem:invariance}. Then by Lemma \ref{lem:invariance} the expression $\|\nabla u^{n_k}(t)\|_2^2 + \|\nabla \gamma^{n_k}(t)\|_2^2$ is bounded uniformly with respect to $n$, whence, for a subsequence, denoted again by $n_k$
	$$
	 u^{n_k}(t) \to a\quad \textrm{weakly in}\quad V\quad \textrm{and}\quad \gamma^{n_k}(t)\to b\quad \textrm{weakly in}\quad W_3\quad \textrm{as}\quad n\to \infty.  
	$$
	In view of Lemma \ref{lemma:convergence} $a=u(t)$ and $b=\gamma(t)$ and the convergences hold for the whole subsequence $n_k$. Sequential weak lower semicontinuity of norms implies that
	\begin{align*}
	& \|\nabla u(t)\|_2^2 + \|\nabla \gamma(t)\|_2^2 \leq \liminf_{k\to\infty} \|\nabla u^{n_k}(t)\|_2^2 + \liminf_{k\to\infty} \|\nabla \gamma^{n_k}(t)\|_2^2\\
	& \qquad \leq  \liminf_{k\to\infty} (\|\nabla u^{n_k}(t)\|_2^2 + \|\nabla \gamma^{n_k}(t)\|_2^2) \leq \frac{4\pi}{\sqrt{6}}AD{\rm Ra}^2,
	\end{align*}
	and the proof is complete. 
\end{proof}
%%%%%%%%%%%%%%%%%%%%%%%%%%%%%%%%%%%%%%%%%%%%%%%%%%%%%%%%%
\subsection{Gradient estimates for temperature.} We pass to the derivation of the uniform estimates for $\|\nabla \theta\|_2$. Before we derive the estimate for this value, we need an auxiliary estimate for the time integral of temperature for the approximative problem.   
\begin{lemma}\label{lemma:Temp_int}
	Assume (H). Let $u_n, \gamma_n, \theta_n$ be the solution of the approximative problem given in Definition \ref{def-weak-sol-evol-galerkin} where the initial data $(u_0, \gamma_0, \theta_0)$ belongs to a bounded set in $H\times E_3\times E_1$. There exists a constant $T_3$ depending on $B$ and the constants present in the formulation of the problem such that for every $t_2 > t_1 \geq T_3$ there holds
	$$e^{-\pi^2 t_2}\int_{t_1}^{t_2} e^{\pi^2 s} \|\nabla  \theta^n(s)\|_2^2\, ds \leq 
	16 A e^{-\pi^2 (t_2-t_1)} + \frac{64 A\sqrt{D}{\rm Ra}}{\pi^2}.
	$$
\end{lemma}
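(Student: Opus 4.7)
The plan is to derive a differential inequality for $\|\theta^n(t)\|_2^2$ from the Galerkin temperature equation, multiply by the integrating factor $e^{\pi^2 t}$, and control the right-hand side using the absorbing-ball bounds from Lemmas \ref{lem:max_temp} and \ref{lem:invariance}. Since $\theta^n(t)\in W_1$ is an admissible test function in \eqref{eq:4f_w_galer}, testing with $\eta = \theta^n(t)$ and using $(u^n(t)\cdot\nabla\theta^n(t),\theta^n(t))=0$ (valid because $u^n$ is divergence free and $\theta^n$ satisfies the required boundary conditions) gives
$$\frac{d}{dt}\|\theta^n(t)\|_2^2 + 2\|\nabla\theta^n(t)\|_2^2 = 2(u_3^n(t),\theta^n(t)).$$
Splitting $2\|\nabla\theta^n\|_2^2 \geq \|\nabla\theta^n\|_2^2 + \pi^2\|\theta^n\|_2^2$ via Poincar\'e and multiplying through by $e^{\pi^2 t}$, the two lower-order terms combine into a total derivative, yielding
$$\frac{d}{dt}\bigl(e^{\pi^2 t}\|\theta^n(t)\|_2^2\bigr) + e^{\pi^2 t}\|\nabla\theta^n(t)\|_2^2 \leq 2e^{\pi^2 t}(u_3^n(t),\theta^n(t)).$$

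Next I would integrate from $t_1$ to $t_2$, drop the nonnegative boundary term $e^{\pi^2 t_2}\|\theta^n(t_2)\|_2^2$, apply Cauchy--Schwarz to the right-hand side, and multiply by $e^{-\pi^2 t_2}$. I choose $T_3$ so that $T_3 \geq T_2$ (with $T_2$ from Lemma \ref{lem:invariance}) and so that the exponentially decaying contribution in Lemma \ref{lem:max_temp} has dropped below $2\sqrt{A}$; then for $t\geq T_3$ the uniform bounds $\|\theta^n(t)\|_2 \leq 4\sqrt{A}$ and, by Poincar\'e applied to Lemma \ref{lem:invariance}, $\|u^n(t)\|_2 \leq \|\nabla u^n(t)\|_2/\pi \leq c\sqrt{AD}\,{\rm Ra}$ with $c = 2/(\pi\sqrt{6})^{1/2}$ hold. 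Both estimates are uniform over $(u_0,\gamma_0,\theta_0)\in B$ since the absorbing times in those lemmas depend only on the bound for $B$ in $H\times E_3\times E_1$.

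Substituting these, the $\|\theta^n(t_1)\|_2^2$ contribution produces the first term $16A\,e^{-\pi^2(t_2-t_1)}$ in the target inequality, while
$$2e^{-\pi^2 t_2}\int_{t_1}^{t_2}e^{\pi^2 s}\|u^n(s)\|_2\|\theta^n(s)\|_2\,ds \leq \frac{8c\,A\sqrt{D}\,{\rm Ra}}{\pi^2}\bigl(1-e^{-\pi^2(t_2-t_1)}\bigr),$$
which is comfortably dominated by $\frac{64 A\sqrt{D}\,{\rm Ra}}{\pi^2}$. The entire argument is an integrating-factor computation combined with bookkeeping of constants; the only point requiring care is confirming that $T_3$ can be chosen uniformly over the bounded family $B$, which is immediate from the uniformity statements of Lemmas \ref{lem:max_temp} and \ref{lem:invariance}, and I do not foresee any genuine obstacle beyond this.
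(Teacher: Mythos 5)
Your proof is correct and follows essentially the same path as the paper's: test the Galerkin temperature equation with $\theta^n$, split off a Poincar\'e term, multiply by the integrating factor $e^{\pi^2 t}$, integrate, and close with the uniform absorbing-ball bounds $\|\theta^n(t)\|_2\leq 4\sqrt{A}$ and $\|u^n(t)\|_2\lesssim\sqrt{AD}\,{\rm Ra}$. The only immaterial difference is the source of the $L^2$ velocity bound: you derive it from the enstrophy absorbing ball of Lemma \ref{lem:invariance} via Poincar\'e, whereas the paper invokes the $L^2$ energy bound of Lemma \ref{le:inf-est-1} directly; both give a constant multiple of $\sqrt{AD}\,{\rm Ra}$ and lead to the same (generously padded) final constant.
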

\begin{proof}
	Testing \eqref{eq:4f_w_galer} with $\theta^n(t)$ it follows that 
	$$
	\frac{1}{2}\frac{d}{dt} \|\theta^n(t)\|_2^2 + \|\nabla  \theta^n(t)\|_2^2  = (u^n_3(t),\theta(t)).
	$$
	We deduce
	$$
	\frac{d}{dt} \|\theta^n(t)\|_2^2 + \|\nabla  \theta^n(t)\|_2^2 + \lambda_1\|\theta^n(t)\|_2^2 \leq  2\|u^n(t)\|\|\theta^n(t)\|.
	$$
	Multiplying by the integrating factor $e^{\lambda_1 t}$ and integrating from $t_1$ to $t_2$ it follows that
	$$\|\theta^n(t_2)\|_2^2 e^{\lambda_1 t_2} +\int_{t_1}^{t_2} e^{\lambda_1 s} \|\nabla  \theta^n(s)\|_2^2\, ds \leq 
	\|\theta^n(t_1)\|_2^2 e^{\lambda_1 t_1} + 2\int_{t_1}^{t_2}e^{\lambda_1 s}\|u^n(s)\|\|\theta^n(s)\|\, ds.
	$$ 
	Consequently
	$$e^{-\lambda_1 t_2}\int_{t_1}^{t_2} e^{\lambda_1 s} \|\nabla  \theta^n(s)\|_2^2\, ds \leq 
	\|\theta^n(t_1)\|_2^2 e^{-\lambda_1 (t_2-t_1)} + e^{-\lambda_1 t_2}2\int_{t_1}^{t_2}e^{\lambda_1 s}\|u^n(s)\|\|\theta^n(s)\|\, ds.
	$$
	Due to estimates \eqref{eq:u-omega-l2-est} and \eqref{eq:max_theta} which are valid also for the approximative solution we can choose $T_3$ sufficiently large (depending on the initial data $u_0, \gamma_0, \theta_0$ and constants $A, {\rm Ra}, \epsilon, M, L, \lambda_1$) such that 
	$$
	\|u^n(t)\|_2 \leq 4\sqrt{AD}{\rm Ra}\quad \textrm{and}\quad \|\theta^n(t)\|_2 \leq 4\sqrt{A} \quad \textrm{for every}\quad t\geq T_3.
	$$
    So, if only $t_2 > t_1 \geq T_3$, then
	$$e^{-\lambda_1 t_2}\int_{t_1}^{t_2} e^{\lambda_1 s} \|\nabla  \theta^n(s)\|_2^2\, ds \leq 
	16 A e^{-\lambda_1 (t_2-t_1)} + \frac{64 A\sqrt{D}{\rm Ra}}{\lambda_1}.
	$$
	The proof is complete.
\end{proof}

\begin{lemma}\label{le:grad-temp-est-H1}
Assume (H).
 Let $B$ be a bounded set of initial data in $W\times E_3\times E_1$ and	let $u_n, \gamma_n, \theta_n$ be the solution of the approximative problem given in Definition \ref{def-weak-sol-evol-galerkin} and $u, \gamma, \theta$ be a weak solution given in Definition \ref{def-weak-sol-evol}. There exists a constant $T_4$ depending on the constants of the problem and the set $B$ and a universal constant $c_4$ such that for every $t \geq T_4$ there hold the estimates
\begin{align}\label{eq:grad-temp-est-H1}
&\|\nabla \theta^n(t)\|_2^2 \leq c_4 A D^{3/2}(1+A)(1+\mathrm{Ra}^3),\\
& \|\nabla \theta(t)\|_2^2 \leq c_4 A D^{3/2}(1+A)(1+\mathrm{Ra}^3).\label{eq:est-grad-temp-weak}
\end{align}
\begin{proof}
Testing \eqref{eq:4f_w_galer}  by $-\Delta\theta^n(t)$  we obtain
$$
\frac{1}{2}\frac{d}{dt}\|\nabla \theta^n(t)\|_2^2   + \|\Delta \theta^n(t)\|_2^2 \leq \|u^n_3(t)\|_2\|\Delta \theta^n(t)\|_2 + \|u^n(t)\|_6\|\|\nabla\theta^n(t)\|_{3}\|\Delta \theta^n(t)\|_2. 
$$
By the Gagliardo--Nirenberg inequality, and the continuity of embedding $V\subset L^6(\Omega)^3$ it follows that
$$
\frac{1}{2}\frac{d}{dt}\|\nabla \theta^n(t)\|_2^2   + \|\Delta \theta^n(t)\|_2^2 \leq \|u^n(t)\|_2\|\Delta \theta^n(t)\|_2 + c_2\|\nabla u^n(t)\|_2\|\|\nabla\theta^n(t)\|_{2}^{1/2}\|\Delta \theta^n(t)\|^{3/2}_2,
$$
with a constant $c_2 >0$. Now, the Young inequality implies
$$
\frac{d}{dt}\|\nabla \theta^n(t)\|_2^2   + \|\Delta \theta^n(t)\|_2^2 \leq 2\|u^n(t)\|^2_2 + c_3\|\nabla u^n(t)\|^4_2\|\|\nabla\theta^n(t)\|_{2}^{2},
$$
with a constant $c_3>0$. Assuming that $t\geq T_2$ and using Lemma \ref{lem:invariance} as well as the Poincar\'{e} inequality it follows that
$$
\frac{d}{dt}\|\nabla \theta^n(t)\|_2^2   + \pi^2\|\nabla\theta^n(t)\|_2^2 \leq \frac{8}{\pi\sqrt{6}}AD\mathrm{Ra}^2 + c_3\frac{4\pi}{\sqrt{6}}AD\mathrm{Ra}^2\|\nabla\theta^n(t)\|_{2}^{2}.
$$
We multiply by the integrating factor $e^{\pi^2 s}$, whence
$$
\frac{d}{dt}(\|\nabla \theta^n(s)\|_2^2e^{\pi^2 s}) \leq \frac{8}{\pi\sqrt{6}}AD\mathrm{Ra}^2e^{\pi^2 s} + c_3\frac{4\pi}{\sqrt{6}}AD\mathrm{Ra}^2\|\nabla\theta^n(s)\|_{2}^{2}e^{\pi^2 s}\quad \textrm{for a.e.}\quad s\geq T_2.
$$
Now assume that $t_2 > t_1 \geq \max\{ T_2, T_3\}$, where $T_3$ is given by  Lemma \ref{lemma:Temp_int}, and integrate the above inequality from $t_1$ to $t_2$. We obtain
$$
\|\nabla \theta^n(t_2)\|_2^2e^{\pi^2 t_2} \leq \|\nabla \theta^n(t_1)\|_2^2e^{\pi^2 t_1} + \frac{8}{\pi^3\sqrt{6}}AD\mathrm{Ra}^2e^{\pi^2 t_2} + c_3\frac{4\pi}{\sqrt{6}}AD\mathrm{Ra}^2\int_{t_1}^{t_2}\|\nabla\theta^n(s)\|_{2}^{2}e^{\pi^2 s}\, ds.
$$
Using Lemma \ref{lemma:Temp_int} we deduce
$$
\|\nabla \theta^n(t_2)\|_2^2 \leq \|\nabla \theta^n(t_1)\|_2^2e^{-\pi^2 (t_2-t_1)} + \frac{8}{\pi^3\sqrt{6}}AD\mathrm{Ra}^2 + c_3\frac{64\pi}{\sqrt{6}}A^2D\mathrm{Ra}^2 + c_3\frac{256}{\pi\sqrt{6}}A^2D^{3/2}\mathrm{Ra}^3.
$$
Now we choose $T_4 = \max\{  T_2, T_3 \} +1$. Let $t_2 \geq T_4$ and integrate the above inequality with respect to $t_1$ over the interval $[\max\{  T_2, T_3 \}, \max\{  T_2, T_3 \}+1]$. It follows that
$$
\|\nabla \theta^n(t_2)\|_2^2 \leq 16 A  + \frac{64}{\pi^2} A\sqrt{D}{\rm Ra} + \frac{8}{\pi^3\sqrt{6}}AD\mathrm{Ra}^2 + c_3\frac{64\pi}{\sqrt{6}}A^2D\mathrm{Ra}^2 + c_3\frac{256}{\pi\sqrt{6}}A^2D^{3/2}\mathrm{Ra}^3.
$$
The proof of \eqref{eq:grad-temp-est-H1} is complete. Now, \ref{eq:est-grad-temp-weak} follows from the fact that, for a subsequence of indexes $\theta^n(t) \to \theta(t)$ weakly in $E_1$ by Lemma \ref{lemma:convergence}, and, for another subsequence $\theta^n(t)\to a$ weakly in $W_1$ from boundedness in  \eqref{eq:grad-temp-est-H1}. Hence it must be $a = \theta(t)$, and the sequential weak lower semicontinuity of the norm ends the proof of \eqref{eq:est-grad-temp-weak}. 
\end{proof}
\end{lemma}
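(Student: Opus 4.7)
The plan is to establish the bound first for the approximative solution $\theta^n$ via an $H^1$ energy estimate, then pass to the weak solution by weak lower semicontinuity. First I would test the Galerkin temperature equation \eqref{eq:4f_w_galer} with $-\Delta \theta^n(t)$, which is a legitimate choice since $\theta^n$ is assumed smooth in $x$ (the temperature equation is discretized by a full Galerkin method in time-PDE form). This yields
\begin{equation*}
\tfrac{1}{2}\tfrac{d}{dt}\|\nabla \theta^n(t)\|_2^2 + \|\Delta \theta^n(t)\|_2^2 = (u_3^n(t),-\Delta\theta^n(t)) + (u^n(t)\cdot\nabla\theta^n(t),\Delta\theta^n(t)),
\end{equation*}
after which I would bound the convective term by $\|u^n\|_{L^6}\|\nabla\theta^n\|_{L^3}\|\Delta\theta^n\|_2$, use the Sobolev embedding $V\hookrightarrow L^6$ and the Gagliardo--Nirenberg interpolation $\|\nabla\theta^n\|_{L^3}\lesssim\|\nabla\theta^n\|_2^{1/2}\|\Delta\theta^n\|_2^{1/2}$, and absorb $\|\Delta\theta^n\|_2^2$ into the left side via Young's inequality.

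The resulting differential inequality will take the form
\begin{equation*}
\tfrac{d}{dt}\|\nabla\theta^n(t)\|_2^2 + \pi^2\|\nabla\theta^n(t)\|_2^2 \leq C_1\|u^n(t)\|_2^2 + C_2\|\nabla u^n(t)\|_2^4\,\|\nabla\theta^n(t)\|_2^2,
\end{equation*}
where I apply Poincaré on the left. For $t\geq T_2$, Lemma~\ref{lem:invariance} freezes the coefficient $\|\nabla u^n(t)\|_2^4$ at a quantity of order $(AD\mathrm{Ra}^2)^2$. Next I would multiply by the integrating factor $e^{\pi^2 t}$ and integrate from $t_1$ to $t_2$, both $\geq \max\{T_2,T_3\}$, obtaining a bound on $\|\nabla\theta^n(t_2)\|_2^2$ in terms of $e^{-\pi^2(t_2-t_1)}\|\nabla\theta^n(t_1)\|_2^2$, a constant of order $AD\mathrm{Ra}^2/\pi^2$, and $e^{-\pi^2 t_2}\int_{t_1}^{t_2}e^{\pi^2 s}\|\nabla\theta^n(s)\|_2^2\,ds$, the last being controlled by Lemma~\ref{lemma:Temp_int}.

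The main obstacle is removing the dependence of the final bound on $\|\nabla\theta^n(t_1)\|_2^2$, since the initial data is only bounded in $E_1$, not $W_1$. The trick I would use is a mean-value argument: average the previous inequality with respect to $t_1$ over the interval $[\max\{T_2,T_3\},\max\{T_2,T_3\}+1]$, and invoke Lemma~\ref{lemma:Temp_int} once more to bound $\int\|\nabla\theta^n(t_1)\|_2^2\,dt_1$. Setting $T_4=\max\{T_2,T_3\}+1$, this yields the uniform estimate \eqref{eq:grad-temp-est-H1}, whose right-hand side is a polynomial of degree three in $\mathrm{Ra}$ with coefficients involving $A$ and $D^{3/2}$, consistent with the claim.

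Finally, to pass to the weak solution bound \eqref{eq:est-grad-temp-weak}, I would observe that \eqref{eq:grad-temp-est-H1} furnishes a uniform $W_1$-bound on $\theta^{n_k}(t)$ for each fixed $t\geq T_4$ along the approximating subsequence. By weak compactness in $W_1$ combined with the already established weak convergence $\theta^{n_k}(t)\rightharpoonup\theta(t)$ in $E_1$ from Lemma~\ref{lemma:convergence}, the two weak limits coincide, and the sequential weak lower semicontinuity of $\|\nabla\cdot\|_2$ transfers the bound to $\theta(t)$, completing the proof.
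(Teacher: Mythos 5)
Your proposal is correct and follows essentially the same route as the paper's proof: the $H^1$ energy estimate obtained by testing with $-\Delta\theta^n$, Gagliardo--Nirenberg plus Young to absorb $\|\Delta\theta^n\|_2^2$, the enstrophy bound of Lemma \ref{lem:invariance} for $t\geq T_2$, the integrating factor $e^{\pi^2 t}$, the averaging of $t_1$ over a unit interval combined with Lemma \ref{lemma:Temp_int} to eliminate $\|\nabla\theta^n(t_1)\|_2^2$, and weak lower semicontinuity for the passage to the weak solution. No substantive differences.
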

%%%%%%%%%%%%%%%%%%%%%%%%%%%%%%%%%%%%%%%%%%%%%%%%%%%%%%%%%%%%%%%%%%%%
%%%%%%%%%%%%%%%%%%%%%%%%%%%%%%%%%%%%%%%%%%%%%%%%%%%%%%%%%%%%%%%%%%%%
\section{Global attractors, their existence and invariance.} \label{sect:global}

\subsection{Existence of a global attractor.} From now on we make the standing assumption that the assumption (H) holds, i.e., that
$$
L \geq \frac{16 }{3\pi^2} K
\quad \textrm{and}\quad 
{\rm Pr} \geq 2c_1{\rm Ra}{D^{3/2}\sqrt{A}}.
$$
In what follows we will use Theorem \ref{thm:attr} with $X = H\times E_3\times E_1$ and $Y$ to be defined later. In fact if $Z = V \times W_3\times W_1$ then $Y$  will be a metric space given by a certain closed and bounded subset of $Z$ equipped with its topology. We need first to show that the multivalued map
\begin{align}
& S(t)(u_0,\gamma_0,\theta_0)  = \{ (u(t), \gamma(t), \theta(t)) \, :\ u,\gamma,\theta \quad \textrm{is a weak solution}\nonumber \\
	&\qquad \qquad \qquad \qquad  \textrm{given by Definition \ref{def-weak-sol-evol}  with initial data}\  (u_0,\gamma_0,\theta_0)\in X \},\label{eq:multivalued}
\end{align}
is a multivalued eventual semiflow, and that it satisfies the dissipativity and asymptotic compactness properties of Theorem \ref{thm:attr}. We start from the proof that it is a  multivalued eventual semiflow. 

\begin{lemma}\label{lem:multi}
	The family of multivalued mappings $\{ S(t) \}_{t\geq 0}$ defined by \eqref{eq:multivalued} is a multivalued eventual semiflow.
\end{lemma}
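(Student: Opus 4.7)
The plan is to verify the two clauses of Definition~\ref{def:eventual}. Property (i) is immediate from Definition~\ref{def-weak-sol-evol}, which stipulates that every weak solution satisfies $u(0)=u_0$, $\gamma(0)=\gamma_0$, $\theta(0)=\theta_0$; hence $S(0)v=\{v\}$ for every $v\in X=H\times E_3\times E_1$.

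For property (ii), fix $B\in\mathcal{B}(X)$ and choose $t_1(B):=T_4(B)$, the time furnished by Lemma~\ref{le:grad-temp-est-H1} (which already absorbs the $H^1$-entry time $T_2(B)$ of Lemma~\ref{lem:enstrophy_2}). I need to show that for every weak solution $(u,\gamma,\theta)$ starting from some $(u_0,\gamma_0,\theta_0)\in B$, every $t\geq t_1(B)$ and every $s\geq 0$,
$$
(u(t+s),\gamma(t+s),\theta(t+s))\in S(s)S(t)(u_0,\gamma_0,\theta_0).
$$
Setting $v':=(u(t),\gamma(t),\theta(t))\in S(t)(u_0,\gamma_0,\theta_0)$, the task reduces to proving that the translated triple $w(\sigma):=(u(t+\sigma),\gamma(t+\sigma),\theta(t+\sigma))$ for $\sigma\geq 0$ satisfies $w(s)\in S(s)v'$ for every $s\geq 0$.

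The key step is to upgrade $w$ to a \emph{strong} solution in the sense of Definition~\ref{def-strong} with initial value $v'$. Lemmas~\ref{lem:enstrophy_2} and~\ref{le:grad-temp-est-H1} already supply a time-uniform bound
$$
\|\nabla u(t+\sigma)\|_2^2+M\|\nabla\gamma(t+\sigma)\|_2^2+\|\nabla\theta(t+\sigma)\|_2^2\leq C(B),\qquad \sigma\geq 0,
$$
so $w\in L^\infty([0,\infty);V\times W_3\times W_1)$. The integrated second-order bounds on $\|P\Delta u^n\|_2^2$, $\|\Delta\gamma^n\|_2^2$ appear already in the proof of Lemma~\ref{lem:enstrophy}, and the one on $\|\Delta\theta^n\|_2^2$ in the proof of Lemma~\ref{le:grad-temp-est-H1}; these transfer to $w$ by weak lower semicontinuity, giving $w\in L^2_{loc}([0,\infty);D(-P\Delta)\times D_3(-\Delta)\times D_1(-\Delta))$. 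Testing equations~\eqref{eq:1f}--\eqref{eq:4f} and using $V\hookrightarrow L^6$ to control the convective terms then yields $(u_t,\gamma_t,\theta_t)\in L^2_{loc}(H\times E_3\times E_1)$, and continuity $w\in C([0,\infty);V\times W_3\times W_1)$ follows by the Lions--Magenes lemma. Thus $w$ is a strong solution from $v'$.

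Finally, Lemma~\ref{lem:existence} produces some weak solution $\tilde w$ (a Galerkin limit) starting from $v'$, and the weak--strong uniqueness of Lemma~\ref{lemma:weak-strong} forces $\tilde w\equiv w$; in particular $w$ is itself a Galerkin limit, so $w(s)\in S(s)v'$ for all $s\geq 0$, which is (ii). The main obstacle is the regularity upgrade of $w$ to a strong solution: one has to carry the enstrophy-level bounds of Lemmas~\ref{lem:enstrophy} and~\ref{le:grad-temp-est-H1} from the Galerkin approximations to the limit in their integrated form (not merely pointwise in time) and then bootstrap the equations to obtain $L^2_{loc}$ bounds on the time derivatives without access to stronger test functions. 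All the ingredients are already present in the preceding estimates, and the argument is essentially a careful accounting.
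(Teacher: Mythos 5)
Your argument is correct, but it follows a different route than the paper's own proof. You verify (ii) by first upgrading the translated trajectory $w(\sigma)=(u(t+\sigma),\gamma(t+\sigma),\theta(t+\sigma))$ to a \emph{strong} solution (this is exactly the content of Lemma~\ref{lem:strong}, which in the paper comes \emph{after} Lemma~\ref{lem:multi}), and then invoke the existence of some Galerkin-limit weak solution from $v'=(u(t),\gamma(t),\theta(t))$ together with weak--strong uniqueness (Lemma~\ref{lemma:weak-strong}) to identify it with $w$. The paper instead stays entirely at the level of Definition~\ref{def-weak-sol-evol}: it observes that the translated Galerkin solutions $(u^{n_k}(t+\cdot),\gamma^{n_k}(t+\cdot),\theta^{n_k}(t+\cdot))$ are themselves admissible $(u,\gamma)$-Galerkin solutions whose initial data live in $H^{n_k}\times E_3^{n_k}\times E_1$, that they converge to the translated weak solution in the senses \eqref{conv_1}--\eqref{conv_3}, and, crucially, that the weak convergences $u^{n_k}(t)\rightharpoonup u(t)$ in $H$ etc.\ provided by Lemma~\ref{lemma:convergence} become \emph{strong} once the uniform $H^1$ bounds of Lemmas~\ref{lem:invariance} and~\ref{le:grad-temp-est-H1} are combined with the compact embeddings $V\subset H$, $W_3\subset E_3$, $W_1\subset E_1$. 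This directly checks the strong convergence of Galerkin initial data required by Definition~\ref{def-weak-sol-evol}. Your approach buys nothing extra here and is somewhat heavier: it needs the full $L^2_{loc}$ second-derivative and time-derivative bounds plus the weak--strong uniqueness lemma, whereas the paper gets by with the pointwise-in-time $H^1$ bounds and one compactness observation. On the other hand, your argument is self-contained in the sense that it does not rely on the particular Galerkin structure of the solution concept once $w$ is shown to be strong, which makes the logical dependence on the solution definition cleaner. One small point: you are right that $t_1(B)=T_4(B)$ suffices, since by construction $T_4=\max\{T_2,T_3\}+1>T_2$. Your outline of the regularity upgrade is a bit compressed (you need to do the testing at the Galerkin level, pass to the limit, and check $w(0)=v'$ via continuity), but these are exactly the steps the paper carries out in Lemma~\ref{lem:strong}, so nothing essential is missing.
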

\begin{proof}
	Lemma \ref{lem:existence} implies that $S(t)(u_0,\gamma_0,\theta_0)$ is nonempty for every $(u_0,\gamma_0,\theta_0)\in X$. It is clear from the definition that $S(0)(u_0,\gamma_0,\theta_0) = \{ u_0,\gamma_0,\theta_0 \}$. We only have to prove the assertion (ii) of Definition \ref{def:eventual}. Consider the weak solution $(u(s),\gamma(s),\theta(s))$ given by Definition \ref{def-weak-sol-evol} and consider its suffix $$(u(s+t),\gamma(s+t),\theta(s+t))|_{s\in [0,\infty)}$$ where $t \geq t_1(B) = \max\{T_2,T_4\}$ with $T_2$ as in Lemma \ref{lem:invariance} and $T_4$ is as in Lemma \ref{le:grad-temp-est-H1}. The weak solution is a limit of the $(u,\gamma)$-Galerkin problems with strongly in $X$ converging initial data. It is clear that the regularity imposed in Definition \ref{def-weak-sol-evol} holds for the restrictions $(u(s-t),\gamma(s-t),\theta(s-t))|_{s\geq t}$ and they satisfy 
	the almost everywhere in time equations \eqref{eq:1f_w}--\eqref{eq:4f_w}. The restrictions of weak solution are the limits in the sense given in \eqref{conv_1}--\eqref{conv_3} of the restrictions of the approximative problems. Lemma  \ref{lemma:convergence} implies that, for a subsequence
	\begin{align*}
	& u^{n_k}(t) \to u(t)\quad \textrm{weakly in}\ H,\\
	& \gamma^{n_k}(t) \to \gamma(t)\quad \textrm{weakly in}\ E_3,\\
	& \theta^{n_k}(t) \to \theta(t)\quad \textrm{weakly in}\ E_1.	
	\end{align*} 
	But the estimates of Lemmas \ref{lem:invariance} and  \ref{le:grad-temp-est-H1} together with the compactness of the embeddings $V \subset H$, $W_3\subset E_3$, and $W_1\subset E_1$ imply that the above weak convergences are in fact strong. Hence, all requirements of Definition \ref{def:eventual} are satisfied by the family $\{ S(t) \}_{t\geq 0}$, and the proof is complete. 
\end{proof}

Now, the following dissipativity result is in a simple consequence of the previously derived a priori estimates. 

\begin{lemma}\label{lem:diss}
There exists a closed set $B_1 \in \mathcal{B}(Z)$ such that for every $B\in \mathcal{B}(X)$ there exists the time $t_1(B)$ such that 
$$
\bigcup_{t\geq t_1(B)}S(t)B \subset B_1
$$
\end{lemma}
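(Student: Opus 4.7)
The plan is to package the a priori estimates of Section \ref{Basic estimates} into a dissipativity statement. I will show that a single closed bounded subset of $Z$, whose radius depends only on the problem data, absorbs every bounded set of $X$ after a time depending on $B$. The lemma is essentially a bookkeeping consequence of the uniform-in-$B$ character of the $H^1$-estimates already in hand.

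First I would fix an arbitrary bounded set $B \in \mathcal{B}(X)$. Lemma \ref{lem:enstrophy_2} supplies a time $T_2(B)$ such that every weak solution $(u,\gamma,\theta)$ with initial datum in $B$ satisfies
$$
\|\nabla u(t)\|_2^2 + M\|\nabla \gamma(t)\|_2^2 \leq R_1 := \frac{4\pi}{\sqrt{6}} AD\, \mathrm{Ra}^2 \qquad \text{for every } t \geq T_2(B).
$$
Lemma \ref{le:grad-temp-est-H1} then supplies a time $T_4(B)$ such that
$$
\|\nabla \theta(t)\|_2^2 \leq R_2 := c_4\, A D^{3/2}(1+A)(1+\mathrm{Ra}^3) \qquad \text{for every } t \geq T_4(B).
$$
Setting $t_1(B) = \max\{T_2(B),\, T_4(B)\}$ and
$$
B_1 = \left\{(u,\gamma,\theta) \in Z \,:\, \|\nabla u\|_2^2 + M\|\nabla \gamma\|_2^2 \leq R_1,\ \|\nabla \theta\|_2^2 \leq R_2\right\},
$$
the desired inclusion $\bigcup_{t \geq t_1(B)} S(t) B \subset B_1$ follows directly from the two estimates.

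It remains only to observe that $R_1$ and $R_2$, and hence $B_1$, depend only on the problem constants and not on the choice of $B$, and that $B_1$ is a closed bounded subset of $Z$. Boundedness in $Z$ is immediate from the Poincar\'e inequalities stated in Section \ref{sec:intro}, which control $\|u\|_{H^1}$, $\|\gamma\|_{H^1}$, and $\|\theta\|_{H^1}$ by the gradient seminorms appearing in the definition of $B_1$. Closedness in $Z$ follows from the lower semicontinuity of those seminorms under strong convergence in $Z$. There is no substantive obstacle here; the entire content of the statement is the uniform (in $B$) character of the enstrophy and temperature-gradient bounds, which has already been proved in Lemmas \ref{lem:enstrophy_2} and \ref{le:grad-temp-est-H1}.
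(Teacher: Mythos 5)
Your argument is correct and is essentially the paper's own proof: the paper likewise takes $t_1(B)=\max\{T_2,T_4\}$ and deduces the inclusion directly from Lemmas \ref{lem:enstrophy_2} and \ref{le:grad-temp-est-H1}, with your explicit construction of $B_1$ and the closedness/boundedness remarks being routine bookkeeping the paper leaves implicit.
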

\begin{proof}
	The assertion follows easily from Lemmas \ref{lem:enstrophy_2} and \ref{le:grad-temp-est-H1}. It is enough to take $t_1 = \max \{ T_2, T_4\}$. 
\end{proof}

In the next lemma we establish that the restriction of any weak solution to the interval $[t,\infty)$, where $t\geq t_1(B)$ is in fact strong.
\begin{lemma}\label{lem:strong}
	Let $(u,\gamma,\theta)$ be the weak solution with the initial data 
	$(u_0,\gamma_0,\theta_0) \in B \in \mathcal{B}(X)$. For every $t\geq t_1(B)$ the translated restrictions
	$$
	(u(\cdot+t),\gamma(\cdot+t),\theta(\cdot+t))|_{[0,\infty)}
	$$ 
	are the strong solutions.	
\end{lemma}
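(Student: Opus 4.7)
The plan is to upgrade the already established $V\times W_3\times W_1$ bounds for the Galerkin approximations, valid after $t_1(B)=\max\{T_2,T_4\}$, into $L^2$-in-time bounds on the full strong-solution norms, and then pass to the limit. Fix $B\in\mathcal{B}(X)$, an initial datum $(u_0,\gamma_0,\theta_0)\in B$, and the weak solution $(u,\gamma,\theta)$ arising as the limit, in the sense of \eqref{conv_1}--\eqref{conv_3}, of a $(u,\gamma)$-Galerkin sequence $(u^{n_k},\gamma^{n_k},\theta^{n_k})$. Lemma \ref{lem:invariance} and Lemma \ref{le:grad-temp-est-H1} give, for every $s\geq t_1(B)$ and every index $n_k$, the uniform bounds
$$
\|\nabla u^{n_k}(s)\|_2^2+M\|\nabla\gamma^{n_k}(s)\|_2^2\leq \tfrac{4\pi}{\sqrt{6}}AD\mathrm{Ra}^2,\qquad \|\nabla\theta^{n_k}(s)\|_2^2\leq c_4 AD^{3/2}(1+A)(1+\mathrm{Ra}^3),
$$
independent of $n_k$ and of the particular choice of datum in $B$.

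Next, I return to the differential inequalities derived inside the proofs of Lemma \ref{lem:enstrophy} and Lemma \ref{le:grad-temp-est-H1}, but without sending the linear terms $\|P\Delta u^{n_k}\|_2^2$, $\|\Delta\gamma^{n_k}\|_2^2$, $\|\Delta\theta^{n_k}\|_2^2$ to the absorbing side: after absorbing only the cubic remainders via the uniform $V\times W_3\times W_1$ bound just recalled, integration over an arbitrary interval $[t,t+\sigma]\subset[t_1(B),\infty)$ yields
$$
\int_t^{t+\sigma}\!\bigl(\|P\Delta u^{n_k}(s)\|_2^2+\|\Delta\gamma^{n_k}(s)\|_2^2+\|\Delta\theta^{n_k}(s)\|_2^2\bigr)\,ds\leq C(B,\sigma),
$$
again uniform in $n_k$. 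Reinserting these bounds into the equations \eqref{eq:1f_w_galer}--\eqref{eq:4f_w_galer}, the convective terms are controlled in $L^2$ in space by interpolation, e.g.\ $\|(u^{n_k}\cdot\nabla)u^{n_k}\|_2\leq \|u^{n_k}\|_{L^\infty}\|\nabla u^{n_k}\|_2\leq c\|\nabla u^{n_k}\|_2^{1/2}\|P\Delta u^{n_k}\|_2^{1/2}\|\nabla u^{n_k}\|_2$, so the boundedness of $V\times W_3\times W_1$ and the integrated $H^2$ bound combine to give
$$
\int_t^{t+\sigma}\!\bigl(\|u^{n_k}_t(s)\|_2^2+\|\gamma^{n_k}_t(s)\|_2^2+\|\theta^{n_k}_t(s)\|_2^2\bigr)\,ds\leq C'(B,\sigma).
$$

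With these uniform estimates, the Banach--Alaoglu theorem extracts a further subsequence converging weakly in $L^2_{\mathrm{loc}}([t,\infty);D(-P\Delta)\times D_3(-\Delta)\times D_1(-\Delta))$ and with time derivatives converging weakly in $L^2_{\mathrm{loc}}([t,\infty);H\times E_3\times E_1)$. By uniqueness of weak limits and Lemma \ref{lemma:convergence}, the limits must be $(u(\cdot+0),\gamma(\cdot+0),\theta(\cdot+0))$ itself, so the translated restriction inherits the integrability required by Definition \ref{def-strong}. Continuity in $V, W_3, W_1$ then follows from the classical embedding $L^2(I;D(A))\cap H^1(I;X)\hookrightarrow C(I;D(A^{1/2}))$ applied with $A$ equal to the Stokes operator or the vector/scalar Laplacian. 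Finally, passing to the limit in \eqref{eq:1f_w_galer}--\eqref{eq:4f_w_galer} against test functions in $H$, $E_3$, $E_1$ produces exactly \eqref{eq:strong_1}--\eqref{eq:strong_3}.

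The main obstacle is the step that upgrades the pointwise $V\times W_3\times W_1$ bound into an integrated $H^2$ bound with constants independent of $n_k$; the cubic nonlinearity $\|\nabla u^{n_k}\|^6_2$ appearing in the enstrophy inequality of Lemma \ref{lem:enstrophy} would normally preclude this, but precisely because hypothesis (H) was used to show that $\|\nabla u^{n_k}\|_2^2+M\|\nabla\gamma^{n_k}\|_2^2$ stays trapped inside the ball $S_R$ after $T_2$, this cubic term contributes only a bounded, $n_k$-uniform forcing and hence no smallness of initial data is needed beyond the absorbing-time condition $t\geq t_1(B)$.
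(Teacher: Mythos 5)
Your proposal is correct and follows essentially the same route as the paper: uniform post-$t_1(B)$ gradient bounds for the Galerkin approximations are upgraded, by retesting with $-P\Delta u^n$, $-\Delta\gamma^n$, $-\Delta\theta^n$, into $n$-uniform $L^2_{loc}$-in-time $H^2$ bounds, then into $L^2$ bounds on the time derivatives, after which one passes to the limit. The only point treated more carefully in the paper is the limit passage in the convective terms, which requires strong convergence (obtained there via Aubin--Lions and the compact embedding $D(-P\Delta)\subset W^{1,3}(\Omega)^3$) rather than the weak convergence you cite; your estimates supply exactly the ingredients for that standard step.
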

\begin{proof}
	Let $(u^n,\gamma^n,\theta^n)$ be the sequence of the $(u,\gamma)$-Galerkin solutions convergent to $(u,\gamma,\theta)$ in the sense given by Definition \ref{def-weak-sol-evol}, and let $t \geq t_1(B)$, where $T_2$ and $T_4$ are given by Lemmas \ref{lem:enstrophy_2} and \ref{le:grad-temp-est-H1}. 
	Denote 
	$$
(\overline{u},\overline \gamma,\overline \theta)  = (u(\cdot+t),\gamma(\cdot+t),\theta(\cdot+t))|_{[0,\infty)}
	$$
	$$
	(\overline{u}^n,\overline \gamma^n,\overline \theta^n)  = (u^n(\cdot+t),\gamma^n(\cdot+t),\theta^n(\cdot+t))|_{[0,\infty)}
	$$
	By Lemmas \ref{lem:invariance}, \ref{lem:enstrophy_2} and \ref{le:grad-temp-est-H1} there exists a constant $C$ independent of $n$ such that 
\begin{equation}\label{eq:grad_bound}
\|\nabla \overline u^n(s)\|_2^2+ \|\nabla \overline \gamma^n(s)\|_2^2 + \|\nabla \overline \theta^n(s)\|_2^2 \leq C\quad \textrm{for}\quad s\geq 0,
\end{equation}
and
\begin{equation}
\|\nabla \overline u(s)\|_2^2+ \|\nabla \overline \gamma(s)\|_2^2 + \|\nabla \overline \theta(s)\|_2^2 \leq C\quad \textrm{for}\quad s\geq 0.
\end{equation}
Functions $(\overline{u}^n,\overline{\gamma}^n,\overline{\theta}^n)$ satisfy the equations
\begin{align}
& \epsilon\left(((\overline{u}^n)_t(s),v) + ((\overline{u}^n(s)\cdot \nabla) \overline{u}^n(s),v)\right) + (1+K)(\nabla \overline{u}^n(s),\nabla v) = 2K ({\rm rot}\, \overline{\gamma}^n(s),v) + {\rm Ra} (\overline{\theta}^n(s),v_3),\label{eq:galer_1}\\
& \epsilon M\left(((\overline{\gamma}^n)_t(s),\xi) + ((\overline{u}^n(s)\cdot \nabla)\overline{\gamma}^n(s),\xi)\right) + L(\nabla \overline{\gamma}^n(s),\nabla \xi) + G({\rm div}\,\overline{\gamma}^n(s), {\rm div}\,\xi) +
4 K (\overline{\gamma}^n(s),\xi)\nonumber\\
&\qquad = 2K ({\rm rot}\, \overline{u}^n(s),\xi),\label{eq:galer_2}\\\
& \langle (\overline{\theta}^n)_t(s), \eta \rangle + (\overline{u}^n(s)\cdot \nabla \overline{\theta}^n(s),\eta) + (\nabla  \overline{\theta}^n(s),\nabla \eta)  = ((\overline{u}^n)_3(s),\eta), \label{eq:galer_3}\
\end{align}
for any $(v,\xi,\eta)\in H^n\times E^n_3\times E_1$. Since, in the following estimates we do not need exactly to keep tract of the dependence on the constants of the problem, we will denote by $C$ the generic constant independent of $n$ and $k$. Testing \eqref{eq:galer_1} by $v = -P\Delta \overline{u}^n(t) $ and proceeding the same as in the proof of Lemma \ref{lem:enstrophy} we obtain the estimate
\begin{align*}
& \frac{\epsilon}{2}\frac{d}{dt}\|\nabla \overline{u}^n(t)\|^2_2+  (1+K)\|P\Delta \overline{u}^n(t)\|_2^2\nonumber \\
& \qquad \leq  2K\|\nabla \overline{\gamma}^n(t)\|_2\|P\Delta \overline{u}^n(t)\|_2 + {\rm Ra} \|\overline{\theta}^n(t)\|_2\|P\Delta \overline{u}^n(t)\|_2 + C\epsilon\|\nabla \overline{u}^n(t)\|^{3/2}_2\|P\Delta \overline{u}^n(t)\|^{3/2}_2.
\end{align*}
Estimates \eqref{eq:grad_bound} as well as the Cauchy and Young inequalities imply that 
$$
\frac{\epsilon}{2}\frac{d}{dt}\|\nabla \overline{u}_n^k(t)\|^2_2+  \|P\Delta \overline{u}_n^k(t)\|_2^2 \leq C.
$$
This means that
\begin{equation}\label{eq:e_aux_1}
\int_0^T \|P\Delta \overline{u}_n^k(t)\|_2^2\, dt \leq C(1+T). 
\end{equation}
Sequential weak lower semicontinuity of the norm in the space $L^2(0,T;D(-P\Delta))$, where $D(-P\Delta)$ is the domain of the Stokes operator $-P\Delta$ equipped with the $H^2$ norm, implies that $\overline{u} \in L^2(0,T;D(-P\Delta))$ also 
\begin{equation}\label{eq:e_aux_2}
\int_0^T \|P\Delta \overline{u}(t)\|_2^2\, dt \leq C(1+T). 
\end{equation}
In a standard way, from \eqref{eq:galer_1}, testing it by $v(t)$ with arbitrary $v\in L^2(0,T;H)$, estimating the nonlinear term using the Agmon inequality, and using the previous estimates we obtain the bound
\begin{equation}\label{eq:e_aux_3}
\int_0^T \|(\overline{u}^n)_t(t)\|_2^2\, dt \leq C(1+T).  
\end{equation}
Again, the sequential weak lower semicontinuity of the norm implies that 
$\overline{u}_t \in L^2(0,T;H)$, and that
\begin{equation}\label{eq:e_aux_4}
\int_0^T \|\overline{u}_t(t)\|_2^2\, dt \leq C(1+T).  
\end{equation}
In a similar way, analogous estimates on microrotation $\gamma$ and temperature $\theta$ are obtained from \eqref{eq:galer_2} and \eqref{eq:galer_3}. These estimates give us enough compactness to pass to the limit with $n\to \infty$  in \eqref{eq:galer_1}--\eqref{eq:galer_3}, whence $(\overline{u},\overline \gamma,\overline \theta)$ is a strong solution. We only show how to pass to the limit in the nonlinearm term 
$((\overline u^n\cdot\nabla \overline u^n),v)$. 
We know that, for a function $u\in L^2(0,T;D(-P\Delta))\cap H^1(0,T;H)$ there holds
\begin{align}
	& \overline{u}^n \to u \quad \textrm{weakly in}\quad L^2(0,T;D(-P\Delta)) \quad \textrm{and strongly in}\quad L^2(0,T;V\cap W^{1,3}(\Omega)^3),\label{conv_str1}\\
& \overline{u}^n_t \to u_t \quad \textrm{weakly in}\quad L^2(0,T;H),
\end{align}
where the second convergence in \eqref{conv_str1} follows from the Aubin--Lions lemma and the compact embedding $D(-P\Delta) \subset W^{1,3}(\Omega)^3$.  
Taking $v\in L^2(0,T;H)$ we obtain
\begin{align*}
& \int_0^T((\overline u^n(s)\cdot\nabla) \overline u^n(s)-( u(s)\cdot\nabla )  u(s),v(s))\, ds\\
&\quad  = \int_0^T( (\overline u^n(s)\cdot\nabla) (\overline u^n(s)-u(s)),v(s))\, ds + \int_0^T( ((\overline u^n(s)-u(s))\cdot\nabla)  u(s),v(s))\, ds = I + II.
\end{align*}
We estimate both terms separately
\begin{align*}
& |I| \leq \int_0^T \|\overline u^n(s)\|_{6} \|\nabla (\overline u^n(s)-u(s))\|_3 \|v(s)\|_2\, ds \\
& \quad \leq C\int_0^T \|\nabla \overline u^n(s)\|_{2} \|\nabla (\overline u^n(s)-u(s))\|_3 \|v(s)\|_2\, ds\\
& \quad  \leq C\|\overline u^n-u\|_{L^2(0,T;V\cap W^{1,3}(\Omega)^3)}\|v\|_{L^2(0,T;H)} \to 0,
\end{align*}
\begin{align*}
& |II| \leq \int_0^T \|\overline u^n(s)-u(s)\|_{\infty}\|\nabla  u(s)\|_2 \|v(s)\|_2\, ds \\
& \quad \leq C\int_0^T  \|\nabla (\overline u^n(s)-u(s))\|_2^{1/2}\|\overline u^n(s)-u(s)\|_{D(-P\Delta)}^{
1/2} \|v(s)\|_2\, ds\\
& \quad  \leq C\|v\|_{L^2(0,T;H)}\sqrt{\int_0^T  \|\nabla (\overline u^n(s)-u(s))\|_2\|\overline u^n(s)-u(s)\|_{D(-P\Delta)} \, ds}\\
& \quad \leq C\|v\|_{L^2(0,T;H)} \|\overline u^n-u\|^{1/2}_{L^2(0,T;V)}\left(\|\overline u^n\|_{L^2(0,T;D(-P\Delta))}+\|u\|_{L^2(0,T;D(-P\Delta))}\right)^{1/2} \to 0.
\end{align*}
 It is clear that we can pass to the limit in the nonlinear term in \eqref{eq:galer_1}. We skip details of passing to the limit in the remaining terms. In the linear ones the possibility of pass to the limit just follows from the weak convergence, and passing to the limit in nonlinear terms in \eqref{eq:galer_2} and \eqref{eq:galer_3} is done in a similar way as in the one in \eqref{eq:galer_1}. The the proof is complete. 
\end{proof}

We are in position to define the metric space $Y$. It is given by
$$
Y = \overline{\bigcup_{t\geq t_1(B_1)}S(t)B_1}^Z,
$$
which, equipped with the norm topology of $Z$ is a complete metric space, as a closed subset of $Z$. Lemma \ref{lem:diss} implies that $Y \subset B_1$, whence $Y$ is bounded. 

In the next lemma we show that $Y$ is absorbing, and hence it can be used as $B_0$ in Theorem \ref{thm:attr}.

\begin{lemma}\label{lemma_absorbing_attr}
	For every $B\subset \mathcal{B}(X)$ there exists $t_0=t_0(B)$ such that 
	$$
	\bigcup_{t\geq t_0}S(t)B \subset Y.
	$$
\end{lemma}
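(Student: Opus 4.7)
The plan is to combine three ingredients: the dissipativity provided by Lemma \ref{lem:diss} (which gives a uniform absorption time into $B_1$ for every $B \in \mathcal{B}(X)$); the eventual semiflow translation inclusion $S(s+t)B \subset S(s)S(t)B$, valid once $t \geq t_1(B)$, from Definition \ref{def:eventual}(ii) verified for $\{S(t)\}$ in Lemma \ref{lem:multi}; and the elementary monotonicity $S(s)A \subset S(s)A'$ whenever $A \subset A'$, which is immediate from the set-valued definition $S(s)A = \bigcup_{v\in A} S(s)v$.

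First I would note that $B_1 \in \mathcal{B}(Z)$, and since the identity $i:Z \to X$ is continuous (because $V \subset H$, $W_3 \subset E_3$, $W_1 \subset E_1$ are continuously embedded), we have $B_1 \in \mathcal{B}(X)$ as well. Therefore Lemma \ref{lem:diss} applies to $B_1$ itself and furnishes a time $t_1(B_1)$ such that $\bigcup_{r\geq t_1(B_1)}S(r)B_1 \subset B_1$; more importantly, this set is contained in $Y$ by the very definition of $Y$.

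Next, given $B \in \mathcal{B}(X)$, Lemma \ref{lem:diss} also gives a time $t_1(B)$ with $S(r)B \subset B_1$ for all $r \geq t_1(B)$. Moreover, by Lemma \ref{lem:multi} one may take the same $t_1(B)$ as the time after which the translation inclusion of Definition \ref{def:eventual}(ii) holds (both are governed by $\max\{T_2,T_4\}$). I then set
$$
t_0(B) := t_1(B) + t_1(B_1),
$$
and for any $t \geq t_0(B)$ I write $t = s + t_1(B)$ with $s = t - t_1(B) \geq t_1(B_1)$. Chaining the three ingredients yields
$$
S(t)B \;=\; S\bigl(s+t_1(B)\bigr)B \;\subset\; S(s)\,S(t_1(B))B \;\subset\; S(s)\,B_1 \;\subset\; \bigcup_{r\geq t_1(B_1)} S(r)B_1 \;\subset\; Y,
$$
where the first inclusion is the eventual semiflow property, the second combines monotonicity with dissipativity $S(t_1(B))B \subset B_1$, and the last is the definition of $Y$.

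I do not anticipate any genuine obstacle in this proof: it is the standard two-step absorption trick, where $B$ is first pushed into the $X$-dissipative ball $B_1$ and then $B_1$ is pushed into the $Z$-bounded absorbing set $Y$. The only point to verify carefully is that the eventual (as opposed to global) translation property is correctly invoked \emph{only after} waiting time $t_1(B)$, which is precisely why the waiting time $t_0(B)$ is the sum of the two absorption times rather than just one of them.
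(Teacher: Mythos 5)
Your proof is correct and follows essentially the same approach as the paper: define $t_0(B) = t_1(B) + t_1(B_1)$, use the eventual translation property from Lemma \ref{lem:multi} at the threshold $t_1(B)$, absorb into $B_1$ via Lemma \ref{lem:diss}, and then observe that $S(t-t_1(B))B_1 \subset Y$ since $t-t_1(B)\geq t_1(B_1)$. The only addition you make beyond the paper's argument is the (correct) explicit remark that $t_1(B)$ can be taken equal in Lemmas \ref{lem:multi} and \ref{lem:diss}, which the paper leaves implicit.
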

\begin{proof}
	Define $t_0 = t_1(B) + t_1(B_1)$ and take $t\geq t_0$.
	Clearly $t \geq  t_1(B)$ and hence, by Lemma \ref{lem:multi}
	$$
	S(t)B = S(t-t_1(B) + t_1(B))B \subset S(t-t_1(B))S(t_1(B))B. 
	$$
	We now use Lemma \ref{lem:diss} to deduce that
	$$
	S(t)B \subset S(t-t_1(B))B_1. 
	$$
	But $t-t_1(B) \geq t_1(B_1)$, and hence
	$$
	S(t)B \subset Y,
	$$
	and the proof is complete.
\end{proof}

We pass to the proof of the asymptotic compactness. The technique to prove it is based on the energy equation method, see for instance \cite{Ball}. Note that in Theorem 
\ref{thm:attr} it is only sufficient to obtain the asymptotic compactness for the initial data in the absorbing set $B_0$. We will in fact obtain the asymptotic compactness for the initial data in any set which is bounded in $X$.

\begin{lemma}\label{lem:comp}
	Assume that $B\in \mathcal{B}(X)$ and $(w_n,\xi_n,\eta_n) \in S(t_n)B$ with a sequence $t_n\to \infty$. Then the sequence $(w_n,\xi_n,\eta_n)$ is relatively compact in $Y$, i.e. $w_n$ is relatively compact in $V$, $\xi_n$ is relatively compact in $W_3$, and $\eta_n$ is relatively compact in $W_1$.
\end{lemma}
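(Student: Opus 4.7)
The plan is to use the energy equation method in the higher-order Sobolev norms, combined with a compactness argument applied to backward-translated sequences of solutions. By Lemma \ref{lemma_absorbing_attr}, for $n$ sufficiently large, $(w_n,\xi_n,\eta_n)\in Y\subset B_1$, so the sequence is bounded in $Z=V\times W_3\times W_1$. After extracting a subsequence, the compactness of the embeddings $V\subset H$, $W_3\subset E_3$, $W_1\subset E_1$ yields weak convergence in $Z$ and strong convergence in $X$ to some limit $(w,\xi,\eta)\in Z$. The task is to upgrade the weak convergence in $Z$ to strong convergence.

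Fix $T>0$. For $n$ large enough that $t_n\geq t_1(B)+T$, by Lemma \ref{lem:strong} the backward-translated restrictions
$$
(v_n,\varphi_n,\vartheta_n)(s):=(u_n,\gamma_n,\theta_n)(s+t_n-T),\quad s\in[0,T],
$$
are strong solutions on $[0,T]$. Lemmas \ref{lem:enstrophy_2} and \ref{le:grad-temp-est-H1} already provide uniform $Z$-bounds pointwise in $s\in[0,T]$. Re-running the enstrophy calculation from the proof of Lemma \ref{lem:enstrophy}, this time keeping all non-negative dissipation terms, and adjoining an analogous $H^2$-bound for $\vartheta_n$ (obtained by testing the temperature equation with $\Delta^2\vartheta_n$-type functionals, as in Lemma \ref{le:grad-temp-est-H1}), gives uniform bounds for $(v_n,\varphi_n,\vartheta_n)$ in $L^2(0,T;\,D(-P\Delta)\times D_3(-\Delta)\times D_1(-\Delta))$, together with uniform bounds for the time derivatives in $L^2(0,T;X)$. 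By the Aubin--Lions lemma applied to the compact chain $D(-P\Delta)\subset V\subset H$ and its analogues, a further subsequence converges strongly in $L^2(0,T;Z)$ to a limit $(v,\varphi,\vartheta)$ which is itself a strong solution on $[0,T]$.

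To obtain strong convergence at the endpoint $s=T$, I apply the enstrophy equality for strong solutions. By Lebesgue's theorem, some $s\in(0,T)$ can be chosen at which $(v_n(s),\varphi_n(s),\vartheta_n(s))\to(v(s),\varphi(s),\vartheta(s))$ strongly in $Z$. The energy equation, written on $[s,T]$, reads schematically
$$
\epsilon\bigl(\|\nabla v_n(T)\|_2^2+M\|\nabla\varphi_n(T)\|_2^2\bigr)+\int_s^T\mathcal{D}_n\,dr=\epsilon\bigl(\|\nabla v_n(s)\|_2^2+M\|\nabla\varphi_n(s)\|_2^2\bigr)+\int_s^T\mathcal{F}_n\,dr,
$$
and similarly for $\|\nabla\vartheta_n(T)\|_2^2$. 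The initial terms pass to the limit by the choice of $s$; the dissipation integrals pass via strong $L^2(s,T;Z)$ convergence, and the forcing/coupling integrals $\mathcal{F}_n$ pass using the strong $L^2(s,T;V)$ convergence combined with the weak $L^2(s,T;D(-P\Delta))$ bound, exactly as in the nonlinear limit passages in the proof of Lemma \ref{lem:strong}. Since the limit $(v,\varphi,\vartheta)$ satisfies the same equality, one concludes $\|\nabla v_n(T)\|_2\to\|\nabla v(T)\|_2$ and analogously for $\varphi_n(T),\vartheta_n(T)$. Combined with the weak convergence in $Z$ from Step 1, the parallelogram identity in each Hilbert factor of $Z$ yields the desired strong convergence.

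The main obstacle will be carrying out the limit passage in the higher-order nonlinear and coupling terms in $\mathcal{F}_n$---in particular $\epsilon((v_n\cdot\nabla)v_n,-P\Delta v_n)$, $K(\mathrm{rot}\,\varphi_n,-P\Delta v_n)$, $\mathrm{Ra}(\vartheta_n,(-P\Delta v_n)_3)$, and the trilinear term $\epsilon M((v_n\cdot\nabla)\varphi_n,-\Delta\varphi_n)$---all of which require interpolating the strong $L^2$-in-time convergence in $V$, $W_3$, $W_1$ against the weak $L^2$-in-time bounds in $D(-P\Delta)$, $D_3(-\Delta)$, $D_1(-\Delta)$; once this technical work is done along the lines already used in Lemma \ref{lem:strong}, no new conceptual difficulty arises.
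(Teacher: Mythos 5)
Your overall strategy coincides with the paper's: translate the solutions backward so that the points $(w_n,\xi_n,\eta_n)$ sit at a fixed time of a finite window on which, by Lemma \ref{lem:strong}, the solutions are strong and uniformly bounded in $L^2_t$ with values in $D(-P\Delta)\times D_3(-\Delta)\times D_1(-\Delta)$ and with time derivatives in $L^2_t(X)$; Aubin--Lions then yields strong $L^2_t(Z)$ convergence plus pointwise weak convergence in $Z$, and the energy-equation method at the enstrophy level upgrades the latter to strong convergence (the paper puts the target time at $t=1$ inside the window $[0,2]$ rather than at the right endpoint, which is immaterial). The one step that does not work as written is your claim that ``the dissipation integrals pass via strong $L^2(s,T;Z)$ convergence'': the dissipation is $(1+K)\int_s^T\|P\Delta v_n\|_2^2\,dr$ and its analogues, an $H^2$-level quantity for which you only have weak $L^2$-in-time convergence, hence only $\liminf_n\int_s^T\|P\Delta v_n\|_2^2\,dr\geq\int_s^T\|P\Delta v\|_2^2\,dr$. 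You must therefore argue one-sidedly --- deduce $\limsup_n\|\nabla v_n(T)\|_2^2\leq\|\nabla v(T)\|_2^2$ from the energy identity and combine it with $\liminf_n\|\nabla v_n(T)\|_2\geq\|\nabla v(T)\|_2$ from weak convergence --- or, as the paper does, absorb the dissipation into the nonincreasing function $W_n(t)$ and use that a sequence of monotone functions converging a.e.\ to a continuous limit converges everywhere. With that standard correction (and noting that the uniform $L^2_t(D_1(-\Delta))$ bound on $\vartheta_n$ comes from testing the temperature equation with $-\Delta\vartheta_n$, as in Lemma \ref{le:grad-temp-est-H1}, not with $\Delta^2\vartheta_n$), your argument is complete and is essentially the paper's proof.
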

\begin{proof}
	There exists a sequence $(u_{0n},\gamma_{0n},\theta_{0n}) \in B$ and a sequence $(u_n,\gamma_n,\theta_n)$ of weak solutions with the initial data $(u_{0n},\gamma_{0n},\theta_{0n})$ such that $(u_n(t_n),\gamma_{n}(t_n),\theta_{n}(t_n)) = (w_n,\xi_n,\eta_n)$. We will consider the restrictions $$(\overline{u}_n,\overline{\gamma}_n,\overline{\theta}_n) = (u_n(t+t_n-1),\gamma_n(t+t_n-1),\theta_n(t+t_n-1))|_{t\in [0,2]}.$$
	These functions are defined on the time interval $[0,2]$, and
	$$(\overline{u}_n(1),\overline{\gamma}_n(1),\overline{\theta}_n(1)) = (w_n,\xi_n,\eta_n).$$ 
	
	By Lemma \ref{lem:strong}, if only $n$ is large enough, the functions $(\overline{u}_n,\overline{\gamma}_n,\overline{\theta}_n)$ are strong solutions on the interval $[0,2]$. Moreover, by Lemma \ref{lem:diss} there exists a constant $C>0$ such that
	\begin{equation}\label{eq:bound_grad}
	\|\nabla \overline{u}_n(t)\|_2^2+ \|\nabla \overline{\gamma}_n(t)\|_2^2 + \|\nabla \overline{\theta}_n(t)\|_2^2 \leq C\quad \textrm{for}\quad t\in [0,2].
	\end{equation}
	Estimate \eqref{eq:bound_grad} implies that, for a nonrenumbered subsequence of indexes,
	\begin{align}
	&\overline{u}_n(1) \to a \quad \textrm{weakly in}\quad V,\label{conv_weak_1}\\
	&\overline{\gamma}_n(1) \to b\quad \textrm{weakly in}\quad W_3\label{conv_weak_2},\\
	&\overline{\theta}_n(1) \to c\quad \textrm{weakly in}\quad W_1,\label{conv_weak_3}	
	\end{align} 
	for some $(a,b,c)\in Y$. We have to show that these convergences are in fact strong. Functions $(\overline{u}^n,\overline{\gamma}^n,\overline{\theta}^n)$ satisfy the equations \eqref{eq:galer_1}--\eqref{eq:galer_3} for almost every $s\in [0,2]$. Hence, proceeding exactly as in the proof of Lemma \ref{lem:strong} we get the estimates 
	$$
	\int_0^2 \|P\Delta \overline{u}^n(t)\|_2^2 + \|\Delta \overline{\gamma}^n(t)\|_2^2 + \|\Delta \overline{\theta}^n(t)\|_2^2\, dt \leq C, 
	$$
	and 
	$$
	\int_0^2 \|\overline{u}^n_t(t)\|_2^2 + \|\overline{\gamma}^n_t(t)\|_2^2 + \|\overline{\theta}^n_t(t)\|_2^2\, dt \leq C.  
	$$
	 Above estimates, together with the Aubin--Lions lemma, imply that, for a subsequence of $n$, we have the following convergences
	 \begin{align*}
	& \overline{u}^n \to u \quad \textrm{weakly in}\quad L^2(0,2;D(-P\Delta)) \quad \textrm{and strongly in}\quad L^2(0,2;V),\\
	& \overline{u}^n_t \to u_t \quad \textrm{weakly in}\quad L^2(0,2;H),\\
	& \overline{\gamma}^n \to \gamma \quad \textrm{weakly in}\quad L^2(0,2;D_3(-\Delta)) \quad \textrm{and strongly in}\quad L^2(0,2;W_3),\\
	& \overline{\gamma}^n_t \to \gamma_t \quad \textrm{weakly in}\quad L^2(0,2;H),\\
	& \overline{\theta}^n \to \theta\quad \textrm{weakly in}\quad L^2(0,2;D_1(-\Delta)) \quad \textrm{and strongly in}\quad L^2(0,2;W_1),\\
	& \overline{\theta}^n_t \to \theta_t \quad \textrm{weakly in}\quad L^2(0,2;E_1),
	\end{align*}
	where
	 \begin{align*}
	 & u\in L^2(0,2;D(-P\Delta)) \cap C([0,2];V)\quad \textrm{with}\quad u_t\in L^2(0,2;H),\\
 	 & \gamma\in L^2(0,2;D_3(-\Delta)) \cap C([0,2];W_3)\quad \textrm{with}\quad \gamma_t\in L^2(0,2;E_3),\\
	  & \theta\in L^2(0,2;D_1(-\Delta)) \cap C([0,2];W_1)\quad \textrm{with}\quad u_t\in L^2(0,2;E_1).
	 \end{align*}  
	 In particular 
	 \begin{align*}
	 & \overline{u}_n(t) \to u(t)\quad \textrm{strongly in}\quad V\quad \textrm{for a.e.}\quad t\in (0,2),\\
	 & \overline{\gamma}_n(t) \to \gamma(t)\quad \textrm{strongly in}\quad W_3\quad \textrm{for a.e.}\quad t\in (0,2),\\
	 & \overline{\theta}_n(t) \to \theta(t)\quad \textrm{strongly in}\quad W_1\quad \textrm{for a.e.}\quad t\in (0,2),
	 \end{align*}
	 and 
	 	 \begin{align*}
	 & \overline{u}_n(t) \to u(t)\quad \textrm{weakly in}\quad V\quad \textrm{for every}\quad t\in [0,2],\\
	 & \overline{\gamma}_n(t) \to \gamma(t)\quad \textrm{weakly in}\quad W_3\quad \textrm{for every}\quad t\in [0,2],\\
	 & \overline{\theta}_n(t) \to \theta(t)\quad \textrm{weakly in}\quad W_1\quad \textrm{for every}\quad t\in [0,2].
	 \end{align*}
	 Coming back to \eqref{conv_weak_1}--\eqref{conv_weak_3} we deduce that $a=u(1)$, $b=\gamma(1)$, $c=\theta(1)$. We will show that $\|\nabla \overline u_n(1)\|_2 \to \|\nabla u(1)\|_2$. This will mean that the convergence in \eqref{conv_weak_1} is in fact strong. Since the proofs that convergences in \eqref{conv_weak_2} and \eqref{conv_weak_3} are strong are analogous, and the technique is well known, we will only provide the proof for \eqref{conv_weak_1}. Testing \eqref{eq:galer_1} with $v=-P\overline u_n(s)$ and integrating from $0$ to $t\in (0,2)$ we get
\begin{align*}
& \frac{\epsilon}{2}\|\nabla \overline u_n(t)\|_2^2 \\
& \qquad - \int_0^t{\rm Ra} (\overline \theta_n(s),(P\Delta  \overline u_n)_3(s)) + \epsilon((\overline u_n(s)\cdot \nabla) \overline u_n(s),P\Delta \overline u_n(s)) - 2K ({\rm rot}\, \overline\gamma_n(s),P\Delta \overline u_n(s))\, ds\\
& \qquad = \frac{\epsilon}{2}\|\nabla \overline u_n(0)\|_2^2 - (1+K)\int_0^t \|P \Delta \overline u_n(s)\|_2^2\, ds.
\end{align*}
Denote 
\begin{align*}
& W_n(t) = \frac{\epsilon}{2}\|\nabla \overline u_n(t)\|_2^2 \\
& \qquad - \int_0^t{\rm Ra} (\overline \theta_n(s),P\Delta  (\overline u_n)_3(s)) + \epsilon((\overline u_n(s)\cdot \nabla) \overline u_n(s),P\Delta \overline u_n(s)) - 2K ({\rm rot}\, \overline\gamma_n(s),P\Delta \overline u_n(s))\, ds,
\end{align*}
and 
$$
W(t) = \frac{\epsilon}{2}\|\nabla  u(t)\|_2^2 - \int_0^t{\rm Ra} ( \theta(s),(P\Delta  u)_3(s)) + \epsilon((u(s)\cdot \nabla)  u(s),P\Delta  u(s)) - 2K ({\rm rot}\, \gamma(s),P\Delta  u(s))\, ds.
$$
It is not hard to verify that $W_n(t)$ are nonincreasing functions of time, $W_n(t) \to W(t)$ for almost every $t\in (0,2)$, and function $W(t)$ is continuous. This implies that $W_n(t) \to W(t)$ for every $t\in (0,1)$, whence in particular $W_n(1) \to W(1)$, and $\|\nabla \overline u_n(1)\|_2 \to \|\nabla u(1)\|_2$, which completes the proof. 
\end{proof}

Summarizing, Theorem \ref{thm:attr}, as well as Lemmas \ref{lem:multi}, \ref{lemma_absorbing_attr}, and \ref{lem:comp} imply the following result.

\begin{theorem}\label{thm:existence}
The multivalued eventual semiflow $\{ S(t) \}_{t\geq 0}$ defined by \eqref{eq:multivalued} has a $(X,Y)$-global attractor $\mathcal{A}$.  
\end{theorem}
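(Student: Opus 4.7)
The plan is to apply Theorem \ref{thm:attr} directly, with the choice $B_0 = Y$. To do so I must verify that (i) $\{S(t)\}_{t\geq 0}$ is a multivalued eventual semiflow on $X = H\times E_3\times E_1$ in the sense of Definition \ref{def:eventual}; (ii) the family is $(X,Y)$-dissipative with absorbing set $Y$; and (iii) it is $Y$-asymptotically compact on $Y$. All three items have in fact been prepared in the preceding lemmas, so the role of the proof is essentially to assemble them and check that $Y$ itself qualifies as the absorbing set $B_0$ of Theorem \ref{thm:attr}.

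First I would recall the setting: $Y$ is, by construction, a closed subset of the Banach space $Z = V\times W_3\times W_1$, equipped with the induced $Z$-metric it is a complete metric space, and by Lemma \ref{lem:diss} we have $Y \subset B_1 \in \mathcal{B}(Z)$, so $Y$ is bounded in $Z$ and hence belongs to $\mathcal{B}(Y)$. Also, the continuous embedding $Z \hookrightarrow X$ shows that the identity $i: Y \to X$ is continuous, so the abstract framework of Section \ref{sect:multivalued-theory} applies. The eventual semiflow property (i) is exactly the content of Lemma \ref{lem:multi}.

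Next, to verify dissipativity (ii) with $B_0 = Y$, I would invoke Lemma \ref{lemma_absorbing_attr}: for every $B \in \mathcal{B}(X)$ there is $t_0(B)$ with
\[
\bigcup_{t\geq t_0(B)} S(t) B \subset Y,
\]
which is exactly condition (i) of Theorem \ref{thm:attr}. For asymptotic compactness (iii), I note that $Y$ is itself bounded in $X$ (since $Z \hookrightarrow X$ continuously and $Y$ is $Z$-bounded), so Lemma \ref{lem:comp} applies with $B = Y$: whenever $t_n \to \infty$ and $v_n \in S(t_n) Y$, the sequence $v_n$ is relatively compact in the $Z$-topology, hence in $Y$. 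This is exactly condition (ii) of Theorem \ref{thm:attr}.

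Having checked all the hypotheses, Theorem \ref{thm:attr} then directly yields the existence of the $(X,Y)$-global attractor $\mathcal{A} = \omega_Y(Y) \subset \overline{Y}^{\,Y} = Y$, together with the compactness, attraction and minimality properties of Definition \ref{def:xyattr}. There is no genuine obstacle at this stage: the difficult work was already done in establishing the a priori bounds of Section \ref{Basic estimates}, the $H^1$-invariant absorbing ball (Lemma \ref{lem:enstrophy_2}), the translation property for large times (Lemma \ref{lem:multi}), and the energy-equation argument underlying Lemma \ref{lem:comp}. The only points one needs to be careful about are bookkeeping ones, namely that $Y$ is simultaneously $(Y,Y)$-bounded (so it can play the role of $B_0$) and $X$-bounded (so that Lemma \ref{lem:comp} applies to sequences originating from it), both of which are immediate from $Y \subset B_1$ and the embedding $Z \hookrightarrow X$.
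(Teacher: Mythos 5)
Your proposal is correct and follows essentially the same route as the paper: the paper also proves Theorem~\ref{thm:existence} by applying Theorem~\ref{thm:attr} with $B_0 = Y$, invoking Lemma~\ref{lem:multi} for the eventual semiflow property, Lemma~\ref{lemma_absorbing_attr} for $(X,Y)$-dissipativity, and Lemma~\ref{lem:comp} for $Y$-asymptotic compactness. The only difference is that you spell out the bookkeeping (that $Y\in\mathcal{B}(Y)$ and $Y\in\mathcal{B}(X)$) which the paper leaves implicit.
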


The above theorem shows the existence of a global attractor in the sense of Definition \ref{def:xyattr}. This attractor is the smallest compact attracting set, but not necessarily invariant. The obtained attractor $\mathcal{A}$ is, however, an invariant set, and the semiflow restricted to this attractor is in fact single-valued and governed by the strong solutions. These results will be proved in the next subsection.

\subsection{Invariance of the global attractor $\mathcal{A}$.} We start from the observation which follows from Lemma \ref{lem:comp} and the weak-strong uniqueness property obtained in Lemma \ref{lemma:weak-strong}.

\begin{lemma}\label{lemma:strong}
	Let $(u,\gamma,\theta)$ be a weak solution given by Definition \ref{def-weak-sol-evol}. There exists $t_0$, which can be chosen uniformly with respect to bounded sets in $X$ of initial data, such that for every $t\geq t_0$, this solution restricted to $[t,\infty)$ is in fact strong. Moreover, for any $t\geq T$ and any $s\geq 0$ the set $S(s)(u(t),\gamma(t),\theta(t))$ is a singleton contained in $Y$. 
\end{lemma}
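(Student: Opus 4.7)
The plan is to deduce this statement from three earlier results: Lemma \ref{lem:strong} (for the passage from weak to strong after a regularization time), Lemma \ref{lemma:weak-strong} (weak-strong uniqueness), and Lemma \ref{lem:diss} (absorption into $B_1$), and then combine them with the definition of $Y$. Fix $B\in\mathcal{B}(X)$ containing $(u_0,\gamma_0,\theta_0)$ and set $t_0 = t_1(B)+t_1(B_1)$, where $t_1(\cdot)$ is the absorption time given by Lemma \ref{lem:diss}. This choice depends only on $B$, giving the uniformity stated in the lemma.

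For the first assertion, I simply apply Lemma \ref{lem:strong}: since $t\geq t_0\geq t_1(B)$, the translated restriction $(u(\cdot+t),\gamma(\cdot+t),\theta(\cdot+t))|_{[0,\infty)}$ is already a strong solution. In particular, $(u(t),\gamma(t),\theta(t))\in V\times W_3\times W_1$, so this triple is an admissible strong initial datum.

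For the singleton property, let $(\tilde u,\tilde\gamma,\tilde\theta)\in S(s)(u(t),\gamma(t),\theta(t))$ be arbitrary, i.e., the value at time $s$ of some weak solution emanating from $(u(t),\gamma(t),\theta(t))$ in the sense of Definition \ref{def-weak-sol-evol}. The shifted restriction discussed above furnishes a strong solution with the same initial data on $[0,\infty)$, so Lemma \ref{lemma:weak-strong} forces $(\tilde u,\tilde\gamma,\tilde\theta)=(u(t+s),\gamma(t+s),\theta(t+s))$, proving that the set is a singleton equal to the image of the original solution at time $t+s$.

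It remains to show that this singleton lies in $Y=\overline{\bigcup_{\rho\geq t_1(B_1)}S(\rho)B_1}^Z$. By Lemma \ref{lem:diss}, $y_0:=(u(t_1(B)),\gamma(t_1(B)),\theta(t_1(B)))\in B_1$. By Lemma \ref{lem:strong} the shifted restriction starting at time $t_1(B)$ is a strong, hence (via Lemma \ref{lemma:weak-strong}) the unique weak, solution with initial datum $y_0$, so $(u(t+s),\gamma(t+s),\theta(t+s))\in S(t+s-t_1(B))\,y_0\subset S(t+s-t_1(B))B_1$. Because $t\geq t_0$, we have $t+s-t_1(B)\geq t_1(B_1)$, so this element belongs to $\bigcup_{\rho\geq t_1(B_1)}S(\rho)B_1\subset Y$, finishing the proof. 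The only delicate point—in fact the reason for requiring the extra buffer $t_1(B_1)$ in $t_0$—is that the eventual semiflow property does not a priori allow one to shift the origin of time freely, so the argument relies crucially on Lemma \ref{lem:strong} to legitimize treating the time-shifted restriction as a bona fide weak solution issuing from $y_0$.
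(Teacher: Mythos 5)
Your proposal is correct and follows exactly the route the paper intends: the paper states this lemma without a written proof, merely noting that it follows from the earlier regularization and weak--strong uniqueness results and that $t_0(B)$ coincides with the absorption time $t_1(B)+t_1(B_1)$ of Lemma \ref{lemma_absorbing_attr}. Your write-up supplies precisely the missing details — Lemma \ref{lem:strong} for strength of the shifted restriction, Lemma \ref{lemma:weak-strong} for the singleton property, and Lemma \ref{lem:diss} together with the definition of $Y$ for the inclusion — so nothing further is needed.
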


Note that in the above theorem the time $t_0(B)$ is exactly the same as in Lemmma \ref{lemma_absorbing_attr}. In the next Lemma we establish the result on the continuous dependence of the strong solutions on the initial data. 
\begin{lemma}\label{lem:continuity}
	Let $(u^1,\gamma^1,\theta^1)$ and $(u^2,\gamma^2,\theta^2)$ be two strong solutions such that 
	\begin{align}
	&\| (u^1(t), \gamma^1(t), \theta^1(t)) \|_Z \leq C\quad \textrm{for}\quad t\in [0,T],\label{eq:est_C_1}\\
	&\| (u^2(t), \gamma^2(t), \theta^2(t)) \|_Z \leq C\quad \textrm{for}\quad t\in [0,T].\label{eq:est_C_2}
	\end{align}
	Then there exists a constant $L$ dependent only of $C$ and $T$ such that
	$$
	\| (u^1(t), \gamma^1(t), \theta^1(t)) - (u^2(t), \gamma^2(t), \theta^2(t)) \|_Z \leq L \| (u^1(0), \gamma^1(0), \theta^1(0)) - (u^2(0), \gamma^2(0), \theta^2(0)) \|_Z,
	$$
	for every $t\in [0,T]$. 
\end{lemma}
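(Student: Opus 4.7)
The plan is to carry out the standard $H^1$-level energy argument on the difference of the two strong solutions and close it with Gronwall's inequality. Set $v = u^1-u^2$, $\psi = \gamma^1-\gamma^2$, $\eta = \theta^1-\theta^2$. Subtracting the strong formulations \eqref{eq:strong_1}--\eqref{eq:strong_3} for the two solutions, the triple $(v,\psi,\eta)$ satisfies a linear system whose nonlinear contributions split, as usual, into two families: the ``advection by $u^1$'' terms $((u^1\cdot\nabla)v,\cdot)$, $((u^1\cdot\nabla)\psi,\cdot)$, $(u^1\cdot\nabla\eta,\cdot)$, and the ``advection of $u^2,\gamma^2,\theta^2$'' terms $((v\cdot\nabla)u^2,\cdot)$, $((v\cdot\nabla)\gamma^2,\cdot)$, $(v\cdot\nabla\theta^2,\cdot)$, together with the linear couplings through $K$ and $\mathrm{Ra}$ that are already of lower order.

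First I would upgrade the hypotheses \eqref{eq:est_C_1}--\eqref{eq:est_C_2} to quantitative control of the second-derivative norms. Testing the equations satisfied by $u^i,\gamma^i,\theta^i$ with $-P\Delta u^i$, $-\Delta\gamma^i$, $-\Delta\theta^i$ respectively and handling the cubic nonlinearities via Agmon and Young exactly as in the proof of Lemma~\ref{lem:enstrophy} and Lemma~\ref{le:grad-temp-est-H1} yields bounds
$$
\int_0^T \bigl( \|P\Delta u^i(s)\|_2^2 + \|\Delta \gamma^i(s)\|_2^2 + \|\Delta\theta^i(s)\|_2^2 \bigr)\,ds \leq C_1(C,T),
$$
for $i=1,2$, with $C_1$ depending only on $C$ and $T$.

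Next I would test the $v$-equation with $-P\Delta v$, the $\psi$-equation with $-\Delta\psi$, and the $\eta$-equation with $-\Delta\eta$. The advection terms are estimated by Agmon's inequality $\|w\|_{L^\infty}\leq c\|\nabla w\|_2^{1/2}\|P\Delta w\|_2^{1/2}$ in the forms
$$
|((u^1\cdot\nabla)v,P\Delta v)| \leq c\|\nabla u^1\|_2^{1/2}\|P\Delta u^1\|_2^{1/2}\|\nabla v\|_2\|P\Delta v\|_2,
$$
$$
|((v\cdot\nabla)u^2,P\Delta v)|\leq c\|\nabla u^2\|_2\|\nabla v\|_2^{1/2}\|P\Delta v\|_2^{3/2},
$$
with analogous estimates for the $\gamma$- and $\theta$-equations. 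Young's inequality then absorbs the leading $\|P\Delta v\|_2^2$, $\|\Delta\psi\|_2^2$, $\|\Delta\eta\|_2^2$ into the dissipation on the left, and produces a differential inequality
$$
\frac{d}{dt}\Phi(t) + c_0\,\Psi(t) \leq F(t)\,\Phi(t),
$$
where $\Phi(t) = \|\nabla v(t)\|_2^2 + M\|\nabla\psi(t)\|_2^2 + \|\nabla\eta(t)\|_2^2$, $\Psi(t)$ collects the second-derivative norms of the differences, and
$$
F(t) \leq c\bigl( 1 + \|\nabla u^1(t)\|_2\|P\Delta u^1(t)\|_2 + \|\nabla u^2(t)\|_2^4 + \|\Delta\gamma^1(t)\|_2^2 + \|\Delta\theta^1(t)\|_2^2 \bigr) + \text{(lower order)}.
$$
By the Step~1 bound together with \eqref{eq:est_C_1}--\eqref{eq:est_C_2}, $F\in L^1(0,T)$ with $\|F\|_{L^1(0,T)} \leq C_2(C,T)$. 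Gronwall then yields $\Phi(t)\leq e^{\|F\|_{L^1}}\Phi(0)$, which is precisely the claimed Lipschitz estimate with $L = L(C,T)$.

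The main obstacle is the bookkeeping in Step~2: each trilinear form must be split so that the factor eventually sitting inside the Gronwall kernel depends only on $u^i,\gamma^i,\theta^i$ (not on the differences) and is $L^1$ in time with norm controlled by $C$ and $T$, while the factor $\|P\Delta v\|_2$ (respectively $\|\Delta\psi\|_2, \|\Delta\eta\|_2$) appearing with exponent $>1$ is absorbed into the dissipation on the left. The cross-coupling terms with coefficient $K$ (and the $\mathrm{Ra}$- and $u_3$-type source terms) are linear in the differences and are harmless: a standard Cauchy--Schwarz bound with $\delta$ small enough absorbs them as well.
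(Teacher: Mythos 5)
Your proposal is correct and follows essentially the same route as the paper: an $H^1$-level energy estimate on the differences tested with $-P\Delta v$, $-\Delta\psi$, $-\Delta\eta$, with the trilinear terms controlled by Agmon/interpolation and Young, closed by Gronwall. The only (harmless) deviation is that you estimate $((u^1\cdot\nabla)v,P\Delta v)$ via Agmon applied to $u^1$, which forces the preliminary $L^2(0,T;H^2)$ bounds of your Step~1 and a time-dependent $L^1$ Gronwall kernel, whereas the paper splits the nonlinearity as $(v\cdot\nabla)u^1+(u^2\cdot\nabla)v$ and bounds the second piece by $\|u^2\|_6\|\nabla v\|_3\le C\|\nabla v\|_2^{1/2}\|P\Delta v\|_2^{1/2}$, so that only the hypotheses \eqref{eq:est_C_1}--\eqref{eq:est_C_2} are used, Step~1 is unnecessary, and the Gronwall kernel is a constant $E(C)$.
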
 
\begin{proof} In the proof by $E$ we will denote the generic constant dependent only on the constants present in the problem definition and $C$ from \eqref{eq:est_C_1} and \eqref{eq:est_C_2}. 
	Denote $w = u^1 - u^2$, $\xi = \gamma^1 - \gamma^2$, and $\eta = \theta^1-\theta^2$.  Testing the difference of  \eqref{eq:strong_1} written for $u^1$ and $u^2$ by $-P\Delta w$ we obtain
	\begin{align*}
	& \frac{\epsilon}{2}\frac{d}{dt}\|\nabla w(t)\|_2^2  + (1+K)\|P \Delta w(t)\|_2^2 \\
	& \qquad = -2K ({\rm rot}\, \xi(t),P\Delta w(t)) - {\rm Ra} (\eta(t),(P\Delta w)_3(t)) \\
	&\qquad \qquad + \epsilon((w(t)\cdot \nabla) u^1(t),P\Delta w(t))+\epsilon((u^2(t)\cdot \nabla) w(t),P\Delta w(t)).
	\end{align*}
	Using the Schwartz and H\"older inequalities, we deduce
	\begin{align*}
	& \frac{\epsilon}{2}\frac{d}{dt}\|\nabla w(t)\|_2^2  + (1+K)\|P \Delta w(t)\|_2^2 \\
	& \qquad \leq E\|P\Delta w(t)\|_2\left( \|\nabla \xi(t)\|_2 +  \|\eta(t)\|_2+ \|w(t)\|_\infty \|\nabla u^1(t)\|_2 +\|u^2(t)\|_6 \|\nabla w(t)\|_3\right).
	\end{align*}
	Using the Agmon inequality $\|w\|_\infty \leq E \|\nabla w\|_2^{1/2}\|P\Delta w\|_2^{1/2}$ and the interpolation inequality $\|\nabla w\|_3 \leq E \|\nabla w\|_2^{1/2}\|P\Delta w\|_2^{1/2}$ as well as the continuity of the embedding $V\subset L^6(\Omega)^3$ and  assumptions \eqref{eq:est_C_1}--\eqref{eq:est_C_2}, we obtain
	we deduce
	\begin{align*}
	& \frac{d}{dt}\|\nabla w(t)\|_2^2  + (1+K)\|P \Delta w(t)\|_2^2 \\
	& \qquad \leq E\|P\Delta w(t)\|_2\left( \|\nabla \xi(t)\|_2 +  \|\eta(t)\|_2\right)+ E\|\nabla w(t)\|_2^{1/2}\|P\Delta w(t)\|_2^{3/2} + E\|\nabla w(t)\|_2^{1/2}\|P\Delta w(t)\|_2^{3/2}.
	\end{align*}
	We can use the Young inequality with $\epsilon$, as well as the Poincar\'{e} inequality, whence we obtain
	$$
	\frac{d}{dt}\|\nabla w(t)\|_2^2   \leq E\left( \|\nabla \xi(t)\|^2_2 +  \|\nabla \eta(t)\|^2_2+ \|\nabla w(t)\|^2_2\right).
	$$
	Proceeding in a similar way with \eqref{eq:strong_2}, testing the difference of this equation written for $\gamma^1$ and $\gamma^2$ with $-\Delta \xi(t)$  we obtain,
	$$
	\frac{d}{dt}\|\nabla \xi(t)\|_2^2   \leq E\left( \|\nabla \xi(t)\|^2_2 +   \|\nabla w(t)\|^2_2\right).
	$$
	Finally, proceeding analogously with \eqref{eq:strong_3} we arrive at
	$$
	\frac{d}{dt}\|\nabla \eta(t)\|_2^2   \leq E\left( \|\nabla u(t)\|^2_2 +   \|\nabla \eta(t)\|^2_2\right).
	$$
	Adding the three obtained inequalities to each other we get
	$$
	\frac{d}{dt}\|(w(t),\xi(t),\eta(t))\|_Z^2 \leq E \|(w(t),\xi(t),\eta(t))\|_Z^2,
	$$
	where the constant $E$ depends only on the problem data and $C$. So, by the Gronwall lemma, the assertion holds with $L=e^{ET}$.  
\end{proof}
In the next result we establish that the multivalued eventual semiflow given by weak solutions is in fact a single valued semiflow in $Y$ when we restrict it to $Y$.

\begin{lemma}\label{lem:strong_on_attractor}
	Let $(u_0,\gamma_0,\theta_0) \in Y$. There exists a strong solution $(u(t),\gamma(t),\theta(t))$ with the initial data $(u_0,\gamma_0,\theta_0)$ which is also a unique weak solution, and for every $t\geq 0$ there holds $(u(t),\gamma(t),\theta(t))\in Y$. In consequence, $S(t)$ restricted to $Y$ is a single valued semiflow and has values in $Y$. 
\end{lemma}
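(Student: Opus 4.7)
The plan is to combine Lemma \ref{lem:strong} (weak solutions regularize after $t_1(B)$), Lemma \ref{lemma:weak-strong} (weak--strong uniqueness), Lemma \ref{lem:continuity} (Lipschitz dependence of strong solutions in $Z$), and the very definition of $Y$ as the closure in $Z$ of $\bigcup_{\tau\geq t_1(B_1)} S(\tau)B_1$. First I would treat the dense subset: if $(u_0,\gamma_0,\theta_0) = \tilde v(\tau)$ for some weak solution $\tilde v$ starting in $B_1$ and some $\tau \geq t_1(B_1)$, then by Lemma \ref{lem:strong} the translation $\tilde v(\cdot + \tau)|_{[0,\infty)}$ is a strong solution with initial value $(u_0,\gamma_0,\theta_0)$, and by Lemma \ref{lemma:weak-strong} it is the unique weak solution from this datum. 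Since for every $t\geq 0$ one has $t+\tau \geq t_1(B_1)$, its value $\tilde v(t+\tau)$ belongs to $S(t+\tau)B_1 \subset Y$, so the required $Y$-invariance already holds on this dense subset.

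For a general $(u_0,\gamma_0,\theta_0) \in Y$ I would choose an approximating sequence $z_m = (u_0^m,\gamma_0^m,\theta_0^m)$ from the dense subset converging to $(u_0,\gamma_0,\theta_0)$ in $Z$, and denote by $v^m$ the strong solutions obtained in the first step. Each $v^m(t)$ lies in the bounded set $Y \subset Z$ uniformly in $m$ and $t\geq 0$, so Lemma \ref{lem:continuity} applies on every interval $[0,T]$ with a single Lipschitz constant controlling $\|v^m(t)-v^k(t)\|_Z$ by $\|z_m-z_k\|_Z$. Consequently $\{v^m\}$ is Cauchy in $C([0,T];Z)$ and converges to some $v \in C([0,\infty);Z)$ with $v(0) = (u_0,\gamma_0,\theta_0)$; since $Y$ is closed in $Z$ this forces $v(t) \in Y$ for every $t\geq 0$. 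Weak--strong uniqueness from Lemma \ref{lemma:weak-strong} will then promote $v$ to the unique weak solution, once we know it is strong.

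The only real obstacle is showing that this $C([0,T];Z)$-limit is a strong solution in the sense of Definition \ref{def-strong}. My approach is to test \eqref{eq:strong_1}--\eqref{eq:strong_3} satisfied by each $v^m$ with $-P\Delta u^m$, $-\Delta \gamma^m$, and $-\Delta \theta^m$, and, proceeding as in the proof of Lemma \ref{lem:strong} with the Agmon, Poincar\'{e}, and Young inequalities and the uniform $Z$-bound provided by $Y$, derive uniform bounds on $v^m$ in $L^2_{\mathrm{loc}}([0,\infty);D(-P\Delta)\times D_3(-\Delta)\times D_1(-\Delta))$ and on $v^m_t$ in $L^2_{\mathrm{loc}}([0,\infty);H\times E_3\times E_1)$. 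Extracting a weakly convergent subsequence and identifying its weak limit with $v$ through the already established strong $C([0,T];Z)$ convergence, one may pass to the limit in the nonlinear terms by the same type of trilinear estimates used in Lemma \ref{lem:strong}. This shows $v$ has the regularity required by Definition \ref{def-strong} and solves the strong equations, yielding that $S(t)|_Y$ is single-valued with values in $Y$.
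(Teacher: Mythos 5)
Your proposal is correct and follows essentially the same route as the paper: the same split into the dense set $\bigcup_{t\geq t_1(B_1)}S(t)B_1$ versus its $Z$-closure, the same use of Lemmas \ref{lem:strong} and \ref{lemma:weak-strong} on the dense part, and the same combination of the a priori bounds from the proof of Lemma \ref{lem:strong} with the continuous dependence of Lemma \ref{lem:continuity} for the general case. The only (harmless) difference is the order of operations in the limiting step: you first use Lemma \ref{lem:continuity} to get a Cauchy sequence in $C([0,T];Z)$, hence the limit $v$ and its membership in the closed set $Y$, and only then invoke weak compactness to identify $v$ as a strong solution, whereas the paper first extracts the weak limits and identifies the strong solution, and applies Lemma \ref{lem:continuity} at the end to obtain the pointwise $Z$-convergence needed for $v(s)\in Y$.
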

\begin{proof}
	Let $(u_0,\gamma_0,\theta_0) \in Y$. Then either
	$$
	(u_0,\gamma_0,\theta_0) \in \bigcup_{t\geq t_1(B_1)} S(t)B_1\quad \textrm{or}\quad 	(u_0,\gamma_0,\theta_0) \in \overline{\bigcup_{t\geq t_1(B_1)} S(t)B_1}^Z \setminus \bigcup_{t\geq t_1(B_1)} S(t)B_1
	$$
	In the first case, by Lemma \ref{lem:strong} there exists the strong solution starting from $(u_0,\gamma_0,\theta_0)$, which, by Lemma \ref{lemma:weak-strong} must be unique in the class of the weak solutions. Hence $S(t)(u_0,\gamma_0,\theta_0)$ is a singleton and moreover  $S(t)(u_0,\gamma_0,\theta_0) \in Y$. It remains to verify the second possibility. 
	In that case there exists the sequence $\{t_n\} \subset [t_1(B_1),\infty)$ and the sequence of solutions $\{ (u^n,\gamma^n,\theta^n) \}_{n=1}^\infty$ with the initial data in $B_1$ such that $(u^n(t_n),\gamma^n(t_n),\theta^n(t_n)) \to (u_0,\gamma_0,\theta_0)$ in $Z$. 
	Consider the translations 
	$$
	(\overline{u}^n(s),\overline{\gamma}^n(s),\overline{\theta}^n(s)) = (u^n(t_n+s),\gamma^n(t_n+s),\theta^n(t_n+s))\quad \textrm{for}\quad t\in [0,\infty).
	$$
	Lemma \ref{lemma:strong} implies that these are the strong solutions.
	Moreover, arguing as in the proof of Lemma \ref{lem:strong} we have the bounds
$$
	\|\nabla \overline u^n(s)\|_2^2+ \|\nabla \overline \gamma^n(s)\|_2^2 + \|\nabla \overline \theta^n(s)\|_2^2 \leq C\quad \textrm{for every}\quad s\geq 0,
$$
$$
\int_0^T \|P\Delta \overline{u}^n(s)\|_2^2 + \|\Delta \overline{\gamma}^n(s)\|_2^2 + \|\Delta \overline{\theta}^n(s)\|_2^2\, ds \leq C(T) \quad \textrm{for every}\quad T\geq 0, 
$$
and 
$$
\int_0^T \|\overline{u}^n_t(s)\|_2^2 + \|\overline{\gamma}^n_t(s)\|_2^2 + \|\overline{\theta}^n_t(s)\|_2^2\, ds \leq C(T) \quad \textrm{for every}\quad T\geq 0.  
$$
Analogously as in the proofs of Lemmas \ref{lem:strong} and \ref{lem:comp}, and using a diagonal argument, we can extract a subsequence such that 
	 \begin{align*}
& \overline{u}^n \to u \quad \textrm{weakly in}\quad L^2_{loc}([0,\infty);D(-P\Delta)) \quad \textrm{and strongly in}\quad L^2_{loc}([0,\infty);V\cap W^{1,3}(\Omega)^3),\\
& \overline{u}^n_t \to u_t \quad \textrm{weakly in}\quad L^2_{loc}([0,\infty);H),\\
& \overline{\gamma}^n \to \gamma \quad \textrm{weakly in}\quad L^2_{loc}([0,\infty);D_3(-\Delta)) \quad \textrm{and strongly in}\quad L^2_{loc}([0,\infty);W_3\cap W^{1,3}(\Omega)^3),\\
& \overline{\gamma}^n_t \to \gamma_t \quad \textrm{weakly in}\quad L^2_{loc}([0,\infty);H),\\
& \overline{\theta}^n \to \theta\quad \textrm{weakly in}\quad L^2_{loc}([0,\infty);D_1(-\Delta)) \quad \textrm{and strongly in}\quad L^2_{loc}([0,\infty);W_1\cap W^{1,3}(\Omega)),\\
& \overline{\theta}^n_t \to \theta_t \quad \textrm{weakly in}\quad L^2_{loc}([0,\infty);E_1),
\end{align*}
where $(u,\gamma,\theta)$ is a strong solution with the initial data $(u_0,\gamma_0,\theta_0)$. Weak-strong uniqueness obtained in Lemma \ref{lemma:weak-strong} implies that $S(t)(u_0,\gamma_0,\theta_0)$ is a singleton. Sequential weak lower semicontinuity of the norm implies that 
$$
\|\nabla  u(s)\|_2^2+ \|\nabla  \gamma(s)\|_2^2 + \|\nabla  \theta(s)\|_2^2 \leq C\quad \textrm{for every}\quad s\geq 0.
$$
Hence, by Lemma \ref{lem:continuity} we deduce that
\begin{align*}
&(u^n(t_n+s),\gamma^n(t_n+s),\theta^n(t_n+s)) = (\overline u^n(s), \overline \gamma^n(s), \overline \theta^n(s)) \to ( u(s),  \gamma(s),  \theta(s))\\
&\quad\textrm{in}\quad Z \quad \textrm{as}\quad n\to \infty\quad \textrm{for every}\quad s\geq 0. 
\end{align*}
This means that 
$$S(t)(u_0,\gamma_0,\theta_0) = ( u(s),  \gamma(s),  \theta(s)) \in Y,$$
and the proof is complete.
\end{proof}

We have all ingredients ready for the result on the attractor invariance. 
\begin{lemma}\label{lem:invariance_2}
Let $\mathcal{A}$ be the $(X,Y)$-global attractor obtained in Theorem \ref{thm:existence}. For every $t\geq 0$ there holds $S(t)\mathcal{A} = \mathcal{A}$, and $S(t)$ is a single valued semigroup on $\mathcal{A}$. 
\end{lemma}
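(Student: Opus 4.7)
The plan is to verify that the abstract invariance machinery developed in Section \ref{sect:multivalued-theory}, specifically Theorem \ref{thm:invariance_abstract}, applies in our setting, so that the conclusion $S(t)\mathcal{A} = \mathcal{A}$ follows at once. Recall that $\mathcal{A} \subset Y$ by Theorem \ref{thm:existence}, and that $Y$ carries the metric inherited from $Z = V\times W_3\times W_1$.

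First I would observe that the hypotheses of Theorem \ref{thm:attr} have already been verified in Lemmas \ref{lem:multi}, \ref{lemma_absorbing_attr}, and \ref{lem:comp}, yielding the global attractor $\mathcal{A}$. Hence the only additional ingredient needed to invoke Theorem \ref{thm:invariance_abstract} is the statement that, for every $t\geq 0$, the restriction $S(t)|_Y$ is a single-valued semigroup of $(Y,Y)$-continuous maps. Single-valuedness and the semigroup property on $Y$ are precisely the content of Lemma \ref{lem:strong_on_attractor}: every point of $Y$ is the initial condition of a unique strong solution, which is also the unique weak solution (by weak-strong uniqueness from Lemma \ref{lemma:weak-strong}), and this strong solution remains in $Y$ for all $t\geq 0$.

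Next I would observe that $(Y,Y)$-continuity of $S(t)|_Y$ is immediate from Lemma \ref{lem:continuity}. Indeed, if $(u_0^k,\gamma_0^k,\theta_0^k) \to (u_0,\gamma_0,\theta_0)$ in $Y$ (i.e., in the $Z$ norm), then by the uniform boundedness of $Y$ in $Z$, both the limiting strong solution and the approximating strong solutions satisfy the $Z$-bounds \eqref{eq:est_C_1}--\eqref{eq:est_C_2} on any time interval $[0,T]$, so Lemma \ref{lem:continuity} yields Lipschitz continuous dependence in the $Z$ norm on $[0,T]$, hence in particular $(Y,Y)$-continuity of $S(t)|_Y$ for every fixed $t\geq 0$.

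With these verifications in place, Theorem \ref{thm:invariance_abstract} applies verbatim and gives $S(t)\mathcal{A} = \mathcal{A}$ for every $t\geq 0$. The fact that $S(t)$ is single-valued on $\mathcal{A}$ is inherited from the same property on $Y \supset \mathcal{A}$ (Lemma \ref{lem:strong_on_attractor}). I do not anticipate a substantive obstacle here; the only point to be careful about is that $Y$ is equipped with the $Z$ topology rather than some weaker one, so that the continuity statement of Lemma \ref{lem:continuity} indeed matches the continuity hypothesis required by Theorem \ref{thm:invariance_abstract}, which it does by construction of $Y$.
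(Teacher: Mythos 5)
Your proposal is correct and follows essentially the same route as the paper: single-valuedness and the semigroup property on $Y$ from Lemma \ref{lem:strong_on_attractor}, $(Y,Y)$-continuity from Lemma \ref{lem:continuity} via the boundedness of $Y$ in $Z$, and then Theorem \ref{thm:invariance_abstract} to obtain invariance.
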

\begin{proof}
	As $\mathcal{A}$ is a compact set contained in $Y$, and by Lemma \ref{lem:strong_on_attractor}, $S(t)$ is single-valued on $Y$, it also must be that $S(t)$ is single valued on $\mathcal{A}$. The fact that on $Y$ the multivalued maps $S(t)$ are actually governed by the strong solutions and have valued in $Y$ implies that $S(t)|_Y$ is a semigroup.  Since $Y$ is a bounded set in $Z$, Lemma \ref{lem:continuity} implies that $S(t)|_Y$ are $(Y,Y)$ continuous maps. Hence,  Theorem \ref{thm:invariance_abstract} implies that $\mathcal{A}$ is an invariant set. The proof is complete.   
\end{proof}

\begin{remark}
	We have only proved that the attractor $\mathcal{A}$ is a compact set in the topology of $Z = V\times W_3\times W_1$. We note that it is possible to continue the bootstrapping argument, which would lead us to further regularity of the attractor $\mathcal{A}$ in higher order Sobolev spaces. It is also possible, proceeding in a now well established way to get its finite dimensionality. This would give an example of a problem without known solution uniqueness, which has the attractor of finite dimensionality. The solutions are unique on the attractor, but, in contrast to examples of \cite{KLS2}, the semigroup does not instantaneously enter the regime where the solution has to stay unique. Rather than that, such regime is entered after some time given uniformly with respect to bounded sets of initial data. 
\end{remark}

%\section{Existence of a global attractor for $\epsilon = 0$.}
%%%%%%%%%%%%%%%%%%%%%%%%%%%%%%%%%%%%%%%%%%%%%%%%%%%%%%%%%%%%%%%%%%%%
%%%%%%%%%%%%%%%%%%%%%%%%%%%%%%%%%%%%%%%%%%%%%%%%%%%%%%%%%%%%%%%%%%%%
\section{Upper semicontinuous convergence of attractors.} \label{sect:upper}
The main aim of this section is the proof of the following convergence
\begin{equation}\label{eq:hausdorff}
\lim_{K\to 0_+}\mathrm{dist}_X(\mathcal{A}^K,\mathcal{A}^0) = 0.
\end{equation}
In this section we will assume that the parameters ${\rm Pr}$, ${\rm Ra}$, $M, L, G, A$ are fixed and satisfy (H), and the parameter $K$ varies
in the interval
$$
K \in \left[ 0, L\frac{3\pi^2}{16 }\right],
$$
such that the assumption (H) always holds. By $C$ we will denote a generic constant independent of $K$, but possibly dependent on parameters ${\rm Pr}$, ${\rm Ra}$, $M, L, G, A$. We also denote $L\frac{3\pi^2}{16 } = K_{max}$. For every $K \in [0,K_{max}]$ we denote the corresponding global attractor, which exists by Theorem \ref{thm:existence} and is invariant by Lemma \ref{lem:invariance_2} by $\mathcal{A}^K$. The corresponding semigroup will be denoted by $\{ S(t)^K \}_{t\geq 0}$. Note that while the Banach spaces $X = H\times E_3\times E_1$ and $Z = V\times W_3\times W_1$ are independent of $K$, the metric space $Y$ is dependent of $K$. We will denote it by $Y^K$.

\subsection{Hausdorff and Kuratowski upper-semicontinuous convergence.} We start from the definitions of Hausdorff and Kuratowski upper semicontinuous convergence.

\begin{definition}
	Let $(X,\varrho)$ be a metric space and let $\{ A_K \}_{ K \in [0,K_{max}]}$ be sets in $X$. We say that the family $\{ A_K \}_{ K \in (0,K_{max}]}$ converges to $A_0$ upper-semicontinuously in Hausdorff sense if 
	$$
	\lim_{K\to 0^+}\mathrm{dist}_X(A_K,A_0) = 0.
	$$
\end{definition}
  
\begin{definition}
	Let $(X,\varrho)$ be a metric space and let $\{ A_K \}_{ K \in [0,K_{max}]}$ be sets in $X$. We say that the family $\{ A_K \}_{ K \in (0,K_{max}]}$ converges to $A_0$ upper-semicontinuously in Kuratowski sense if 
	$$
	X-\limsup_{K\to 0^+} A_K \subset A_0,
	$$
	where $X-\limsup_{K\to 0^+} A_K$ is the Kuratowski upper limit defined by
	$$
		X-\limsup_{K\to 0^+} A_K = \{ x\in X\,:\  \lim_{n\to \infty}\rho(x_n,x) = 0, x_n \in A_{K_n}, K_n\to 0\ \textrm{as}\ n\to \infty \}.
	$$
\end{definition}  
We relate the two convergences by the following well known result, cf., \cite[Proposition 4.7.16]{DMP}.
 \begin{proposition}\label{prop:upper}
 	Assume that the sets $\{ A_K \}_{ K \in [0,K_{max}]}$ are nonempty and compact and the set $\bigcup_{ K \in (0,K_{max}]} A_K$ is relatively compact.  
 	If the family $\{ A_K \}_{ K \in (0,K_{max}]}$ converges to $A_0$ upper-semicontinuously in Kuratowski sense then  	$\{ A_K \}_{ K \in (0,K_{max}]}$ converges to $A_0$ upper-semicontinuously in Hausdorff sense.
 \end{proposition}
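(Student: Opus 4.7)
The plan is to argue by contradiction. Suppose that Hausdorff upper semicontinuous convergence fails. Then there exists $\epsilon > 0$ and a sequence $K_n \to 0^+$ such that $\mathrm{dist}_X(A_{K_n}, A_0) > \epsilon$ for every $n$. By the definition of the Hausdorff semidistance, for each $n$ we can select a point $x_n \in A_{K_n}$ with $\inf_{y \in A_0}\varrho(x_n, y) > \epsilon$.

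Next, I would exploit the assumed relative compactness of $\bigcup_{K \in (0,K_{max}]} A_K$. Since $\{x_n\}$ is contained in this relatively compact set, there exists a subsequence, still denoted $x_n$, and a point $x \in X$ such that $x_n \to x$ in $X$. By construction, $x_n \in A_{K_n}$ and $K_n \to 0^+$, so the definition of the Kuratowski upper limit directly gives $x \in X\text{-}\limsup_{K \to 0^+} A_K$. Invoking the hypothesis of Kuratowski upper-semicontinuous convergence then yields $x \in A_0$.

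Finally, since $A_0$ is compact (hence closed), and $x \in A_0$, the continuity of $\varrho(\cdot, x)$ gives
\[
\inf_{y \in A_0}\varrho(x_n, y) \leq \varrho(x_n, x) \to 0 \quad \text{as } n \to \infty,
\]
which contradicts the bound $\inf_{y \in A_0}\varrho(x_n, y) > \epsilon$ obtained above. Hence Hausdorff upper semicontinuous convergence must hold.

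There is no real obstacle here; the argument is a short diagonal-extraction pattern. The only point that requires minor care is ensuring that the limit $x$ actually lies in $A_0$ rather than merely in its closure, and this is immediate from the compactness (and hence closedness) of $A_0$ assumed in the statement. The compactness of the individual sets $A_K$ for $K > 0$ is not used in this direction of the implication, but the nonemptiness of $A_{K_n}$ is needed to produce the witnesses $x_n$ in the very first step.
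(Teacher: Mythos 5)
Your proof is correct and follows essentially the same argument as the paper's: extract witnesses $x_n\in A_{K_n}$, use the relative compactness of the union to pass to a convergent subsequence, and conclude via the Kuratowski hypothesis that the limit lies in $A_0$. The only cosmetic differences are that you argue by contradiction with near-maximizers (so you only need the $A_{K}$ nonempty), whereas the paper argues directly by using compactness of each $A_K$ to pick exact maximizers of the semidistance; both are fine.
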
 
\begin{proof}
Since all $A_K$ are compact sets then for every $K$ there exist $x_K\in A_K$ such that $\mathrm{dist}_X(A_K,A_0) = \mathrm{dist}_X(x_K,A_0)$. Let $K_n\to 0$ be a sequence, we choose its any subsequence $K_\tau$. As $\bigcup_{ K \in (0,K_{max}]} A_K$ is relatively compact, there exists $x\in X$ such that $x_{K_\tau}\to x$, for another subsequence, where by the Kuratowski upper semicontinuous convergence there must hold $x\in A_0$. So,
$$
\mathrm{dist}_X(A_{K_\tau},A_0) = \mathrm{dist}_X(x_{K_\tau},A_0) \leq \rho(x_{K_\tau},x) \to 0.
$$
Hence $\mathrm{dist}_X(A_{K_n},A_0) \to 0$ for the whole sequence $K_n$ and the assertion is proved.  
\end{proof}  
  
\subsection{Result on upper semicontinuous convergence of attractors.} We pass to the proof of \eqref{eq:hausdorff}. First note that every set $\mathcal{A}^K$ for $K\in [0,K_{max}]$, as a global attractor, is compact in $X$.  
Now note, that the bounds in Lemmas \ref{lem:enstrophy_2} and \ref{le:grad-temp-est-H1} are independent of $K$. This means that the absorbing set $B_1\in \mathcal{B}(Z)$ given by Lemma \ref{lem:diss} can be chosen independent of $K$. As $\mathcal{A}^K \subset Y^K \subset B_1$, we deduce that 
\begin{equation}\label{eq:bound}
\bigcup_{K\in [0,K_{max}]} \mathcal{A}_K \subset B_1.
\end{equation}
As $B_1 \in \mathcal{B}(Z)$, and hence this set  is relatively compact in $X$, it follows that $\bigcup_{K \in (0,K_{max}]} \mathcal{A}^K$ is also relatively compact in $X$. So, by Lemma \ref{prop:upper} it suffices to prove the Kuratowski upper-semicontinuous convergence. We will prove the following result.
\begin{theorem}
The global attractors $\mathcal{A}^K$ converge to $\mathcal{A}^0$ upper semicontinuously in Kuratowski sense, and, in consequence \eqref{eq:hausdorff} holds. 
\end{theorem}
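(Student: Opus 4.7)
By Proposition \ref{prop:upper} and the uniform inclusion \eqref{eq:bound}, combined with the compact embeddings $V\subset\subset H$, $W_3\subset\subset E_3$, $W_1\subset\subset E_1$, the family $\bigcup_{K\in(0,K_{max}]}\mathcal{A}^K$ is relatively compact in $X$; hence the Hausdorff assertion reduces to Kuratowski upper semicontinuity. I therefore fix $K_n\downarrow 0$ and $x_n=(u_n,\gamma_n,\theta_n)\in\mathcal{A}^{K_n}$ with $x_n\to x$ in $X$ and aim to prove $x\in\mathcal{A}^0$. My strategy is to construct a complete, $X$-bounded eternal trajectory $\Phi$ of the Newtonian semigroup passing through $x$, and then invoke Theorem \ref{thm:complete_2} with $K=0$ to place it inside $\mathcal{A}^0$.

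First, I lift each $x_n$ to an eternal trajectory. Invariance of $\mathcal{A}^{K_n}$ (Lemma \ref{lem:invariance_2}) and Theorem \ref{thm:complete1} applied to $\{S^{K_n}(t)\}$ yield eternal $Y^{K_n}$-bounded trajectories $\Phi_n\colon\mathbb{R}\to\mathcal{A}^{K_n}$ with $\Phi_n(0)=x_n$; by Lemma \ref{lem:strong_on_attractor}, each $\Phi_n$ is in fact a strong solution on every half-line. Because $\mathcal{A}^{K_n}\subset B_1$ with $B_1\in\mathcal{B}(Z)$ independent of $K\in[0,K_{max}]$, and because the estimates underlying Lemmas \ref{lem:enstrophy}--\ref{le:grad-temp-est-H1} have constants depending only on $L,M,G,{\rm Pr},{\rm Ra},A$, I obtain, on every $[-T,T]$ and uniformly in $n$,
\begin{equation*}
\|\Phi_n\|_{L^\infty(-T,T;Z)}+\|\Phi_n\|_{L^2(-T,T;\,D(-P\Delta)\times D_3(-\Delta)\times D_1(-\Delta))}+\|(\Phi_n)_t\|_{L^2(-T,T;X)}\leq C(T).
\end{equation*}

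Next, the Aubin--Lions lemma together with a diagonal extraction over $T\in\mathbb{N}$ furnishes a subsequence along which $\Phi_n\to\Phi$ weakly in $L^2_{loc}(\mathbb{R};D(-P\Delta)\times D_3(-\Delta)\times D_1(-\Delta))$, strongly in $L^2_{loc}(\mathbb{R};(V\cap W^{1,3}(\Omega)^3)\times W_3\times W_1)$, with $\Phi_n(t)\to\Phi(t)$ in $X$ for every $t\in\mathbb{R}$ and $(\Phi_n)_t\to\Phi_t$ weakly in $L^2_{loc}(\mathbb{R};X)$. I then pass to the limit in the strong formulation \eqref{eq:strong_1}--\eqref{eq:strong_3} written for $\Phi_n$: the convective nonlinearities are handled exactly as at the end of the proof of Lemma \ref{lem:strong} using the compact embedding $D(-P\Delta)\subset\subset W^{1,3}(\Omega)^3$; the linear principal parts pass by weak convergence; and every term carrying the prefactor $K_n$, namely $2K_n\,\mathrm{rot}\,\gamma_n$, $2K_n\,\mathrm{rot}\,u_n$, $4K_n\gamma_n$ and the $K_n(-P\Delta u_n)$ excess in the viscous operator, vanishes since $K_n\to 0$ while the accompanying factors remain bounded in the relevant spaces. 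Thus $\Phi$ is a strong solution of the $K=0$ system on $\mathbb{R}$ with $\Phi(0)=x$.

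Finally, weak--strong uniqueness (Lemma \ref{lemma:weak-strong} with $K=0$) implies $\Phi(s)=S^0(s-t)\Phi(t)$ for all $s\geq t$, so $\Phi$ is a complete $X$-bounded trajectory of the Newtonian semigroup, and Theorem \ref{thm:complete_2} forces $\Phi(t)\in\mathcal{A}^0$ for every $t$; in particular $x=\Phi(0)\in\mathcal{A}^0$. The delicate point I anticipate is the $K$-uniformity of the enstrophy/regularity bounds of the second paragraph: in the argument of Lemma \ref{lem:enstrophy} one must track that the absorbing radius $R$ and the coefficients of the cubic differential inequality stay bounded as $K$ varies over $[0,K_{max}]$. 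This is possible because dropping $(1+K)\|P\Delta u\|_2^2$ to $\|P\Delta u\|_2^2$ is a legitimate one-sided estimate, the $K$-dependent coefficient $4K^2/L$ in front of $\|\nabla u\|_2^2$ stays bounded on $[0,K_{max}]$, and hypothesis (H) keeps the coercivity--nonlinearity balance uniform in $K$.
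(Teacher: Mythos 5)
Your proposal is correct and follows essentially the same route as the paper: reduce to Kuratowski upper semicontinuity via Proposition \ref{prop:upper} and the $K$-uniform inclusion $\mathcal{A}^K\subset B_1$, lift the points to eternal trajectories by Theorem \ref{thm:complete1}, extract a limit by uniform strong-solution estimates plus a diagonal argument, pass to the limit killing the $K_n$-terms, and conclude with Theorem \ref{thm:complete_2}. The $K$-uniformity concern you flag at the end is exactly the point the paper disposes of just before the theorem by observing that the bounds of Lemmas \ref{lem:enstrophy_2} and \ref{le:grad-temp-est-H1} do not depend on $K$.
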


\begin{proof}
Let $(u_0^n,\gamma_0^n,\theta_0^n) \in \mathcal{A}^{K_n}$, where $K_n\to 0$ and $(u_0^n,\gamma_0^n,\theta_0^n)\to (u_0,\gamma_0,\theta_0)$ in $X$ as $n\to \infty$. We must show that $(u_0,\gamma_0,\theta_0) \in \mathcal{A}^0$. By Theorem \ref{thm:complete1} there exists the sequence of eternal trajectories $(u^n,\gamma^n,\theta^n):\mathbb{R}\to Y^{K_n}$ such that $(u^n(s),\gamma^n(s),\theta^n(s)) \in \mathcal{A}^{K_n}$ for every $s\in \mathbb{R}$, and $(u^n(0),\gamma^n(0),\theta^n(0))  = (u_0^n,\gamma_0^n,\theta_0^n)$.  
The inclusion \eqref{eq:bound} implies that
$$
\| (u^n(s),\gamma^n(s),\theta^n(s)) \|_{Z} \leq C\quad \textrm{for every}\quad n\in \mathbb{N}\ \textrm{and}\ s\in \mathbb{R}.
$$ 
As in the proof of Lemma \ref{lem:strong} we deduce that
$$
\int_{t_1}^{t_2}\|P\Delta u^n(r)\|_2^2 + \|\Delta \gamma^n(r)\|_2^2+ \|\Delta \theta^n(r)\|_2^2\, dr\leq C(1+t_2-t_1)\quad \textrm{for every}\quad t_1<t_2,  
$$
and
$$
\int_{t_1}^{t_2}\|u_t^n(r)\|_2^2 + \|\gamma_t^n(r)\|_2^2+ \| \theta_t^n(r)\|_2^2\, dr\leq C(1+t_2-t_1)\quad \textrm{for every}\quad t_1<t_2.  
$$
By a standard diagonal argument we deduce that there exists the triple of functions
$$
(u,\gamma,\theta):\mathbb{R}\to Z,  
$$
such that, for a subsequence, still denoted by $n$, there holds
\begin{align*}
&(u^n,\gamma^n,\theta^n) \to (u,\gamma,\theta) \ \textrm{weakly * in} \ L^\infty_{loc}(\mathbb{R};Z) \ \textrm{and weakly in} \ L^2_{loc}(\mathbb{R};D(-P\Delta)\times D_3(-\Delta)\times D_1(-\Delta)),\\
&(u^n_t,\gamma^n_t,\theta^n_t) \to (u_t,\gamma_t,\theta_t) \ \textrm{weakly in} \ L^2_{loc}(\mathbb{R};X),\\
&(u^n(s),\gamma^n(s),\theta^n(s)) \to (u(s),\gamma(s),\theta(s)) \ \textrm{weakly in} \ Z \ \textrm{for every} \ s\in \mathbb{R}.
\end{align*}
Clearly $(u(s),\gamma(s),\theta(s)) = (u_0,\gamma_0,\theta_0)$. To prove that $(u_0,\gamma_0,\theta_0) \in \mathcal{A}^0$ it suffices to prove that  $(u,\gamma,\theta)$ is a complete trajectory of $\{ S^0(t) \}_{t\geq 0}$. Indeed $(u,\gamma,\theta)$ is $Z$ bounded and hence also $X$ bounded, so the assertion follows by Theorem \ref{thm:complete_2}.
As  $(u^n,\gamma^n,\theta^n)$ are strong solutions, for every $t_1\in \mathbb{R}$ and $t_2>t_1$ there holds 
\begin{align}
& \epsilon\int_{t_1}^{t_2}( u^n_t(t),v(t) ) + ((u^n(t)\cdot \nabla) u^n(t),v(t))\, dt + (1+K_n)\int_{t_1}^{t_2}(-P \Delta u^n(t), v(t))\, dt\nonumber\\
& \qquad  = 2K_n \int_{t_1}^{t_2}({\rm rot}\, \gamma^n(t),v(t))\, dt + {\rm Ra} \int_{t_1}^{t_2}(\theta^n(t),v_3(t))\, dt, \label{eq:strong_1_n}\\
& \epsilon M\int_{t_1}^{t_2}( \gamma^n_t(t),\xi(t) ) + (u^n(t)\cdot \nabla\gamma^n(t),\xi(t))\, dt + L\int_{t_1}^{t_2}(-\Delta \gamma^n(t),\xi(t))\, dt \nonumber\\
&\qquad + G\int_{t_1}^{t_2}(-\nabla {\rm div}\,\gamma^n(t), \xi(t))\, dt +
4 K_n \int_{t_1}^{t_2}(\gamma^n(t),\xi(t))\, dt = 2K_n \int_{t_1}^{t_2}({\rm rot}\, u^n(t),\xi(t))\, dt,\label{eq:strong_2_n}\\
&  \int_{t_1}^{t_2}( \theta^n_t(t), \eta(t) )\, dt  +  \int_{t_1}^{t_2}(u^n(t)\cdot \nabla \theta^n(t),\eta(t))\, dt + \int_{t_1}^{t_2}(-\Delta  \theta^n(t), \eta(t))\, dt  = \int_{t_1}^{t_2}(u_3^n(t),\eta(t))\, dt,\label{eq:strong_3_n}
\end{align}
for every triple of test functions $(v,\xi,\eta) \in L^2(t_1,t_2;X)$. 

We pass with $n$ to infinity. Passing to the limit in terms which do not contain $K_n$ is done exactly as in Lemma \ref{lem:strong}. All terms which contain $K_n$ tend to zero. Indeed, for example
\begin{align*}
&\left|K_n \int_{t_1}^{t_2}(-P \Delta u^n(t), v(t))\, dt \right| \leq K_n \|u^n\|_{L^2(t_1,t_2;D(-P\Delta))}\|v\|_{L^2(t_1,t_1;H)} \\
&\qquad \leq K_n C (\sqrt{1+t_2-t_1}) \|v\|_{L^2(t_1,t_1;H)} \to 0.
\end{align*}
After passing to the limit we deduce that
\begin{align}
& \epsilon\int_{t_1}^{t_2}( u_t(t),v(t) ) + ((u(t)\cdot \nabla) u(t),v(t))\, dt + \int_{t_1}^{t_2}(-P \Delta u(t), v(t))\, dt =  {\rm Ra} \int_{t_1}^{t_2}(\theta(t),v_3(t))\, dt, \label{eq:strong_1_n_l}\\
& \epsilon M\int_{t_1}^{t_2}( \gamma_t(t),\xi(t) ) + (u(t)\cdot \nabla\gamma(t),\xi(t))\, dt + L\int_{t_1}^{t_2}(-\Delta \gamma(t),\xi(t))\, dt \nonumber\\
&\qquad + G\int_{t_1}^{t_2}(-\nabla {\rm div}\,\gamma(t), \xi(t))\, dt  = 0,\label{eq:strong_2_n_l}\\
&  \int_{t_1}^{t_2}( \theta_t(t), \eta(t) )\, dt  +  \int_{t_1}^{t_2}(u(t)\cdot \nabla \theta(t),\eta(t))\, dt + \int_{t_1}^{t_2}(-\Delta  \theta(t), \eta(t))\, dt  = \int_{t_1}^{t_2}(u_3(t),\eta(t))\, dt,\label{eq:strong_3_n_l}
\end{align}
for every $t_1\in \mathbb{R}$, every $t_2>t_1$ and every triple of test functions $(v,\xi,\eta) \in L^2(t_1,t_2;X)$. We have proved that $(u,\gamma,\theta)$ is the eternal strong solution of the problem with $K=0$, Lemma \ref{lemma:weak-strong} implies that it is also a weak solution, and hence the trajectory of $\{ S^0(t) \}_{t\geq 0}$. The proof is complete. 
\end{proof} 

\subsection{Relation between $\mathcal{A}^0$ and the attractor for the Newtonian fluid.}
We start this section from a result on properties of $\mathcal{A}^0$. 
\begin{theorem}
	If $(u^0,\gamma^0,\theta^0)$ belong to $\mathcal{A}^0$ then $\gamma^0 = 0$. 
\end{theorem}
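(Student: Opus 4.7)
The plan is to exploit the fact that when $K=0$, the microrotation equation \eqref{eq:3f_kz} decouples from the rest of the system and becomes a homogeneous dissipative equation for $\gamma$. First I would invoke Theorem~\ref{thm:complete1} to produce, through a given point $(u^0,\gamma^0,\theta^0)\in\mathcal{A}^0$, a complete $Y^0$-bounded eternal trajectory $(u(t),\gamma(t),\theta(t))$ lying entirely in $\mathcal{A}^0$ with $(u(0),\gamma(0),\theta(0))=(u^0,\gamma^0,\theta^0)$. By Lemma~\ref{lem:strong_on_attractor}, this trajectory is in fact a strong solution on every finite interval, so $\gamma$ has enough regularity for the computation below.

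Next I would test equation \eqref{eq:3f_kz} with $\gamma(t)$ itself. The convective term vanishes since $\mathrm{div}\,u=0$ and $\gamma=0$ on the top and bottom of $\Omega$ (with periodicity on the lateral boundary), giving $(u(t)\cdot\nabla\gamma(t),\gamma(t))=0$. Integration by parts yields
\begin{equation*}
\frac{\epsilon M}{2}\frac{d}{dt}\|\gamma(t)\|_2^2 + L\|\nabla\gamma(t)\|_2^2 + G\|\mathrm{div}\,\gamma(t)\|_2^2 = 0.
\end{equation*}
Dropping the nonnegative $G$-term and applying the Poincar\'{e} inequality $\pi^2\|\gamma\|_2^2\leq\|\nabla\gamma\|_2^2$ leads to
\begin{equation*}
\frac{d}{dt}\|\gamma(t)\|_2^2 + \frac{2L\pi^2}{\epsilon M}\|\gamma(t)\|_2^2 \leq 0,
\end{equation*}
and the Gronwall lemma gives, for every $s<t$,
\begin{equation*}
\|\gamma(t)\|_2^2 \leq \|\gamma(s)\|_2^2\, e^{-\frac{2L\pi^2}{\epsilon M}(t-s)}.
\end{equation*}

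Finally, since the trajectory is $Z$-bounded on all of $\mathbb{R}$, the quantity $\|\gamma(s)\|_2$ stays uniformly bounded as $s\to -\infty$, so fixing $t$ and letting $s\to-\infty$ forces $\|\gamma(t)\|_2=0$. In particular $\gamma^0=\gamma(0)=0$. I do not expect a real obstacle here: the only non-routine ingredient is that we actually have an eternal trajectory through $(u^0,\gamma^0,\theta^0)$, which Theorem~\ref{thm:complete1} supplies, and that the trajectory is regular enough to justify the energy identity, which follows from Lemma~\ref{lem:strong_on_attractor}.
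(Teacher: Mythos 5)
Your proof is correct and follows essentially the same route as the paper: take an eternal bounded trajectory through the point, test the decoupled $K=0$ microrotation equation with $\gamma$ itself, obtain exponential decay via Poincar\'{e} and a Gronwall-type argument, and let the starting time go to $-\infty$. The only cosmetic difference is that the paper works with the integrated energy inequality and invokes \cite[Lemma 7.2]{Ball}, whereas you use the pointwise differential identity with the classical Gronwall lemma; both are justified by the strong-solution regularity on the attractor.
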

\begin{proof}
	Let $(u,\gamma,\theta)$ be the complete trajectory of the problem with $K=0$ such that $(u(0),\gamma(0),\theta(0)) = (u^0,\gamma^0,\theta^0)$. Taking $\xi(t) = \gamma(t)$ in \eqref{eq:strong_2_n_l} we deduce that
	$$
	\frac{\epsilon M}{2}(\|\gamma(t_2)\|^2-\|\gamma(t_1)\|^2) + L\int_{t_1}^{t_2}\|\nabla \gamma(t)\|_2^2 dt \leq 0.
	$$  
	for every $t_1\in \mathbb{R}$ and $t_2> t_1$. The Poincar\'{e} inequality implies that there exists a constant $C>0$ such that
	$$
	\|\gamma(t_2)\|_2^2 + C\int_{t_1}^{t_2}\|\gamma(t)\|_2^2dt \leq \|\gamma(t_1)\|_2^2, 
	$$
	for every $t_1\in \mathbb{R}$ and $t_2> t_1$. Using \cite[Lemma 7.2]{Ball} we deduce that 
	$$
	\|\gamma(t_2)\|_2^2 \leq e^{C(t_1-t_2)}\|\gamma(t_1)\|_2^2
	$$
	for every $t_1\in \mathbb{R}$ and $t_2> t_1$. In particular, for every $t_1<0$ there holds
	$$
	\|\gamma_0\|_2^2 \leq e^{C t_1}\|\gamma(t_1)\|_2^2
	$$
	As $\|\gamma(t_1)\|_2 \leq C$, we can pass with $t_1$ to $-\infty$ to deduce that $\gamma_0 = 0$. The proof is complete.
\end{proof}
We define $\Pi_{(u,\theta)}$ as the projection on variables $u$ and $\theta$. 
For Newtonian fluids governed by equations \eqref{eq:1f_kz}, \eqref{eq:2f_kz}, and \eqref{eq:4f_kz}
 we proceed  as for micropolar ones, defining the weak solutions as the limit of $u$-Galerkin approximative problems analogously to Definition \ref{def-weak-sol-evol}, and the strong solutions analogously to Definition \ref{def-strong} with the equation for $\gamma$ removed from the system. This allows us to define the appropriate $(H\times E_1 , \Pi_{(u,\theta)}Y^0)$-global attractor for multivalued eventual semiflow governed by the weak solutions of the problem for the Newtonian fluid. The semiflow is denoted by $\{ S^{NEWT}(t)\}_{t\geq 0}$, where the mappings $S^{NEWT}(t):H\times E_1 \to \mathcal{P}(H\times E_1)$ associate to the initial data the value of the weak solution at time $t$.    
In the next result we demonstrate the relation between the global attractor for the Rayleigh--B\'{e}nard problem for the Newtonian fluid and the $\Pi_{(u,\theta)}$  projection of the global attractor $\mathcal{A}^0$. 
\begin{theorem}
	Let ${\rm Pr} \geq 4\sqrt{2}c_1{\rm Ra}{\sqrt{A}}$ and let $M, L$ be any nonnegative numbers such that $M\leq 2L$. The projection $\Pi_{(u,\theta)} \mathcal{A}^0$ is the $(H\times E_1 , \Pi_{(u,\theta)}Y^0)$-global attractor for the three-dimensional Rayleigh--B\'{e}nard problem for the Newtonian fluid, which is furthermore the invariant set.
\end{theorem}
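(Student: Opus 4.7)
The plan is to push every attractor property of $\mathcal{A}^0$ through the projection $\Pi_{(u,\theta)}$ by exploiting the bijective correspondence between weak solutions of the coupled $K=0$ system and weak solutions of the Newtonian system enriched with $\gamma\equiv 0$. First note that the hypothesis ${\rm Pr}\geq 4\sqrt{2}c_1{\rm Ra}\sqrt{A}$ is exactly (H) evaluated at $D=2$, the value of $D=\max\{2,M/L\}$ produced by $M\leq 2L$, while the first half of (H) is vacuous at $K=0$; hence Theorem~\ref{thm:existence} and Lemma~\ref{lem:invariance_2} provide the $(X,Y^0)$-attractor $\mathcal{A}^0$, and the preceding theorem forces $\gamma\equiv 0$ on it, so in particular $\mathcal{A}^0\subset V\times\{0\}\times W_1$. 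Second, I would verify that a Newtonian weak solution $(u,\theta)$ obtained from the $u$-Galerkin scheme with $u_0^n\to u_0$ in $H$ and $\theta_0^n\to\theta_0$ in $E_1$ gives rise, by setting $\gamma_0^n=0$ and noting that \eqref{eq:3f_w_galer} at $K=0$ then forces $\gamma^n\equiv 0$, to a weak solution $(u,0,\theta)$ of the coupled system in the sense of Definition~\ref{def-weak-sol-evol}; conversely, any weak solution $(u,\gamma,\theta)$ of the coupled system at $K=0$ projects onto a Newtonian weak solution since the $u$- and $\theta$-equations decouple entirely from $\gamma$. This identifies $S^{\mathrm{NEWT}}(t)(u_0,\theta_0)=\Pi_{(u,\theta)}S^0(t)(u_0,0,\theta_0)$.

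Next I would verify the three axioms of Definition~\ref{def:xyattr} for $\Pi_{(u,\theta)}\mathcal{A}^0$. Compactness in $\Pi_{(u,\theta)}Y^0$ is immediate from continuity of the projection and compactness of $\mathcal{A}^0$ in $Y^0$. For the attracting property, given $B\in\mathcal{B}(H\times E_1)$ and any Newtonian weak solution starting in $B$, the lifted trajectory $(u(t),0,\theta(t))$ starts in $B\times\{0\}\in\mathcal{B}(X)$, so by the attractor property of $\mathcal{A}^0$ we get $\mathrm{dist}_{Y^0}((u(t),0,\theta(t)),\mathcal{A}^0)\to 0$. Since the $Y^0$-distance dominates the sum of the $V$-distance in the first slot and the $W_1$-distance in the third slot, this yields $\mathrm{dist}_{\Pi_{(u,\theta)}Y^0}(S^{\mathrm{NEWT}}(t)B,\Pi_{(u,\theta)}\mathcal{A}^0)\to 0$.

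The step I expect to require the most care is minimality. Assume $\overline{\mathcal{A}}\subset\Pi_{(u,\theta)}Y^0$ is closed and attracts every $B\in\mathcal{B}(H\times E_1)$. I would set $\widehat{\mathcal{A}}=\overline{\mathcal{A}}\times\{0\}$, which is closed in $Y^0$, and show it attracts every $B\in\mathcal{B}(X)$ in the $Y^0$-norm. For an arbitrary weak solution $(u,\gamma,\theta)$ of the coupled system with data in $B$, the projection $(u,\theta)$ is a Newtonian weak solution with data in $\Pi_{(u,\theta)}B\in\mathcal{B}(H\times E_1)$, hence $(u(t),\theta(t))\to\overline{\mathcal{A}}$ in $V\times W_1$. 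Simultaneously the attraction of $\mathcal{A}^0$ in $Y^0$, combined with $\gamma\equiv 0$ on $\mathcal{A}^0$, upgrades the convergence $\gamma(t)\to 0$ to the $W_3$-norm, which is the precise strength needed. This is the technical heart of the argument: without the two-space nature of the ambient attractor the $W_3$-strength of $\gamma(t)\to 0$ would have to be established by separate energy estimates. Combining the two pieces gives $\mathrm{dist}_{Y^0}((u(t),\gamma(t),\theta(t)),\widehat{\mathcal{A}})\to 0$, and the minimality axiom for $\mathcal{A}^0$ yields $\mathcal{A}^0\subset\widehat{\mathcal{A}}$, whence $\Pi_{(u,\theta)}\mathcal{A}^0\subset\overline{\mathcal{A}}$.

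Invariance of $\Pi_{(u,\theta)}\mathcal{A}^0$ under $\{S^{\mathrm{NEWT}}(t)\}_{t\geq 0}$ is then mechanical: Lemma~\ref{lem:invariance_2} gives $S^0(t)\mathcal{A}^0=\mathcal{A}^0$, and since $\mathcal{A}^0\subset V\times\{0\}\times W_1$ the identity $S^{\mathrm{NEWT}}(t)\Pi_{(u,\theta)}\mathcal{A}^0=\Pi_{(u,\theta)}S^0(t)\mathcal{A}^0=\Pi_{(u,\theta)}\mathcal{A}^0$ follows directly from the correspondence of the first paragraph.
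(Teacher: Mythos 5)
Your proof is correct. The compactness, attraction, and invariance steps match the paper's proof essentially verbatim: lift Newtonian data to $(u,0,\theta)$, observe that \eqref{eq:3f_w_galer} at $K=0$ keeps $\gamma\equiv 0$, and use $S^{\mathrm{NEWT}}(t)(u_0,\theta_0)=\Pi_{(u,\theta)}S^0(t)(u_0,0,\theta_0)$ to transfer the relevant properties through the projection.

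The genuine divergence is in the minimality step. You lift the candidate attracting set $\overline{\mathcal{A}}$ to $\widehat{\mathcal{A}}=\overline{\mathcal{A}}\times\{0\}$, show it attracts all $\mathcal{B}(X)$ for the coupled semiflow -- using that $\gamma(t)\to 0$ in $W_3$, which you correctly extract from the $Y^0$-attraction of $\mathcal{A}^0$ together with the fact that $\gamma\equiv 0$ on $\mathcal{A}^0$ -- and then invoke minimality of $\mathcal{A}^0$. The paper instead establishes invariance of $\Pi_{(u,\theta)}\mathcal{A}^0$ first and then applies the attraction of the candidate set $C$ to the bounded, invariant set $\Pi_{(u,\theta)}\mathcal{A}^0$ itself: since $S^{\mathrm{NEWT}}(t)\Pi_{(u,\theta)}\mathcal{A}^0=\Pi_{(u,\theta)}\mathcal{A}^0$, the distance $\mathrm{dist}(S^{\mathrm{NEWT}}(t)\Pi_{(u,\theta)}\mathcal{A}^0,C)=\mathrm{dist}(\Pi_{(u,\theta)}\mathcal{A}^0,C)$ is $t$-independent, hence zero. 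The paper's route is shorter and avoids any discussion of $\gamma$-decay; yours is slightly longer but more modular, since it does not rely on invariance to prove minimality and would survive in a setting where the attractor is minimal but not known to be invariant.
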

\begin{proof}
As $\mathcal{A}^0$ is nonempty and compact in $Y^0$ (and $Z = V\times W_3 \times W_1$), it is clear that $\Pi_{(u,\theta)} \mathcal{A}^0$ is nonempty and compact in $\Pi_{(u,\theta)}Y^0$. Choose any bounded set $B\in \mathcal{B}(H\times E_1)$. Denote the extension of $B$ to $H$ by 
$$
\mathcal{L}B = \{  (u,0,\theta)\,:\ (u,\theta) \in B \}.
$$ 
This is a bounded set in $H$. Energy inequality \eqref{ine-omega:4f_w} implies that
$$
S^0(t)(u,0,\theta) = \{ (u(t),0,\theta(t))\,:\ (u(t),\theta(t)) \in  S^{NEWT}(t)(u,\theta) \}.
$$ 
Hence 
$$
\mathrm{dist}_{\Pi_{(u,\theta)}Y^0} (S^{NEWT}(t)B,\Pi_{(u,\theta)} \mathcal{A}^0) = \mathrm{dist}_{Y^0} (S^{0}(t)B, \mathcal{A}^0) \to 0\quad \textrm{as}\quad t\to \infty.
$$ 
Invariance of $\mathcal{A}^0$ through $\{ S^0(t) \}_{t\geq 0}$  implies the invariance of $\Pi_{(u,\theta)} \mathcal{A}^0$ through $\{ S^{NEWT}(t)\}_{t\geq 0}$. It is also immediate to see that $\Pi_{(u,\theta)} \mathcal{A}^0$ is the smallest closed attracting set. Indeed, if $C$ is closed and attracting, then it must hold that
$$
0 = \lim_{t\to \infty}\mathrm{dist}_{\Pi_{(u,\theta)}Y^0} (S^{NEWT}(t)\Pi_{(u,\theta)} \mathcal{A}^0,C) = \lim_{t\to \infty}\mathrm{dist}_{\Pi_{(u,\theta)}Y^0} (\Pi_{(u,\theta)} \mathcal{A}^0,C).
$$ 
Hence 
$$
\Pi_{(u,\theta)} \mathcal{A}^0 \subset C,
$$
and the proof is complete.
\end{proof}

%%%%%%%%%%%%%%%%%%%%%%%%%%%%%%%%%%%%%%%%%%%%%%%%%%%%%%%%%%%%%%%%%%%%
%%%%%%%%%%%%%%%%%%%%%%%%%%%%%%%%%%%%%%%%%%%%%%%%%%%%%%%%%%%%%%%%%%%%

\begin{bibdiv} 
\begin{biblist}
%
%\bib{BCL2014}{article}{
%   author={Bortolan, M.C.},
%   author={Carvalho, A.N.},
%   author={Langa, J.A.},
%   title={Structure of attractors for skew product semiflows},
%   journal={Journal of Differential Equations},
%   volume={257},
%   date={2014},
%   pages={490--522},
%}

%\bib{Ahmadi76}{article}{
%	author={Ahmadi, G.},
%	title={Stability of a micropolar fluid layer heated from below},
%	journal={Int. J. Engng Sci},
%	volume={14},
%	date={1976},
%	pages={81--89},
%}

	\bib{Babin_Vishik}{book}{
	author={Babin, A.V.},
	author={Vishik, M.I.},
	title={Attractors of Evolution Equations},
	publisher={North Holland},
	place={Amsterdam, London, New York, Tokyo},
	date={1992},
} 

\bib{Ball}{article}{
	author={Ball, J.M.},
	title={Continuity Properties and Global Attractors of Generalized Semiflows and the Navier-Stokes Equations},
	journal={Journal of Nonlinear Science},
	volume={7},
	date={1997},
	pages={475--502},
}

%\bib{LukBouk}{article}{
%	author={Boukrouche, M.},
%	author={\L{}ukaszewicz, G.},
%	title={An upper bound on the attractor dimension of a 2D turbulent shear flow in lubrication theory},
%	journal={Nonlinear Analysis: Theory, Methods \& Applications},
%	volume={59},
%	date={2004},
%	pages={1077--1089},
%	
%}

\bib{CLR2013}{book}{
	author={Carvalho, A.N.},
	author={Langa, J.A.},
	author={Robinson, J.C.},
	title={Attractors for infinite-dimensional
		non-autonomous dynamical systems},
	date={2013},
	publisher={Springer},
}

\bib{Chepyzhov_Conti_Pata}{article}{
author={Chepyzhov, V.V.},
author={Conti, M.},
author={Pata, V.},
title={A minimal approach to the theory of global attractors},
journal={Discrete and Continuous Dynamical Systems},
volume={32},
date={2012},
pages={2079--2088},
} 

%\bib{Otto2}{article}{
%	author={Choffrut, A.},
%	author={Nobili, V.},
%	author={Otto, F.},
%	title={ Upper bounds on Nusselt number at finite Prandtl number},
%	journal={Journal of Differential Equations},
%	volume={260},
%	date={2016},
%	pages={3860--3880},
%}
%
\bib{CD1}{article}{
	author={Constantin, P.},
   author={Doering, C.R.},
	title={Heat transfer in convective turbulence},
	journal={Nonlinearity},
	volume={9},
	date={1996},
	pages={1049--1060},
}

\bib{CD2}{article}{
	author={Constantin, P.},
   author={Doering, C.R.},
	title={Variational bounds on energy dissipation in incompressible flows. III. Convection},
	journal={Physical Review E},
	volume={53},
	date={1996},
	pages={5957--5981},
}
%
%\bib{CFT}{article}{
%   author={Constantin, P.},
%   author={Foias, C.},
%   author={Temam, R.},
%   title={On the dimension of the attractors in two-dimensional turbulence},
%   journal={Physica D: Nonlinear Phenomena},
%   volume={30},
%   date={1988},
%   pages={284--296},
%}

%\bib{Coti_Kalita}{article}{
%	author={Coti Zelati, M.},
%	author={Kalita, P.},
%	title={Minimality Properties of Set-Valued Processes and their Pullback Attractors},
%	journal={SIAM Journal on Mathematical Analysis},
%	volume={47},
%	date={2015},
%	pages={1530--1561},
%	}
%
%
%\bib{Datta76}{article}{
%	author={Datta, A.B.},
%	author={Sastry, V.U.K.},
%	title={Thermal instability of a horizotal layer of micropolar fluid heated from below},
%	journal={Int. J. Engng Sci},
%	volume={14},
%	date={1976},
%	pages={631--637},
%}

\bib{DMP}{book}{
	author={Denkowski, Z.},
	author={Mig\'{o}rski, S.},
	author={Papageorgiou, N.S.},
	title={An Introduction to Nonlinear Analysis: Theory},
	date={2003},
	publisher={Kluwer Academic Publishers},
}

\bib{DG}{book}{
	author={Doering, C.R.},
	author={Gibbon, J.D.},
	title={Applied Analysis of the Navier-Stokes equations},
	date={1995},
	publisher={Cambridge Texts in Applied Mathematics},
}

%\bib{DoeringWang}{article}{
%	author={Doering, C.R.},
%	author={Wang, X.},
%	title={Attractor dimension estimates for two-dimensional shear flows},
%	journal={Physica D: Nonlinear Phenomena},
%	volume={123},
%	date={1998},
%	pages={206--222},
%}

\bib{Eringen66}{article}{
	author={Eringen, A.C.},
	title={Theory of micropolar fluids},
	journal={J. Math. Mech.},
	volume={16},
	date={1966},
	pages={1--16},
}

%\bib{FJT}{article}{
%	author={Farhat, A.},
%	author={Jolly, M.S.},
%	author={Titi, E.S.},
%	title={Continuous data assimilation for the 2D B\'{e}nard convection through velocity measurements alone},
%	journal={Physica D},
%	volume={303},
%	date={2015},
%	pages={59--66},
%}
%
%\bib{FLT}{article}{
%	author={Farhat, A.},
%	author={Lunassin, E.},
%	author={Titi, E.S.},
%	title={Continuous data assimilation algorithm for a 2D B\'{e}nard convection through horizontal velocity measurements alone},
%	journal={Journal of Nonlinear Science},
%	DOI={10.1007/s00332-017-9360-y},
%	date={2015},
%	pages={59--66},
%}

\bib{Foias_MRT-NSE_Turb}{book}{
   author={Foias, C.},
   author={Manley, O.},
   author={Rosa, R,},
   author={Temam, R.},
   title={Navier--Stokes Equations and Turbulence}, 
   date={2001},
   publisher={CUP},
   
}   

%\bib{FMT87}{article}{
%   author={Foias, C.},
%   author={Manley, O.},
%   author={Temam, R.},
%   title={Attractors for the B\'{e}nard problem: existence and physical bounds on their fractal dimensions},
%   journal={Nonlinear Analysis: Theory, Methods and Applications},
%   volume={11},
%   date={1987},
%   pages={939--967},
%}
%
%
%
%
%
%
%\bib{Ghidaglia}{article}{
%	author={Ghidaglia, J.-M.},
%	author={Marion, M.},
%	author={Temam, R.},
%	title={Generalization of the Sobolev-Lieb-Thirring inequalities and applications to the dimension of attractors},
%	journal={Differential Integral Equations},
%	volume={1},
%	date={1988},
%	pages={1--21},
%}
%    
%\bib{Hal88}{book}{
%	author={Hale, J.},
%	title={Asymptotic behavior of dissipative systems},
%	date={1899},
%	publisher={American Mathematical Society},
%}
%    
%
%    
%%\bib{Henry}{book}{
%%	author={Henry, D.},
%%	title={Geometric Theory	of Semilinear Parabolic Equations},
%%	date={1981},
%%	publisher={Springer},
%%}
%
%\bib{Ilyin}{article}{
%	author={Ilyin, A.A.},
%	title={Lieb–Thirring integral inequalities and their applications
%		to attractors of the Navier–Stokes equations},
%	journal={Sbornik:  Mathematics},
%	volume={196},
%	date={2005},
%	pages={33--66},
%}

\bib{KaLaLu_2017_MMN}{article}{
    author={Kalita, P.},
    author={Langa, J.},
	author={\L{}ukaszewicz, G.},
	title={Micropolar meets Newtonian. The Rayleigh--B\'{e}nard problem},
	journal={Physica D: Nonlinear Phenomena},
	doi={10.1016/j.physd.2018.12.004},
	date={2019}
}

%\bib{Siemianowski_termo}{article}{
%	author={Kalita, P.},
%	author={\L{}ukaszewicz, G.},
%	author={Siemianowski, J.},
%	title={Rayleigh--B\'{e}nard problem for thermomicropolar fluid},
%	journal={Topological Methods in Nonlinear Analysis},
%	date={2018},
%	volume={52},
%	pages={477--514},
%}

\bib{KLS2}{article}{
	author={Kalita, P.},
	author={\L{}ukaszewicz, G.},
	author={Siemianowski, J.},
	title={On relation between attractors for single and multivalued semiflows for a certain class of PDEs},
	journal={Discrete and Continuous Dynamical Systems B},
	date={2019},
	volume={24},
	pages={1199--1227},
}

%\bib{Kato}{book}{
%	author={Kato, D.},
%	title={Perturbation Theory for Linear Operators},
%	date={1995},
%	publisher={Springer},
%}

%\bib{Lebon81}{article}{
%	author={Lebon, G.},
%	author={Perez-Garcia, C.},
%	title={Stability of a micropolar fluid layer heated from below},
%	journal={Int. J. Engng Sci},
%	volume={19},
%	date={1981},
%	pages={1321--1329},
%}

\bib{L1999}{book}{
   author={\L{}ukaszewicz, G.},
   title={Micropolar Fluids - Theory and Applications},
   date={1999},
   publisher={Birkh\"{a}user Basel},
}

\bib{L2001}{article}{
   author={\L{}ukaszewicz, G.},
   title={Long time behavior of 2D micropolar fluid flows},
   journal={Mathemathical and Computer Modelling},
   volume={34},
   date={2001},
   pages={487--509},

}

%\bib{MW2005}{article}{
%	author={Ma, Tian},
%	author={Wang, Shouhong},
%	title={Dynamic bifurcation of nonlinear evolution equations},
%	journal={Chin. Ann. Math.},
%	volume={26B},
%	date={2005},
%	pages={185--206},
%	
%}

%\bib{MW}{article}{
%	author={Ma, Tian},
%	author={Wang, Shouhong},
%	title={Dynamic bifurcation and stability in the Rayleigh--B\'{e}nard convection},
%	journal={Commun. Math. Sci.},
%	volume={2},
%	date={2004},
%	pages={159--193},
%	
%}

%\bib{MW_book}{book}{
%	author={Ma, Tian},
%	author={Wang, Shouhong},
%	title={Bifurcation Theory and Applications},
%	date={2005},
%	publisher={World Scientific},	
%}

	\bib{Melnik_Valero}{article}{
	author={Melnik, V.S.},
	author={Valero, J.},
	title={On attractors of multivalued semiflows and differential inclusions},
	journal={Set-Valued Anal.},
	volume={6},
	date={1998},
	pages={83--111},      
}  

\bib{Melnik-2008}{article}{
	author={Melnik, V.S.},
	author={Valero, J.},
	title={Addendum to ''On Attractors of Multivalued
		Semiflows and Differential Inclusions'' [Set-Valued Anal. 6 (1998), 83-111],},
	journal={Set-Valued Anal.},
	volume={16},
	date={2008},
	pages={507--509},      
}

%\bib{Otto1}{article}{
%	author={Nobili, V.},
%	author={Otto, F.},
%	title={Limitations of the background field method applied to Rayleigh-B\'{e}nard convection},
%	journal={arXiv:1605.08135},
%}

\bib{PayneStraughan}{article}{
	author={Payne, L.E.},
	author={Straughan, B.},
	title={Critical Rayleigh numbers for oscillatory and nonlinear convection in an isotropic thermomicropolar fluid},
	journal={International Journal of Engineering Science},
	volume={27},
	date={1989},
	pages={827--836},
	
}

%\bib{Pazy}{book}{
%	author={Pazy, A.},
%	title={Semigroups of Linear Operators and Applications to Partial Differential Equations},
%	date={1983},
%	publisher={Springer},
%}

%\bib{PerezGarcia81}{article}{
%	author={Perez-Garcia, C.},
%	author={Rub\'{i}, J.M.},
%	author={Casas-V\'{a}zquez, C.},
%	title={On the Stability of Micropolar Fluids},
%	journal={J. Non-Equilib. Thermodyn.},
%	volume={6},
%	date={1981},
%	pages={65--78},
%}
%
%\bib{Raugel}{incollection}{
%	author={Raugel, G.},
%	title={Global attractors in partial differential equations},
%	booktitle={Handbook of dynamical systems, Vol. 2},
%	date={2002},
%	pages={995--982},
%	publisher={North-Holland},
%}

%\bib{RenardyRogers}{book}{
%   author={Renardy, M.},
%   author={Rogers, R.C.},
%   title={An Introduction to Partial Differential Equations},
%   date={2004},
%   publisher={Springer},
%}

\bib{Robinson}{book}{
	author={Robinson, J.C.},
	title={Infinite-Dimensional Dynamical Systems},
	publisher={Cambridge University Press},
	place={Cambridge, UK},
	date={2001},
} 

\bib{Rummler_2}{article}{
 author={Rummler, B.},
 title={The eigenfunctions of the {S}tokes operator in special domains. {II}},
 journal={Z. Angew. Math. Mech.},
 volume={17},
 date={1980},
 pages={669--675},
}

%\bib{RussoPadula}{article}{
%	author={Russo, R.},
%	author={Padula, M.},
%	title={A note on the bifurcation of micropolar fluid motion},
%	journal={Bollettino dell’Unione Matematica Italiana A},
%	volume={17},
%	date={1980},
%	pages={85--90},
%	
%}

%\bib{Schmudgen}{book}{
%	author={Schmudgen, K.},
%	title={Unbounded Self-adjoint Operators on Hilbert Space},
%	date={2012},
%	publisher={Springer},
%}

%\bib{Straughan}{book}{
%	author={Straughan, B.},
%	title={The Energy Method, Stability, and Nonlinear Convection},
%	date={2003},
%	publisher={Springer},
%}
%
\bib{Tarasinska_2006_paper}{article}{
	author={Tarasi\'{n}ska, A.},
	title={Global attractor for heat convection problem in a micropolar fluid},
	date={2006},
	journal={Math. Meth. Appl. Sci.},
	volume={29},
	pages={1215--1236},
}
%
%\bib{T2009}{thesis}{
%author={Tarasi\'{n}ska, A.},
%title={Deterministic and statistical solutions of micropolar fluid equations},
%date={2010},
%type={PhD Thesis},
%school={University of Warsaw},
%}

\bib{Temam}{book}{
	author={Temam, R.},
	title={Infinite Dimensional Dynamical Systems in Mechanics and Physics},
	date={1997},
	publisher={Springer},
}

\bib{Wang_chapter}{collection}{
	author={Wang, X.},
	title={A Note on Long Time Behavior of Solutions to the Boussinesq System at Large Prandtl Number},
	editor={Gui-Qiang Chen, George Gasper, Joseph W. Jerome},
	booktitle={Nonlinear Partial Differential Equations And Related Analysis},
	publisher   = {American Mathematical Society},
	volume={371},
	series={Contemporary Mathematics},
	address     = {Providence, Rhode Island},
	year        = {2005},
	pages       = {315--323},
}

\bib{Wang2007-Asymptotic}{article}{
	author={Wang, X.},
	title={Asymptotic Behavior of the Global Attractors to the Rayleigh--B\'{e}nard Convection at Large Prandtl Number},
	journal={Commun. Pure Appl. Math.},
	volume={LX},
	date={2007},
	pages={1293--1318},
	
}

\bib{Wang2008}{article}{
	author={Wang, Xiaoming},
	title={Bound on vertical heat transport at large Prandtl number},
	journal={Physica D},
	volume={237},
	date={2008},
	pages={854--858},
	
}

%\bib{Yagi}{book}{
%	author={Yagi, A.},
%	title={Abstract Parabolic Evolution Equations and their Applications},
%	date={2010},
%	publisher={Springer},
%}

%\bib{Zeidler2B}{book}{
%	author={Zeidler, E.},
%	title={Nonlinear Functional Analysis and Applications II/B. Nonlinear Monotone Operators},
%	date={1990},
%	publisher={Springer},
%}

\end{biblist}
\end{bibdiv}
%%%%%%%%%%%%%%%%%%%%%%%%%%%%%%%%%%%%%%%%%%%%

\end{document}